\pdfoutput=1
\documentclass[10pt,twocolumn,oneside,final]{IEEEtran}

\usepackage{cite}
\usepackage{graphicx}
\usepackage{amsmath}
\usepackage{times}
\usepackage{latexsym}
\usepackage{bm}
\usepackage{amssymb}
\usepackage[center]{caption2}
\usepackage{array}
\usepackage{fancyhdr}
\usepackage{cite,graphicx,amsmath,amssymb}
\usepackage{citesort}
\usepackage{psfrag}
\usepackage{multirow}

\ifCLASSINFOpdf

\else

\fi

\usepackage[dvipsnames,usenames]{color}
\usepackage[usenames,dvipsnames]{color}
\newcommand{\bfbl}[1]{\bf \color{black}#1}
\newcommand{\bl}[1]{\color{black}#1}


\newtheorem{theorem}{Theorem}

\newtheorem{example}{Example}

\newcommand{\tr}{\text{tr}}


\makeatother

\begin{document}
\title{Secure Massive MIMO Transmission
with an Active Eavesdropper}

\author{Yongpeng Wu, \IEEEmembership{Member, IEEE}, Robert Schober, \IEEEmembership{Fellow, IEEE},
 Derrick Wing Kwan Ng, \IEEEmembership{Member, IEEE}, \\ Chengshan Xiao, \IEEEmembership{Fellow, IEEE}, and  Giuseppe Caire,
\IEEEmembership{Fellow, IEEE}.

\thanks{Copyright (c) 2014 IEEE. Personal use of this material is permitted.
However, permission to use this material for any other purposes must be obtained from the IEEE by sending a request to pubs-permissions@ieee.org }

\thanks{This paper was presented in part at IEEE ICC 2015.}

\thanks{The work of Y. Wu and R. Schober was supported by the Alexander von Humboldt Foundation and the German Science Foundation (Grant SCHO831/5-1).
The work of C. Xiao is supported in part by US
National Science Foundation under grant ECCS-1231848. The work of G.
Caire is supported by the Alexander von Humboldt Foundation.}

\thanks{Y. Wu and R. Schober are with Institute for Digital Communications, Universit\"{a}t Erlangen-N\"{u}rnberg,
Cauerstrasse 7, D-91058 Erlangen, Germany (Email: yongpeng.wu@fau.de; robert.schober@fau.de;). }

\thanks{D. W. K. Ng is with the School of Electrical Engineering and
Telecommunications, University of New South Wales, Sydney, N.S.W.,
Australia (E-mail: w.k.ng@unsw.edu.au).}

\thanks{C. Xiao is with the Department of Electrical and Computer Engineering,
Missouri University of Science and Technology, Rolla, MO 65409, USA (Email: xiaoc@mst.edu). }

\thanks{G. Caire is with Institute for Telecommunication Systems, Technical University Berlin, Einsteinufer 25,
10587 Berlin, Germany (Email: caire@tu-berlin.de). }
}

\maketitle

\begin{abstract}
In this paper, we investigate secure and reliable transmission strategies for multi-cell multi-user massive
multiple-input multiple-output (MIMO) systems with a multi-antenna active eavesdropper.
We consider a time-division duplex system where uplink training is required and
an active eavesdropper can attack the training phase to cause
pilot contamination  at the transmitter.
This forces the precoder used in the subsequent downlink transmission phase to
implicitly beamform towards the eavesdropper, thus increasing its received signal power.
Assuming matched filter precoding and artificial noise (AN) generation
at the transmitter, we derive an asymptotic
achievable secrecy rate when the number of transmit antennas approaches infinity.
For the case of a single-antenna active eavesdropper,
we obtain a closed-form expression for
the optimal power allocation policy
for the transmit signal and the AN, and find
the minimum transmit power required
to ensure reliable secure communication.
Furthermore, we show that the transmit antenna correlation diversity of the intended users and the eavesdropper can
 be exploited in order to improve the secrecy rate. In fact, under
 certain orthogonality conditions of the channel covariance matrices,
 the secrecy rate loss introduced
 by the eavesdropper can be completely mitigated.

\end{abstract}

\section{Introduction}
The emergence of smart mobile devices such as smart phones and  wireless
modems has led to an exponential increase in the  demand for wireless data services.
A recent and promising solution to  meet this demand is  massive multiple-input multiple-output (MIMO) technology,
which utilizes {{a very large number of antennas} and simple signal processing at the base station (BS) to serve a comparatively
small (with respective to the number of antennas) number of users. The field of massive MIMO communication
systems was initiated by the pioneering work in \cite{Marzetta2008TWC}
which considered multi-cell multi-user time-division duplex (TDD) communication.
The key idea in \cite{Marzetta2008TWC} is that as the number of transmit antennas
increases,  the effects of uncorrelated receiver noise
and fast fading vanish due to the law of large numbers.  Then, the only residual interference is
caused by the reuse of the same pilot sequences in adjacent cells.
This effect is known as pilot contamination.
Since the publication of \cite{Marzetta2008TWC},
a considerable amount of research has been dedicated to studying  various aspects
of massive MIMO systems \cite{Jose2011TWC,Yin2013JSAC,Adhikary2013TIT,Larsson2014CM,Yin2014JSTSP,Nam2014JSTSP,Adhikary2014JSAC,Sun2015TCOM,Meng2016TCOM}.
In particular, A. Adhikary \textit{et al.} and C. Sun \textit{et al.} design transmission schemes
 to serve different users in orthogonal spatial
resources by exploiting the unique property of the massive MIMO channels in \cite{Adhikary2013TIT} and \cite{Sun2015TCOM}, respectively,
which inspire a null space design for secure massive MIMO transmission in this paper.
Also, massive MIMO has been investigated for various types of systems, including:
Single-cell multi-user uplink/downlink systems \cite{Ngo2013TCOM,Wu2015TWCOM,Lu2016TCOM},
multi-cell multi-user uplink/dowlink systems \cite{Huh2011TIT,Jin2016TVT}, orthogonal
frequency-division multiple access systems \cite{Ng2012TCOM,Wu2014JSTSP,You2016TSP}, non-orthogonal multiple access  systems \cite{Dai2015CM},
and systems employing constant-envelope signals \cite{Mohammed2013TCOM}.

The broadcast nature of the wireless channel makes
it inherently prone to security breaches such as eavesdropping and jamming, which
jeopardizes the privacy of communication in wireless networks.
In order to maintain the required level of privacy, appropriate
signal and information processing techniques have to be employed to ensure reliable and secure
communication.  Traditional approaches to secure communication for preventing unauthorized reception by
eavesdroppers rely on cryptographic encryption implemented in the application layer. These methods
may entail a relatively high complexity due to the required
key distribution and service management\cite{Schneier1998Com}. As a complement
to cryptographic methods, physical layer security, which considers communication
security from an information-theoretic perspective, has attracted significant
research interest recently.  In  Wyner's pioneering work on information-theoretic security,
a ``wiretap channel'' model was defined along with the associated secrecy capacity \cite{Wyner1975BST}.
Wyner's work indicates that the transmitter can reliably send a private message to the receiver,
which cannot be decoded by the eavesdropper,  if the channel of the eavesdropper
is a degraded version of the channel of the desired receiver.
Wyner's result was extended to more general non-degraded channels in \cite{Csiszar1978TIT}.
More recent studies have investigated the capacity and  precoder design for multi-antenna wiretap channels \cite{Khisti2010TIT,Khisti2010TIT_2,Oggier2011TIT,Wu2012TVT,Ng2014TWC,Ng2015TWC,Chen2015CM,Zhu2015EURASIP,Zhang2016TIFS}.
In particular, if only imperfect channel state information (CSI) of the eavesdropper
is available at the transmitter, it is advantageous to transmit
artificial noise (AN)  along with the information-carrying
signal to interfere the decoding process at the eavesdropper \cite{Goel2008TWC,Zhou2010TVT,Huang2012TSP,Wang2015TWC}.
In \cite{Wang2015TSP}, the authors propose a thorough analysis and optimization framework
for artificial noise assisted secure transmission in a MIMO wiretap channel.

Physical layer security for massive MIMO systems with
passive eavesdroppers has been recently studied.
The authors in \cite{Chen2015TWC}
first applied the technique of large-scale antenna array into physical
layer security, and investigate the limiting performance while the number of antennas approached infinity.
Secure massive MIMO transmissions for multi-cell multi-user systems
with imperfect CSI have been investigated in \cite{Zhu2014,Zhu2016TWC},
where a passive eavesdropper attempts to decode the information sent to one of the users.
In \cite{Wang2016TCOM}, a comprehensive performance analysis of AN aided multi-antenna secure transmission in multi-cell
multi-user systems under a stochastic geometry framework is provided.
In \cite{Chen2015TWC,Zhu2014,Zhu2016TWC,Wang2016TCOM}, it was assumed that the channel gains of both the desired receiver and the eavesdropper are
independent and identically distributed (i.i.d.).

Most existing studies on physical layer security assume that perfect CSI of the legitimate channel
is available at the transmitter and do not consider the channel training phase required to acquire the CSI.
However, in TDD communication systems,
the BS needs to estimate the channel for
the subsequent downlink transmission based on pilot sequences sent by the users in an uplink
training phase.  Furthermore, the low rank property of massive MIMO channels has been exploited in \cite{Shen2015CL} to significantly save
the pilot training overhead and reduce the signal processing complexity at users to achieve reliable CSI at the BS.
As a result, a smart eavesdropper might
actively attack this channel training phase by sending the same pilot sequences as the users
to cause pilot contamination at the transmitter, which improves the eavesdropping capability of the eavesdropper significantly \cite{Zhou2012TWC}.

{\bl
Hence, the so-called pilot contamination attack poses a serious secrecy threat to TDD-based massive MIMO systems.
In such systems, the channel hardening due to beamforming with large antenna arrays  \cite{Jose2011TWC} makes the
exploitation of statistical fluctuations due to
fading for secrecy enhancement impossible.
Furthermore, the pilot contamination attack directs  the transmitter beamforming
to the advantage of the eavesdropper. Therefore, for a sufficiently large eavesdropper pilot power, the achievable secrecy rate may
approach zero. This goes against the conventional wisdom  \cite{Larsson2014CM} that massive MIMO inherently facilitates secure communication
because the base station can form very narrow beams focusing on the target users and therefore avoiding  spill over of the
signal power in other directions.}

A single cell massive MIMO system with an active eavesdropper was investigated for i.i.d. fading channels \cite{Im2013,Kapetanovic2013,Basciftci2016}.
However, systematic approaches for
combating the pilot contamination attack of a multi-antenna active eavesdropper and maintaining secrecy of
communication in correlated fading channels
were not provided in \cite{Im2013,Kapetanovic2013,Basciftci2016}
and have not been studied in the literature, yet.

In this paper, we study secure transmission over correlated fading channels in TDD multi-cell multi-user massive MIMO systems in the presence
of a multi-antenna active eavesdropper.
We assume that in the uplink training phase,
the active eavesdropper sends the same pilot sequence as the desired receiver to impair the channel estimation
at the transmitter, i.e., to cause pilot contamination at the transmitter.
Subsequently, the transmitter uses the estimated channel for calculation of
the precoder for downlink transmission. This paper makes the following key contributions:

\begin{enumerate}

 \item {\bl We introduce a pilot contamination precoder, which allows
the eavesdropper to optimize its attack. The proposed pilot contamination precoder is provided
in closed form and
 maximizes the total average estimation error variance of the desired user's channel. }

 \item  We derive a closed-form expression for the asymptotic
achievable secrecy rate for TDD multi-cell multi-user massive MIMO systems
employing matched filter precoding and AN generation (we refer to this design as MF-AN design)
at the transmitter to combat a multi-antenna active eavesdropper.
Based on the derived asymptotic expression,
which is valid if the number of transmit antennas tends to infinity,
the optimal power allocation policy
for the information signal and the AN can be found by a simple one-dimension numerical search.
{\bl Then, we show that, in the presence of an active
eavesdropper,  the secrecy rate
is not a monotonically increasing function of the signal-to-noise ratio (SNR).}
For the special case of a single-antenna eavesdropper,
we obtain the optimal power allocation policy
for the transmit signal and the AN in closed form.
In addition, we obtain
the minimum transmit signal power required
to ensure secure transmission.

\item For the case of correlated fading channels, we reveal that the impact of the active eavesdropper
vanishes when the signal space (i.e., the span of the eigenvectors of the channel correlation matrix
that correspond to non-zero eigenvalues) of the users and the eavesdropper are mutually orthogonal.
Inspired by this observation, we exploit the low rank property of the transmit correlation matrices of massive MIMO channels \cite{Yin2013JSAC,Adhikary2013TIT,Yin2014JSTSP,Nam2014JSTSP,Adhikary2014JSAC,Sun2015TCOM,Shen2015CL}
to design an efficient precoding scheme that transmits in the null space (NS) of the transmit correlation matrix of
the eavesdropper (we refer to this precoding solution as NS design).
{\bl Unlike the conventional NS design for the perfect CSI case \cite{Khisti2010TIT}, the proposed NS design
  can completely remove the impact of the pilot contamination attack by performing joint uplink and downlink processing. }
For the special case of a single-antenna eavesdropper, we derive a
threshold that can be used
to determine whether the MF-AN design or  the NS design
is preferable for given channel and eavesdropper parameters.

\item We propose a unified design which combines the  MF-AN design
and the NS design.  Numerical results indicate that the proposed unified design
can effectively mitigate the pilot contamination attack of an
active eavesdropper in massive MIMO systems.

\end{enumerate}

The remainder of this paper is organized as follows. In Section II,  we introduce the adopted multi-cell multi-user massive MIMO system model
with a multi-antenna active eavesdropper. In Section III,
we derive an expression for the
asymptotic achievable secrecy rate
for the MF-AN design when the number of transmit antennas tends to infinity.
Based on this expression, we derive transmission
strategies  to combat the pilot contamination attack from the active eavesdropper.
In Section IV, we provide  several novel insights
for the single-antenna eavesdropper case. Numerical results are presented  in Section V, and the main results are summarized in Section VI.

\emph{Notation:} Vectors  are denoted by lower-case bold-face letters;
matrices are denoted by upper-case bold-face letters. Superscripts $(\cdot)^{T}$, $(\cdot)^{*}$, and $(\cdot)^{H}$
stand for the matrix transpose, conjugate, and conjugate-transpose operations, respectively. We use  ${\tr}({\bf{A}})$ and ${\bf{A}}^{-1}$
to denote the trace operation and the
inverse of matrix $\bf{A}$, respectively.
$\|\cdot \|$ and $| \cdot |$  denote the Euclidean norm of a matrix/vector and
a scalar, respectively. ${\rm{diag}}\left\{\bf{b}\right\}$ denotes a diagonal matrix
with the elements of vector $\bf{b}$ on its main diagonal.
${\rm vec}\left({\mathbf{A}}\right)$ stacks all columns of matrix $\mathbf{A}$ into a vector.
The $M \times M$ identity matrix is denoted
by ${\bf{I}}_M$, and the all-zero $M \times N$ matrix and $N \times 1$ vector are denoted by $\bf{0}$.
The field of complex numbers is denoted
by $\mathbb{C}$ and $E\left[\cdot\right]$ denotes statistical
expectation.
We use  $\mathbf{x} \sim \mathcal{CN} \left( {\mathbf{0},{{\bf{R}}_N}} \right)$
to denote a circularly symmetric complex Gaussian vector
$\mathbf{x} \in {\mathbb{C}^{N \times 1}}$ with zero mean and covariance matrix ${\bf{R}}_N$.
Based on \cite[Definition II.1]{Shin_1}, we use $\mathbf{X} \sim \mathcal{CN} \left( {\mathbf{0},{{\bf{R}}}_N \otimes {{\bf{R}}}_M }  \right)$
to denote a circularly symmetric complex Gaussian matrix $\mathbf{X} \in {\mathbb{C}^{N \times M}}$
with zero mean and covariance matrix ${{\bf{R}}}_N \otimes {{\bf{R}}}_M$.
$\left\{\mathbf A \right\}_{ij}$ returns
the element of matrix $\mathbf{A}$ in the $i$th row and the $j$th column.
{\bl ${{\bf{1}}_{{N}}} $ denotes an ${N\times 1}$ vector with all elements
equal to $1$.
${\bf{e}}_r$ denotes the unit-vector with a one as the $r$th element and zeros for all other elements.}
${\left[ x \right]^ + }$ stands for $\max \left\{ {0,x} \right\}$,
$\otimes$ denotes the Kronecker product, and $ A \mathop \to \limits^{{N} \to \infty } B$ means that
$A$  converges almost surely to $B$ as $N$ goes to infinity.

\section{System Model} \label{sec:multi}
We consider a multi-cell multi-user system with $L + 1$ cells, cf. Figure \ref{system model}.
Each cell contains a BS equipped
with $N_t$ antennas and $K$ single-antenna users. Without loss of generality,
we denote the reference cell  by
$l = 0$.  An active eavesdropper with $N_e$ antennas
(equivalent to $N_e$ cooperative single-antenna eavesdroppers)
is located in the reference cell.  The eavesdropper seeks to recover the private message intended for a specific target user $m$.

\begin{figure*}[!ht]
\centering
\includegraphics[width=0.8\textwidth]{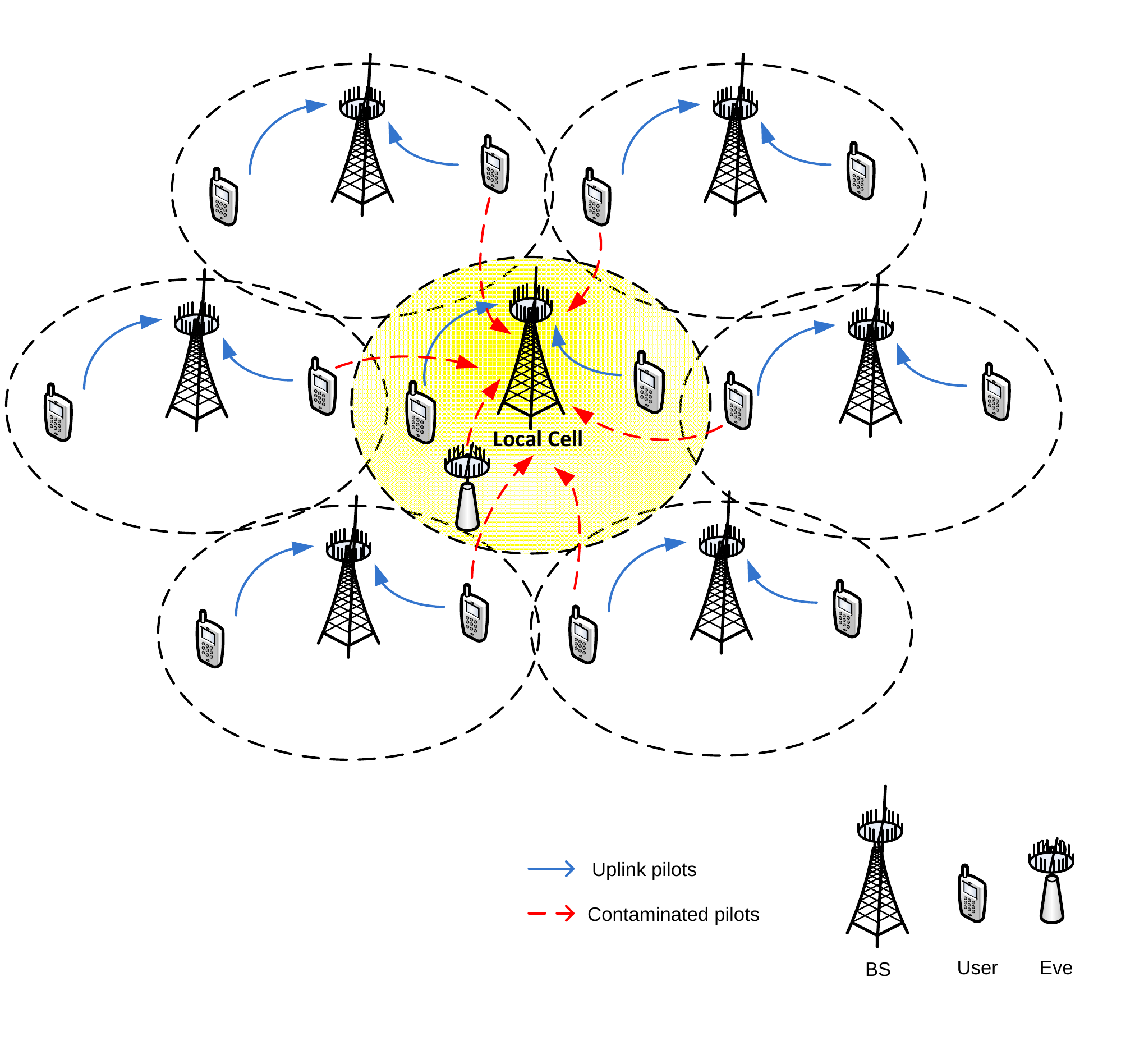}
\caption {\space\space Model of considered multi-cell massive MIMO system.}
\label{system model}
\end{figure*}

\subsection{Uplink Training and Channel Estimation}
In the uplink training and channel estimation phase, the received signal $\mathbf{Y}_0 \in \mathbb{C} {^{N_t  \times \tau}}$ at the BS
in the reference cell is given by \cite{Yin2013JSAC}
\begin{multline} \label{eq:Y_0_model}
{\bf{Y}}_0 = \sum\limits_{k = 1}^K {\sqrt {{P_{0k}}} {\bf{h}}_{0k}^0{\boldsymbol{\omega }}_{0k}^T}  + \sum\limits_{l = 1}^L {\sum\limits_{k = 1}^K {\sqrt {{P_{lk}}}
 {\bf{h}}_{lk}^0{\boldsymbol{\omega }}_{lk}^T} } \\  + \sqrt {\frac{{{P_E}}}{{{N_e}}}} {\bf{{H}}}_{E}^0 {\bfbl P}_{\bl e} {\bf{W}}_e + {\bf{N}}.
\end{multline}
Here, $P_{lk}$ and ${{\boldsymbol{\omega}}_{lk}} \in \mathbb{C} {^{\tau  \times 1}}$ are the average transmit power and the pilot sequence of the $k$th user
in the $l$th cell, where $\tau$ denotes the length of the pilot sequence.
${\bf{h}}_{lk}^p = \left( {\bf{R}}_{lk}^p  \right)^{1/2} \mathbf{g}_{lk}^p  \in \mathbb{C}{^{{N_t} \times 1}}$ denotes
the channel vector between the $k$th user in the $l$th cell and the BS in the $p$th cell,
where $\mathbf{g}_{lk}^p \sim \mathcal{CN} \left( {\mathbf{0},{{\bf{I}}_{{N_t}}}} \right)$ and ${\bf{R}}_{lk}^p  \in {\mathbb{C}^{N_t  \times N_t }}$ is the correlation matrix
of channel ${\bf{h}}_{lk}^p$.  $P_E$ denotes the average transmit power of the eavesdropper when attacking the uplink training.
${\bf{H}}_E^l =  \left({\bf{R}}_{E,T}^l\right)^{1/2}  \mathbf{G}_E^l  \left({\bf{R}}_{E,R}^l\right)^{1/2} \in \mathbb{C}{^{{N_t} \times N_e}} $
denotes the channel between the eavesdropper and the BS in the $l$th cell, where
 $\mathbf{G}_E^l \sim \mathcal{CN}\left( {{{\bf{0}}}, \mathbf{I}_{N_t} \otimes \mathbf{I}_{N_e}} \right)$.
${\bf{R}}_{E,T}^l \in \mathbb{C}{^{{N_t} \times N_t}}$ and ${\bf{R}}_{E,R}^l \in \mathbb{C}{^{{N_e} \times N_e}}$  are the transmit and receive correlation matrices
of channel ${\bf{H}}_E^l$. {\bl The active eavesdropper attacks the channel estimation process of the $m$th user
in the reference cell by sending pilot contamination sequences ${\bfbl P}_{\bl e} {\bf{W}}_e$, where ${\bfbl P}_{\bl e} \in \mathbb{C} {^{N_e  \times N_e }}$
is the pilot contamination precoder and
${\bf{W}}_e = {\left[ {{{\boldsymbol{\omega }}_{0m}},{{\boldsymbol{\omega }}_{0m}}, \cdots ,{{\boldsymbol{\omega }}_{0m}}} \right]^T} \in \mathbb{C} {^{N_e  \times \tau}}$.} $ \mathbf{N} \in {\mathbb{C}^{ N_t \times \tau}}$  is a Gaussian noise matrix
with i.i.d. elements of zero-mean and variance $N_0$.
{\bl The structure of the uplink received signal  at the BS in the local cell is illustrated  in Figure \ref{uplink}.   }

\begin{figure*}[!ht]
\centering
\includegraphics[width=0.9\textwidth]{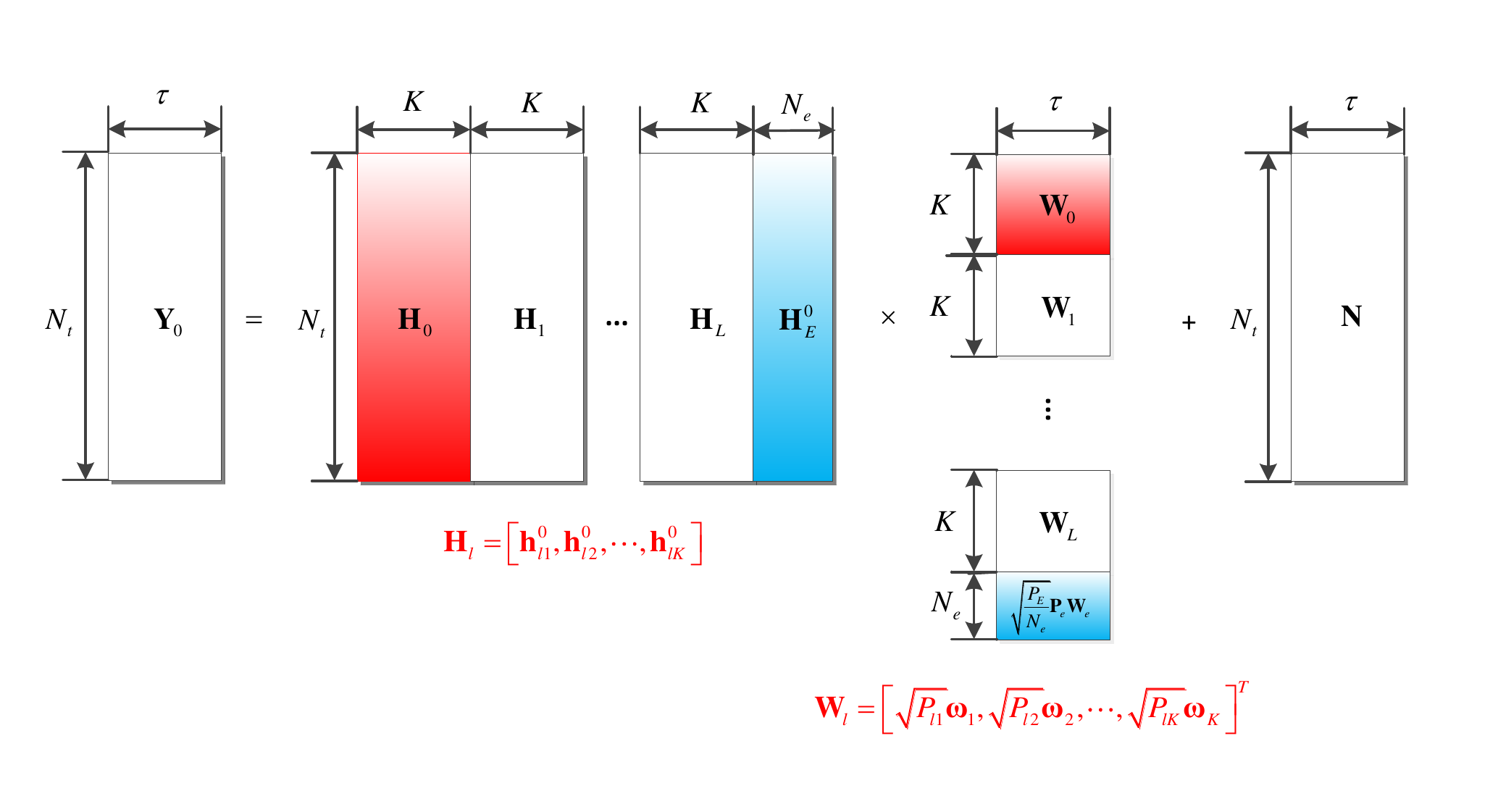}
\caption {\space\space Structure of uplink received signal  at the BS in the local cell.}
\label{uplink}
\end{figure*}

Eq. (\ref{eq:Y_0_model}) can be rewritten as
 \begin{multline} \label{eq:y_0_multi}
{{\bf{y}}_0} = \sum\limits_{k = 1}^K {\sqrt {{P_{0k}}} \left( {{{\boldsymbol{\omega }}_{0k}} \otimes {{\bf{I}}_{{N_t}}}} \right){\bf{h}}_{0k}^0}  \\ + \sum\limits_{l = 1}^L {\sum\limits_{k = 1}^K {\sqrt {{P_{lk}}} \left( {{{\boldsymbol{\omega }}_{lk}} \otimes {{\bf{I}}_{{N_t}}}} \right){\bf{h}}_{lk}^0} } \\
  + \sqrt {\frac{{{P_E}}}{{{N_e}}}} \left( {{{\boldsymbol{\omega }}_{0m}} \otimes {{\bf{I}}_{{N_t}}}} \right)  \sum\limits_{r = 1}^{{N_e}}  {{\bf{h}}_{{\rm eff},r}^0 } + {\bf{n}},
\end{multline}
where ${{\bf{y}}_0} = {\rm vec}\left({\mathbf{Y}}_0\right)$, $\mathbf{n} = {\rm vec}\left({\mathbf{N}}\right)$, and {\bl ${\bf{h}}_{{\rm eff},r}^l$ denotes the $r$th column of matrix ${\bf{H}}_E^l \mathbf{P}_e$},
$l = 0, 1,\cdots,L$.
We assume that the same $K$ orthogonal pilot sequences
are used by the $K$ users in each cell \cite{Marzetta2008TWC}, i.e.,
${{\boldsymbol{\omega }}_{0k}} = {{\boldsymbol{\omega }}_{1k}} =  \cdots ={{\boldsymbol{\omega }}_{Lk}} = {{\boldsymbol{\omega }}_k}$, ${\boldsymbol{\omega }}_{lk}^H{{\boldsymbol{\omega }}_{lk}} = \tau ,
{\boldsymbol{\omega }}_{lk}^H{{\boldsymbol{\omega }}_{lp}} = 0$, $\forall k\neq p$.  Then,
 the minimum mean square error (MMSE) estimate of ${\bf{h}}_{0m}^0$  is given by \cite{Kailath2000}
\begin{multline} \label{eq:mmse_h0m}
{\widehat {\bf{h}}_{0m}^0} = \sqrt {{P_{0m}}} {\bf{R}}_{0m}^0{\left(  {{N_0}{{\bf{I}}_{{N_t}}}  + }
\tau \left( {\sum\limits_{t = 0}^L {{P_{tm}}{\bf{R}}_{tm}^0}  } \right. \right. } \\  \left. { \left. {+ {P_E} {r_{E,R}^0{\bf{R}}_{E,T}^0}  } \right)} \right)^{ - 1}{\widetilde {\bf{y}}_{0m}},
\end{multline}
where
{\bl
 \begin{multline} \label{eq:y_0m}
{\widetilde {\bf{y}}_{0m}}   =  \sqrt {{P_{0m}}} \tau {\bf{h}}_{0m}^0   +  \sum\limits_{t = 1}^L {\sqrt {{P_{tm}}} \tau {\bf{h}}_{tm}^0}
  \\ +  \sqrt {\frac{{{P_E}}}{{{N_e}}}} \tau \sum\limits_{r = 1}^{{N_e}} { {\bf{h}}_{{\rm eff},r}^0 }
+  {\left( {{{\boldsymbol{\omega }}_m}  \otimes  {{\bf{I}}_{{N_t}}}} \right)^H}{\bf{n}}
\end{multline}}and
$r_{E,R}^l = \frac{1}{{{N_e}}}\sum\nolimits_{r = 1}^{{N_e}}
{\sum\nolimits_{s = 1}^{{N_e}} {{{\left\{
{\bl \mathbf{P}_e^H} \bl{{\bf{R}}_{E,R}^l} {\bl  \mathbf{P}_e} \right\}}_{rs}}} }$, $l = 0,1,\cdots,L$.
The actual channel vector ${\bf{h}}_{0m}^0$ can be written
as ${\bf{h}}_{0m}^0 = {\bf{\widehat h}}^0_{0m} + {\bf{e}}_{0m}^0$, where the estimated channel ${\bf{\widehat h}}^0_{0m} \! \sim \! \mathcal{CN} \!\left(\!{\mathbf{0}, {\widehat {\bf{R}}_{0m}^0} }\!\right)$
and the estimation error ${\bf{e}}_{0m}^0 \sim \mathcal{CN} \left({\mathbf{0}, {{\bf{R}}_{0m}^0}  -  {\widehat {\bf{R}}_{0m}^0}}\right)$ are mutually independent. Correlation
matrix ${\widehat {\bf{R}}_{0m}^0}$ is given by
\begin{multline} \label{eq:R0m0_est}
\widehat {\bf{R}}_{0m}^0 =
{P_{0m}}\tau {\bf{R}}_{0m}^0 \\
\times {\left( {{N_0}{{\bf{I}}_{{N_t}}}  + \tau \left( {\sum\limits_{t = 0}^L {{P_{tm}}{\bf{R}}_{tm}^0}   +  {P_E} {r_{E,R}^0{\bf{R}}_{E,T}^0}  } \right)} \right)^{ - 1}}{\bf{R}}_{0m}^0.
\end{multline}
Similarly, the MMSE channel estimates for the $m$th user and the $k$th user, $k = 1,2,\cdots,K, k\neq m$, in the $l$th cell, $l = 0,1,\cdots,L$, are given by
\begin{multline} \label{eq:hnmn_est}
\widehat {\bf{h}}_{lm}^l =
 \sqrt {{P_{lm}}} {\bf{R}}_{lm}^l \\
 \times {\left( {{N_0}{{\bf{I}}_{{N_t}}}  + \tau \left( {\sum\limits_{t = 0}^L {{P_{tk}}{\bf{R}}_{tm}^l}  +  {P_E} {r_{E,R}^l{\bf{R}}_{E,T}^l}   } \right)} \right)^{ - 1}}{\widetilde {\bf{y}}_{lm}},
\end{multline}
 \begin{multline} \label{eq:y_nn}
{\widetilde {\bf{y}}_{lm}} = \sqrt {{P_{lm}}} \tau {\bf{h}}_{lm}^l + \sum\limits_{t = 0, t \neq l }^L {\sqrt {{P_{tm}}} \tau {\bf{h}}_{tm}^l}  \\
+ \sqrt {\frac{{{P_E}}}{{{N_e}}}} \tau \sum\limits_{r = 1}^{{N_e}} {{\bf{h}}_{E,r}^l} + {\left( {{{\boldsymbol{\omega}}_m} \otimes {{\bf{I}}_{{N_t}}}} \right)^H}{\bf{n}}
\end{multline}
and
\begin{align} \label{eq:hnnk_est}
{\widehat {\bf{h}}_{lk}}^l & = \sqrt {{P_{lk}}} {\bf{R}}_{lk}^l{\left( {{N_0}{{\bf{I}}_{{N_t}}} + \tau \sum\limits_{t = 0}^L {{P_{tk}}{\bf{R}}_{tk}^l} } \right)^{ - 1}}{\widetilde {\bf{y}}_{lk}}, \\
{\widetilde {\bf{y}}_{lk}} & = \sqrt {{P_{lk}}} \tau {\bf{h}}_{lk}^l + \sum\limits_{t = 0,t \ne l}^L {\sqrt {{P_{tm}}} \tau {\bf{h}}_{tk}^l}  + {\left( {{{\boldsymbol{\omega }}_k} \otimes {{\bf{I}}_{{N_t}}}} \right)^H}{\bf{n}},
\end{align}
respectively.
Furthermore, the correlation matrices of $\widehat {\bf{h}}_{lm}^l $ and ${\widehat {\bf{h}}_{lk}}^l$  are obtained as
 \begin{multline}
\widehat {\bf{R}}_{lm}^l =
{P_{lm}}\tau {\bf{R}}_{lm}^l \\
\times {\left( {{N_0}{{\bf{I}}_{{N_t}}} + \tau \left( {\sum\limits_{t = 0}^L {{P_{tk}}{\bf{R}}_{tm}^l}  + {P_E} {r_{E,R}^l{\bf{R}}_{E,T}^l}    } \right)} \right)^{ - 1}}{\bf{R}}_{lm}^l \label{eq:Rnmn_est}
\end{multline}
and
\begin{equation}
\widehat {\bf{R}}_{lk}^l = {P_{lk}}\tau {\bf{R}}_{lk}^l{\left( {{N_0}{{\bf{I}}_{{N_t}}} + \tau \sum\limits_{t = 0}^L {{P_{tk}}{\bf{R}}_{tk}^l}  } \right)^{ - 1}}{\bf{R}}_{lk}^l,
\end{equation}
respectively.

{\emph{Remark 1:}} We assume that the correlation matrices ${\bf{R}}_{tm}^l$
of the users
and  ${r_{E,R}^l{\bf{R}}_{E,T}^l}$ of the eavesdropper are perfectly known at the
legitimate transmitter, see (\ref{eq:mmse_h0m}), (\ref{eq:hnmn_est}), (\ref{eq:hnnk_est}).
In this context, we note that for massive MIMO systems, it is reasonable to assume that the
statistical CSI of the users of the system is known at the BS \cite{Yin2013JSAC,Wu2014TSP}.
Hence, our system model is applicable to the case where
the BS attempts to transmit a private message to some users while treating the other users as eavesdroppers, i.e.,
the eavesdropper is an idle user of the system.  Therefore, the
statistical CSI of the eavesdropper can be assumed to be known.
Nevertheless,  the assumption that the
statistical CSI  of the active
eavesdropper is available at the transmitter may also be reasonable if the eavesdropper is
not an idle user. In particular, we can
obtain $E\left[{\widetilde {\bf{y}}_{lm}}{\widetilde {\bf{y}}_{lm}}^H\right]$ by averaging
${\widetilde {\bf{y}}_{lm}}$ over different data slots.
Eq. (\ref{eq:y_nn}) suggests that $E\left[{\widetilde {\bf{y}}_{lm}}{\widetilde {\bf{y}}_{lm}}^H\right]$ is
the sum of the correlation matrices of all users, the eavesdropper, and the noise.
Then, ${r_{E,R}^l{\bf{R}}_{E,T}^l}$ can be obtained by
subtracting the correlation matrices of the legitimate users
and the noise from $E\left[{\widetilde {\bf{y}}_{lm}}{\widetilde {\bf{y}}_{lm}}^H\right]$.

{\bl {\emph{Remark 2:}} For conventional massive MIMO systems with pilot contamination but without active eavesdropping \cite{Marzetta2008TWC,Jose2011TWC},
the term ${P_E} {r_{E,R}^0{\bf{R}}_{E,T}^0}$ is not present in (\ref{eq:mmse_h0m}). In this case,
the BS can employ a user scheduling scheme \cite{Yin2013JSAC,Yin2014JSTSP} to control the
pilot contamination and reduce the impact of ${{P_{tm}}{\bf{R}}_{tm}^0}$ in (\ref{eq:mmse_h0m}).
However, this is not possible for active eavesdropping and the term  ${P_E} {r_{E,R}^0{\bf{R}}_{E,T}^0}$
in (\ref{eq:mmse_h0m}) can not be avoided.
This is an important  difference between conventional massive MIMO systems and the
system considered in this paper.

The eavesdropper can optimize its pilot contamination precoder ${\bf P}_e$
to conduct a best possible attack. Since the precoder is computed based on
 the estimated channel,
the leakage of the desired signal will increase
when the channel estimation error increases.
Therefore, from the eavesdropper's
perspective, the pilot contamination attack should  impair
the accuracy of the channel estimation as much as possible.
 As a result, the eavesdropper should optimize
${\bf P}_e$ such that  the total average
estimation error $\tr \left({\bf{R}}_{0m}^0 - \widehat {\bf{R}}_{0m}^0\right)$ is maximized.
From (\ref{eq:R0m0_est}), we observe that the total average
estimation error $\tr \left({\bf{R}}_{0m}^0 - \widehat {\bf{R}}_{0m}^0\right)$  is a monotonically increasing
function of $r_{E,R}^0$. Hence, maximizing $r_{E,R}^0$ is desirable.
Therefore, finding the pilot contamination precoder ${\bf P}_e$  amounts  to solving the following optimization
problem
\begin{equation} \label{eq:pilot_attack}
\mathop {\max }\limits_{{\bf{P}}_e} \sum\limits_{r = 1}^{{N_e}} {\sum\limits_{s = 1}^{{N_e}}
 {{{\left\{ {{{{\bf P}_e^H}}{\bf{R}}_{E,R}^0{\bf{P}}_e} \right\}}_{rs}}} }
\end{equation}
\begin{equation*} \label{eq:pilot_attack_contr}
{\rm s.t.} \quad  \tr\left({{\bf P}_e} {\bf{W}}_e {\bf{W}}_e^H {{\bf P}_e^H}\right) \leq N_e \tau.
\end{equation*}

Let ${\bf{P}}_e = \left[ {{{\bf{p}}_1},{{\bf{p}}_2}, \cdots ,{{\bf{p}}_{{N_e}}}} \right]$,
where ${{\bf{p}}_s} \in \mathbb{C} {^{N_e \times 1}}$ is the $s$th column  of ${\bf{P}}_e$.
Then, we have the following theorem.
\begin{theorem}\label{theo:optimal_attack}
The optimal pilot contamination precoder ${\bf P}_e$ which solves (\ref{eq:pilot_attack}) has to satisfy the following condition
\begin{equation} \label{eq:optimal_precoder}
\sum\limits_{s = 1}^{{N_e}} {{{\bf{p}}_s}} = \sqrt {{N_e}} {{\bf{u}}_e}
\end{equation}
where ${{\bf{u}}_e} \in \mathbb{C} {^{N_e \times 1}}$ is the eigenvector corresponding to the largest eigenvalue of ${\bf{R}}_{E,R}^0$.
\end{theorem}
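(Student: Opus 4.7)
The plan is to reduce the matrix optimization in (\ref{eq:pilot_attack}) to a Rayleigh-quotient problem in a single vector, namely the column sum $\mathbf{v}=\mathbf{P}_e\mathbf{1}_{N_e}=\sum_{s=1}^{N_e}\mathbf{p}_s$, by exploiting the very special structure of $\mathbf{W}_e$.

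First I would simplify the power constraint. Since $\mathbf{W}_e=\mathbf{1}_{N_e}\boldsymbol{\omega}_{0m}^T$ and $\boldsymbol{\omega}_{0m}^H\boldsymbol{\omega}_{0m}=\tau$, we have $\mathbf{W}_e\mathbf{W}_e^H=\tau\,\mathbf{1}_{N_e}\mathbf{1}_{N_e}^T$. Hence
\begin{equation*}
\tr\!\bigl(\mathbf{P}_e\mathbf{W}_e\mathbf{W}_e^H\mathbf{P}_e^H\bigr)
=\tau\,\mathbf{1}_{N_e}^T\mathbf{P}_e^H\mathbf{P}_e\mathbf{1}_{N_e}
=\tau\,\bigl\|\mathbf{P}_e\mathbf{1}_{N_e}\bigr\|^2,
\end{equation*}
so the constraint becomes $\|\mathbf{v}\|^2\le N_e$. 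Next I would rewrite the objective in the same variable: summing all entries of a matrix is just $\mathbf{1}^T(\cdot)\mathbf{1}$, so
\begin{equation*}
\sum_{r,s}\{\mathbf{P}_e^H\mathbf{R}_{E,R}^0\mathbf{P}_e\}_{rs}
=\mathbf{1}_{N_e}^T\mathbf{P}_e^H\mathbf{R}_{E,R}^0\mathbf{P}_e\mathbf{1}_{N_e}
=\mathbf{v}^H\mathbf{R}_{E,R}^0\mathbf{v}.
\end{equation*}

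With this reformulation, (\ref{eq:pilot_attack}) is exactly the problem of maximizing $\mathbf{v}^H\mathbf{R}_{E,R}^0\mathbf{v}$ subject to $\|\mathbf{v}\|^2\le N_e$. By the Rayleigh–Ritz theorem, the maximum is attained when $\|\mathbf{v}\|^2=N_e$ (the constraint is active since $\mathbf{R}_{E,R}^0$ is positive semidefinite) and $\mathbf{v}$ is aligned with a top eigenvector of $\mathbf{R}_{E,R}^0$, giving $\mathbf{v}=\sqrt{N_e}\,\mathbf{u}_e$, which is precisely the necessary condition (\ref{eq:optimal_precoder}).

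The only subtlety I anticipate is the non-uniqueness of $\mathbf{P}_e$: because the objective and constraint depend only on the column sum $\mathbf{v}$, any $\mathbf{P}_e$ whose columns sum to $\sqrt{N_e}\mathbf{u}_e$ is optimal, which explains why the theorem states a condition on $\sum_s\mathbf{p}_s$ rather than pinning down $\mathbf{P}_e$ itself. I would conclude by remarking that one convenient minimum-norm choice is $\mathbf{p}_s=\mathbf{u}_e/\sqrt{N_e}$ for all $s$, i.e.\ $\mathbf{P}_e=\tfrac{1}{\sqrt{N_e}}\mathbf{u}_e\mathbf{1}_{N_e}^T$, which clearly satisfies (\ref{eq:optimal_precoder}) with equality.
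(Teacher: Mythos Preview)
Your proposal is correct and follows essentially the same approach as the paper: both reduce the matrix problem to a quadratic form in the column-sum vector $\mathbf{v}=\sum_s\mathbf{p}_s$ by exploiting $\mathbf{W}_e\mathbf{W}_e^H=\tau\,\mathbf{1}_{N_e}\mathbf{1}_{N_e}^T$, and then solve the resulting Rayleigh-quotient problem. Your write-up is in fact slightly more complete, since you explicitly note why the constraint is active and discuss the non-uniqueness of $\mathbf{P}_e$.
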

\begin{proof}
Please refer to Appendix \ref{proof:theo:optimal_attack}.
\end{proof}

Theorem \ref{theo:optimal_attack} indicates that transmitting the pilot sequence along the direction of the
eigenvector corresponding to the maximum eigenvalue of the receive correlation matrix of the eavesdropper's channel
constitutes the best possible attack strategy from the eavesdropper's
point of view.

}

\subsection{Downlink Data Transmission}
For the data transmission phase, we assume that the BSs in
all $L + 1$ cells perform jamming to prevent eavesdropping in their own cells.
Then, the transmit signal in the $l$th cell, $l = 0,1,\cdots,L$, is given by
 \begin{align}\label{eq:xn}
{{\bf{x}}_l} = \sqrt P \left( {\sqrt p \sum\limits_{k = 1}^K {{{\bf{w}}_{lk}}{s_{lk}}}  + \sqrt q {{\bf{U}}_{{\rm null},\,l}}{{\bf{z}}_l}} \right),
\end{align}
where $P$ is the average transmit power for downlink transmission,
$s_{lk}$ is the transmit signal for the $k$th user in the $l$th cell with
$E\left[|s_{lk}|^2\right] = 1$, and $p$ and $q$ denote the fractions of
power allocated to  transmit signal and AN, respectively.
To avoid the high implementation complexity associated with
the matrix inversion required for zero forcing  and MMSE precoding \cite{Gao2016JSAC}, in this paper, we adopt
simple matched filter precoding, as is typical for massive MIMO systems \cite{Marzetta2008TWC,Jose2011TWC,Zhu2014}.
Thus,
we set ${{\bf{w}}_{lk}} = \frac{{\widehat {\bf{h}}_{lk}^l}}{{\left\| {\widehat {\bf{h}}_{lk}^l} \right\|}}$ for the
precoding vector of the $k$th user in the $l$th cell.
Furthermore, in (\ref{eq:xn}), ${\bf{U}}_{{\rm null},\,l}$ and $\mathbf{z}_l \sim \mathcal{CN} \left( {\mathbf{0}, {{\bf{I}}_{{N_t}}}} \right)$
 denote the AN shaping matrix and the AN vector in the $l$th cell, respectively. We introduce $\widehat {\bf{H}}_l^l
 = \left[\mathbf{\widehat {\bf{h}}}_{l1}^l, \mathbf{\widehat {\bf{h}}}_{l2}^l,\cdots, \mathbf{\widehat {\bf{h}}}_{lK}^l\right]$ for notational simplicity.
{\bl
If the AN shaping matrix is chosen
as the NS of
$\widehat {\bf{H}}_l^l$ as is  conventionally done \cite{Goel2008TWC},
 for each new  channel estimate,  a matrix inversion is needed
 to compute the AN shaping matrix.
This leads to a high implementation complexity considering the
large numbers of antennas in massive MIMO. Thus, to
keep the implementation complexity low despite the large numbers of antennas,
we adopt for  the AN shaping matrix  the asymptotic  NS
of $\widehat {\bf{H}}_l^l$.
In particular, since based on \cite[Corollary 1]{Evans2000TIT}
 we have $ \frac{1}{N_t}{\left( {\widehat {\bf{H}}_l^l} \right)^H}\widehat {\bf{H}}_l^l \!\mathop
 \to  \limits^{{N_t} \to \infty } \! \!
\frac{1}{N_t}{\rm diag} \! {\left[{\tr\left( {\widehat {\bf{R}}_{l1}^l} \right)},  \! {\tr\left( {\widehat {\bf{R}}_{l2}^l} \right)},\!\cdots,\! {\tr\left( {\widehat {\bf{R}}_{lK}^l} \right) } \!\right]}$, we set ${{\bf{U}}_{{\rm null},\,l} } = {{\bf{I}}_{{N_t}}} - \widehat {\bf{H}}_l^l{\rm diag}\left[{\tr\left( {\widehat {\bf{R}}_{l1}^l} \right)^{-1}}, {\tr\left( {\widehat {\bf{R}}_{l2}^l} \right)^{-1}},\cdots, {\tr\left( {\widehat {\bf{R}}_{lK}^l} \right)^{-1} }\right]{\left( {\widehat {\bf{H}}_l^l} \right)^H}$.}
It can be shown that \cite[Corollary 1]{Evans2000TIT}
\begin{align}\label{eq:null_asy}
\frac{1}{N_t}\tr\left( {{\bf{U}}_{{\rm null},\,l} }  {{\bf{U}}^H_{{\rm null},\,l} }\right) \mathop  \to \limits^{{N_t} \to \infty } \frac{1}{N_t} (N_t - K).
\end{align}
Based on (\ref{eq:xn}), (\ref{eq:null_asy}), and the expression for ${{\bf{w}}_{lk}}$, we set $K p + (N_t - K)  q = 1$
to ensure $ \mathbf{x}_l^H  \mathbf{x}_l \mathop  \to \limits^{{N_t} \to \infty } P$.

The received signals at the $m$th user in the reference cell, ${y_{0m}}$, and at the eavesdropper, $\mathbf{y}_{\rm eve} \in {\mathbb{C}^{N_e  \times 1}} $,
are given by
 \begin{align}
{y_{0m}}  = \sum\limits_{l = 0}^L {{{\left( {{\bf{h}}_{0m}^l} \right)}^H}{{\bf{x}}_l}}  + {{n}}_{0m} \label{eq:y_0m}
\end{align} and
\begin{align}
{\mathbf{y}_{\rm eve}} = \sum\limits_{l = 0}^L {{{\left( {{\bf{H}}_{E}^l} \right)}^H}{{\bf{x}}_l}}  + {\mathbf{{n}}}_{\rm eve}, \label{eq:y_eve}
\end{align}
respectively.
Here, ${{n}}_{0m}$ and ${\mathbf{n}}_{\rm eve} \in {\mathbb{C}^{N_e  \times 1}} $ are zero-mean Gaussian noise processes with variance $N_{0,\rm{d}}$ and
covariance matrix $N_{0,\rm{d}} \mathbf{I}_{N_e}$, respectively.  We define the signal-to-noise ratio (SNR) for downlink data transmission as $\gamma = P/ N_{0,\rm{d}}$.
{\bl The structure of the downlink received signal  at the desired user and the eavesdropper
 is illustrated in  Figure \ref{downlink}. }

\begin{figure*}[!ht]
\centering
\includegraphics[width=0.9\textwidth]{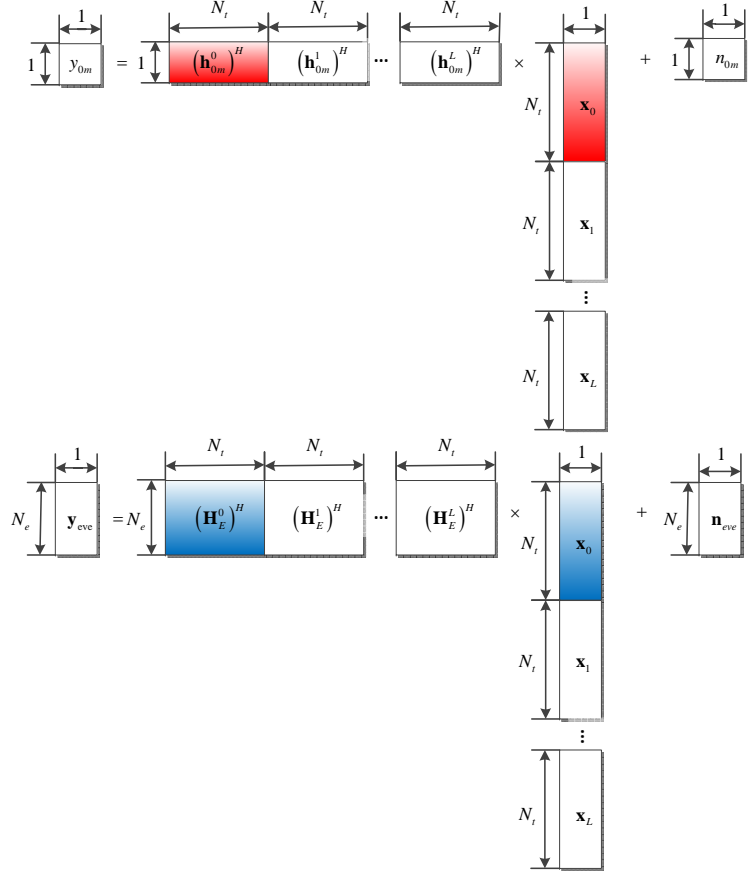}
\caption {\space\space Structure of the downlink received signal  for the desired user and the eavesdropper.}
\label{downlink}
\end{figure*}

In order to illustrate the secrecy threat
that the pilot contamination attack poses
for massive MIMO systems, we simulate the secrecy rate for MF precoding without AN generation\footnote{MF precoding without
AN generation is often adopted in massive MIMO systems without secrecy consideration\cite{Marzetta2008TWC,Jose2011TWC}.}(i.e., $p = 1, q = 0$)
 in the following example.

\begin{example}
Let $N_t = 128$, $N_e = 1$, $L = 3$, $K = 5$, $P_{lk} = 1$, $\forall k$, $\forall l$,  $P_E = 0.5$, and $N_0 = 1$.
{\bl The achievable ergodic secrecy rate \cite{Zhu2014} of a massive MIMO system
with MF precoding without AN generation} is shown in Table I
for i.i.d. fading  and different SNR values.

\begin{table*}[!t]\label{tab:secracy_rate_snr}
\captionstyle{center}
\caption{The achievable secrecy rate for different SNR values.}
\centering
\begin{tabular}{|c|c|c|c|c|c|c|}
\hline
  $\gamma$ (dB)  &  \  $-10$     \   &   \   $-8$     \   &  \  $-6$    \   &   \  $-4$    \   &  \ $-2$  \ &  \ $0$  \    \\ \hline
   \rm{Secrecy Rate}  (b/s/Hz) &   \   $0.3828$   \   &   \   $0.4097$   \   &   \   $0.3807$   \   &   \   $0.2759$   \ & \  $0.0822$  \ & \ $0$  \  \\
\hline
\end{tabular}
\vspace*{10pt}

\hrulefill
\end{table*}

\end{example}

{\bl The results in Table I reveal that
even if the pilot power of the eavesdropper is only half of the pilot power of the desired user
and the eavesdropper has only a single antenna, zero secrecy rate may result.}

\section{Massive MIMO Signal Design for Combating the Pilot Contamination Attack}
In this section, we investigate three signal designs for combating the pilot
contamination attack by an active eavesdropper in a multi-cell multi-user massive MIMO system.

\subsection{MF-AN Design}
An achievable ergodic secrecy rate of the massive MIMO system described
in Section II is given by\cite{Zhu2014}
 \begin{align}\label{eq:R_sec}
{R_{\sec}} = {\left[ {{R_{0m}} - {C_{\rm{eve}}} } \right]^ + },
 \end{align}
 where $R_{0m}$ and $C_{\rm{eve}}$ denote an achievable ergodic rate between the BS and the $m$th user
and  the ergodic capacity between the BS and the eavesdropper in the $0$th cell, respectively.  The
achievable ergodic rate ${R_{0m}}$ is given by \cite[Eq. (8)]{Zhu2014}
 \begin{align}\label{eq:R_B}
{R_{0m}} = E\left[{\log _2}\left( {1 + {\rm SINR}_{0m}} \right) \right]
 \end{align}
 where ${\rm SINR}_{0m}$ is given by
\begin{align}\label{eq:SINR_0m}
{\rm SINR}_{0m} = \frac{p \gamma{{ { { \left| {{{\left( {{\bf{h}}_{0m}^0} \right)}^H}{{\bf{w}}_{0m}}}  \right|}}^2}}}{A}
\end{align}
\begin{multline}\label{eq:SINR_0m_b}
A   =  p \gamma \sum\limits_{k = 1,k \ne m}^K { {{{\left| {{{\left( {{\bf{h}}_{0m}^0} \right)}^H}{{\bf{w}}_{0k}}} \right|}^2}}}  +  q \gamma  {{{\left| {{{\left( {{\bf{h}}_{0m}^0} \right)}^H}{{\bf{U}}_{{\rm null},\,0}}} \right|}^2}}  \\
+  p\gamma   \sum\limits_{l = 1}^L  {\sum\limits_{k = 1}^K { {{{\left| {{{\left( {{\bf{h}}_{0m}^l} \right)}^H}{{\bf{w}}_{lk}}} \right|}^2}} } }
   +q \gamma \sum\limits_{l = 1}^L { {{{\left| {{{\left( {{\bf{h}}_{0m}^l} \right)}^H}{{\bf{U}}_{{\rm null},\,l}}} \right|}^2}} }    +  1.
\end{multline}

{\bl
In this paper, we  make the pessimistic assumption  that
the eavesdropper has perfect knowledge of its own channel and is able to decode and
cancel the signals of all intra-cell and inter-cell users from
the received signal ${\mathbf{y}_{\rm eve}}$ in (\ref{eq:y_eve}) except for the signal intended for the $m$th user in the $0$th cell.
This assumption results in an upper bound on the eavesdropper's capacity, and consequently, in a
lower bound on the ergodic secrecy rate.
If the eavesdropper has access to the data of all
intra-cell and inter-cell interfering users, this low bound
is achievable. This might be the case if the interfering users
cooperate with the eavesdropper. We note that this assumption
constitutes a worst-case scenario. Hence, if secure communication
can be achieved  for this worst case, then secure communication
can also be achieved for more optimistic settings (e.g. when the eavesdropper cannot mitigate
all multi-user interference).}
Considering the worst case, ${C_{\rm{eve}}}$
can be expressed as \cite[Eq. (7)]{Zhu2014}
\begin{align}
& {C_{\rm eve}} \nonumber \\
&= E\left[ {{{\log }_2}\left( {1 + p\gamma {{\left( {{{\bf{w}}_{0m}}} \right)}^H}{\bf{H}}_E^0{{\bf{Q}}^{ - 1}}{{\left( {{\bf{H}}_E^0} \right)}^H}{{\bf{w}}_{0m}}} \right)} \right] \nonumber \\
 & = E\left[ {{{\log }_2}\left( {1 + \frac{{p\gamma }}{{{{\left\| {\widehat {\bf{h}}_{0m}^0} \right\|}^2}}}{{\left( {\widehat {\bf{h}}_{0m}^0} \right)}^H}
 {\bf{H}}_E^0{{\bf{Q}}^{ - 1}}{{\left( {{\bf{H}}_E^0} \right)}^H}\widehat {\bf{h}}_{0m}^0} \right)} \right], \label{eq:C_eve_1}
\end{align}
where
\begin{align} \label{eq:Q}
{\bf{Q}} = q\gamma \sum\limits_{l = 0}^L {{{\left( {{\bf{H}}_E^l} \right)}^H}{{\bf{U}}_{{\rm null},\,l}}{{ {{{\bf{U}}^H_{{\rm null},\,l}}}}}{\bf{H}}_E^l}  + {{\bf{I}}_{{N_e}}}
\end{align}
denotes the noise
correlation matrix at the eavesdropper.

In the following theorem, we provide an asymptotic achievable secrecy rate expression
when the number of transmit antennas $N_t$ tends to infinity. {\bl
For convenience, the notation used in the theorem is summarized in
Table \ref{tab:theorem 2}.

\begin{table*}[!t]
\captionstyle{center}
\centering
{\bl
\caption{Notation used in Theorem \ref{prop:sec_rate_mul}.}
\label{tab:theorem 2}
\begin{tabular}{|c|c|}
\hline
 Notation   &           Description       \\ \hline
 ${\rm SINR}_{0m,\,{\rm asy}}$       &  \quad  Output SINR of the desired user in the asymptotic regime $N_t \rightarrow\infty$  \quad  \\
\hline
 ${\rm SINR}_{\rm eve,\, asy}$    &  \quad  Output SINR of the eavesdropper in the asymptotic regime $N_t \rightarrow\infty$ \quad  \\
\hline
   ${\theta _m}$  &  \quad  Power of signal intended for  the $m$th user in the $0$th cell received by the desired user \quad  \\
\hline
   ${\theta _{b,p}}$  &  \quad   Power of multi-user interference at the desired user  \quad  \\
\hline
   ${\theta _{b,q}}$  &  \quad   Power of the AN at the desired user   \quad  \\
\hline
  $ \Lambda _{0m}^l$   &  \quad  Power of the signal intended for the $m$th user in the $l$th cell  received at the desired user   \quad  \\
\hline
  $\mathbf{{Q}}_{\rm{asy}}$   &  \quad  Power of  AN received at the eavesdropper    \quad  \\
\hline
  $\eta _{ij}^0$   &  \quad   Power  of the signal intended for the $m$th user in the $0$th cell received at the eavesdropper    \quad  \\
\hline
  $\eta _{ij}^l$   &  \quad  Power of the signal intended for the $m$th user in the $l$th cell received at the eavesdropper    \quad  \\
\hline
\end{tabular} }
\vspace*{10pt}

\hrulefill
\end{table*}

}

\begin{theorem}\label{prop:sec_rate_mul}
An asymptotic achievable secrecy rate for a multi-cell multi-user massive MIMO system
employing the MF-AN design to overcome a multi-antenna active eavesdropper
is given by
 \begin{multline}\label{eq:asy_sec}
{R_{\sec, \, \rm{asy}}}  \mathop  \rightarrow \limits^{{N_t} \to \infty } \left[  {\log _2}\left( {1 + {\rm SINR}_{0m,\,{\rm asy}} } \right) \right. \\
\left.- {\log _2}\left( {1 + {\rm SINR}_{{\rm eve,\,asy}}} \right) \right]^{+},
\end{multline}
where
 \begin{align}
{\rm SINR}_{0m,\,{\rm asy}} &= \frac{{ p \gamma {\theta _m}}}{{p\gamma {\theta _{b,p}} +  q\gamma {\theta _{b,q}} + 1}},  \label{eq:b} \\
{\rm SINR}_{\rm eve,\, asy} &= \frac{p\gamma {\theta _{e}}}{\tr \left({\widehat {\bf{R}}_{0m}^0} \right) },   \label{eq:e}
\end{align}
with
 \begin{multline}\label{eq:theta_m}
 {\theta _m}  = \tr\left( {\widehat {\bf{R}}_{0m}^0} \right) + \tr\left( {\widehat {\bf{R}}_{0m}^0} \right)^{-1} {{\tr\left({\left({{\bf{R}}_{0m}^0 - \widehat{\bf{R}}_{0m}^0} \right)
  \widehat{\bf{R}}_{0m}^0} \right)}}
  \end{multline}
   \begin{multline}\label{eq:theta_bp}
{\theta _{b,p}}  =  \sum\limits_{l = 0}^L {\sum\limits_{k = 1,k \ne m}^K {\tr{{\left( {\widehat {\bf{R}}_{lk}^l} \right)}^{ - 1}}
\tr\left( {{\bf{R}}_{0m}^l\widehat {\bf{R}}_{lk}^l} \right)}}  \\
  +  \sum\limits_{l = 1}^L { \tr{{\left(\! {\widehat {\bf{R}}_{lm}^l} \right)}^{ - 1}}  \Lambda _{0m}^l}
  \end{multline}
    \begin{multline}\label{eq:theta_bq}
 {\theta _{b,q}}  = \sum\limits_{l = 0}^L {\tr\left( {{\bf{R}}_{0m}^l} \right)}  - \sum\limits_{l = 0}^L \sum\limits_{k = 1,k \ne m}^K {\tr{{\left( {\widehat {\bf{R}}_{lk}^l} \right)}^{ - 1}}\tr\left( {{\bf{R}}_{0m}^l\widehat {\bf{R}}_{lk}^l} \right)}   \\
  - \sum\limits_{l = 0}^L {\tr{{\left( {\widehat {\bf{R}}_{lm}^l} \right)}^{ - 1}}\Lambda _{0m}^l}.
 \end{multline}
 Here, $\Lambda _{0m}^l$ in (\ref{eq:theta_bp}) and (\ref{eq:theta_bq}) is given by
 \begin{multline}
   \Lambda _{0m}^l   = {\tau ^2}{P_{0m}}{\left| {\tr\left( {{\bf{C}}_{lm}^l{\bf{R}}_{0m}^l} \right)} \right|^2}  + \tau {N_0}\tr\left( {{\bf{R}}_{0m}^l{\bf{C}}_{lm}^l{{\left( {{\bf{C}}_{lm}^l} \right)}^H}} \right) \\  + {\tau ^2}\sum\limits_{t = 1}^L {{P_{tm}}\tr\left( {{\bf{R}}_{0m}^l{\bf{C}}_{lm}^l{\bf{R}}_{tm}^l{{\left( {{\bf{C}}_{lm}^l} \right)}^H}} \right)}
    \\  + {\tau ^2}{P_E} {r_{E,R}^0} \tr\left( {{\bf{R}}_{0m}^l{\bf{C}}_{lm}^l{\bf{R}}_{E,T}^l{{\left( {{\bf{C}}_{lm}^l} \right)}^H}} \right) \\
    \end{multline}
 \begin{multline}\label{eq:Clm}
{\bf{C}}_{lm}^l  =  \sqrt {{P_{lm}}} {\bf{R}}_{lm}^l \\
\times {\left( {{N_0}{{\bf{I}}_{{N_t} }}  + \tau \left( {\sum\limits_{t = 0}^L {{P_{tm}}{\bf{R}}_{tm}^l}   +  {P_E} {r_{E,R}^l{\bf{R}}_{E,T}^l} } \right)} \right)^{ - 1}} .
  \end{multline}
Furthermore, ${\theta _e}$ in (\ref{eq:e}) is given by
\begin{align}
{\theta _e}  =  { \sum\limits_{i = 1}^{{N_e}} {\sum\limits_{j = 1}^{{N_e}} {{{\left\{ {{\bf{Q}}_{\rm{asy}}^{ - 1}} \right\}}_{ij}}\eta _{ij}^0} } },
\end{align}
where
 \begin{multline} \label{eq:eta_prop}
\eta _{ij}^l  = {\tau ^2}{\left\{ {{\bf{R}}_{E,R}^l} \right\}_{ij}}\sum\limits_{t = 0}^L {{P_{tm}}\tr\left( {{\bf{R}}_{E,T}^l{\bf{C}}_{lm}^l{\bf{R}}_{tm}^l{{\left( {{\bf{C}}_{lm}^l} \right)}^H}} \right)} \\
 + {\tau ^2}\frac{{{P_E}}}{{{N_e}}}\sum\limits_{r = 1}^{{N_e}} {{{\left\{ {{\bf{R}}_{E,R}^l} \right\}}_{ir}}} \sum\limits_{r = 1}^{{N_e}} {{{\left\{ {{\bf{R}}_{E,R}^l} \right\}}_{rj}}} {\left| {\tr\left( {{\bf{C}}_{lm}^l{\bf{R}}_{E,T}^l} \right)} \right|^2}  \\
  + {N_0}\tau {\left\{ {{\bf{R}}_{E,R}^l} \right\}_{ij}}\tr\left( {{\bf{R}}_{E,T}^l{\bf{C}}_{lm}^l{{\left( {{\bf{C}}_{lm}^l} \right)}^H}} \right)
\end{multline}
and $\mathbf{{Q}}_{\rm{asy}} = q \gamma \sum\nolimits_{l = 0}^{{L}} \mathbf{Q}_l + \mathbf{I}_{N_r}$. Matrix $\mathbf{Q}_l \in \mathbb{C}{^{{N_e} \times N_e}} $
has elements
\begin{multline}
{\left\{ {{{\bf{Q}}_l}} \right\}_{ij}} =  {\left\{ {{\bf{R}}_{E,R}^l} \right\}_{ij}}\tr\left( {{\bf{R}}_{E,T}^l} \right) - \left[ {\left\{ {{\bf{R}}_{E,R}^l} \right\}_{ij}} \right. \\
\left. \times \sum\limits_{k = 1,k \ne m}^K {\tr{{\left( {\widehat {\bf{R}}_{lk}^l} \right)}^{ - 1}}} \tr\left( {{\bf{R}}_{E,T}^l\widehat {\bf{R}}_{lk}^l} \right) + \tr{\left( {\widehat {\bf{R}}_{lm}^l} \right)^{ - 1}}\eta _{ij}^l \right].
\end{multline}
\begin{proof}
Please refer to Appendix \ref{proof:prop:sec_rate_mul}.
\end{proof}
\end{theorem}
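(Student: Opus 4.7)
The plan is to take the limits as $N_t\to\infty$ of the numerator and denominator of ${\rm SINR}_{0m}$ in (\ref{eq:SINR_0m}) and of the random bilinear form appearing in $C_{\rm eve}$ in (\ref{eq:C_eve_1}) separately, using two standard large-system ingredients: the trace lemma $\frac{1}{N_t}\mathbf{x}^H\mathbf{A}\mathbf{x}\to\frac{1}{N_t}\tr(\mathbf{R}\mathbf{A})$ for $\mathbf{x}\sim\mathcal{CN}(\mathbf{0},\mathbf{R})$ and $\mathbf{A}$ of bounded spectral norm, together with $\frac{1}{N_t}\mathbf{x}^H\mathbf{A}\mathbf{y}\to 0$ for independent $\mathbf{x},\mathbf{y}$. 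The driving idea is that for every quadratic form in sight I would first identify the minimal common random factor generated during uplink training (one of the pilot-matched observations $\widetilde{\mathbf{y}}_{lm}$ implicit in (\ref{eq:mmse_h0m})--(\ref{eq:hnnk_est})) so that, conditioned on it, the residual randomness becomes independent and the two lemmas apply termwise; what remains are deterministic traces built from the MMSE matrices ${\bf C}_{lm}^l$ in (\ref{eq:Clm}) and the correlation matrices in (\ref{eq:R0m0_est})--(\ref{eq:Rnmn_est}).

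For ${\rm SINR}_{0m}$ I would substitute ${\bf w}_{0m}=\widehat{\bf h}_{0m}^0/\|\widehat{\bf h}_{0m}^0\|$ together with ${\bf h}_{0m}^0=\widehat{\bf h}_{0m}^0+{\bf e}_{0m}^0$, splitting the signal power into $\|\widehat{\bf h}_{0m}^0\|^2\to\tr(\widehat{\bf R}_{0m}^0)$ plus the projection $|({\bf e}_{0m}^0)^H\widehat{\bf h}_{0m}^0|^2/\|\widehat{\bf h}_{0m}^0\|^2$, whose numerator evaluates by the trace lemma to $\tr(({\bf R}_{0m}^0-\widehat{\bf R}_{0m}^0)\widehat{\bf R}_{0m}^0)$, giving $\theta_m$ in (\ref{eq:theta_m}). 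In the denominator I would separate interferers into two families: users $k\ne m$ in any cell, whose estimates use pilots orthogonal to $\boldsymbol{\omega}_m$ and are therefore independent of ${\bf h}_{0m}^l$, so the trace lemma directly yields the $\tr(\widehat{\bf R}_{lk}^l)^{-1}\tr({\bf R}_{0m}^l\widehat{\bf R}_{lk}^l)$ contributions; and the co-pilot users $(l,m)$ with $l\ne 0$, for which I would substitute $\widehat{\bf h}_{lm}^l={\bf C}_{lm}^l\widetilde{\bf y}_{lm}$ via (\ref{eq:hnmn_est}) and isolate inside $\widetilde{\bf y}_{lm}$ the deterministic bilinear piece $\sqrt{P_{0m}}\tau({\bf h}_{0m}^l)^H{\bf C}_{lm}^l{\bf h}_{0m}^l\to\sqrt{P_{0m}}\tau\tr({\bf C}_{lm}^l{\bf R}_{0m}^l)$ from the part independent of ${\bf h}_{0m}^l$, recovering $\Lambda_{0m}^l$ after a second application of the trace lemma. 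The AN term follows from the expansion ${\bf U}_{{\rm null},l}{\bf U}_{{\rm null},l}^H={\bf I}_{N_t}-\widehat{\bf H}_l^l\,{\rm diag}[\cdots]\,(\widehat{\bf H}_l^l)^H$: the identity piece produces $\sum_l\tr({\bf R}_{0m}^l)$ and the rank-$K$ correction is exactly the sum of the two families of quadratic forms already computed, which combined with the constraint $Kp+(N_t-K)q=1$ reproduces $\theta_{b,q}$.

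For $C_{\rm eve}$ I would decompose each ${\bf H}_E^l$ into a component aligned with the pilot-contamination direction in $\widetilde{\bf y}_{lm}$ (the only source of correlation between ${\bf H}_E^0$ and $\widehat{\bf h}_{0m}^0$, and between ${\bf H}_E^l$ and $\widehat{\bf H}_l^l$) and an independent residual. The same expansion of ${\bf U}_{{\rm null},l}{\bf U}_{{\rm null},l}^H$, applied entrywise and followed by the trace lemma, produces the deterministic equivalent $\mathbf{Q}_{\rm asy}=q\gamma\sum_l\mathbf{Q}_l+\mathbf{I}_{N_e}$, the subtracted part in $\{\mathbf{Q}_l\}_{ij}$ being the projection away from $\widehat{\bf H}_l^l$. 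Replacing $\mathbf{Q}^{-1}$ by $\mathbf{Q}_{\rm asy}^{-1}$ via the continuous mapping theorem in the bilinear form $(\widehat{\bf h}_{0m}^0)^H{\bf H}_E^0\mathbf{Q}^{-1}({\bf H}_E^0)^H\widehat{\bf h}_{0m}^0/\|\widehat{\bf h}_{0m}^0\|^2$ and expanding it entrywise turns each $(i,j)$-term into a quadratic form whose pilot-common-factor decomposition yields exactly the four contributions of $\eta_{ij}^0$ in (\ref{eq:eta_prop}); dividing by $\|\widehat{\bf h}_{0m}^0\|^2\to\tr(\widehat{\bf R}_{0m}^0)$ then produces ${\rm SINR}_{{\rm eve},\,{\rm asy}}$. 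The main obstacle throughout is precisely this bookkeeping of pilot-contamination correlations: $\widehat{\bf h}_{lm}^l$ and ${\bf h}_{0m}^l$ across cells, and $\widehat{\bf h}_{0m}^0$ and ${\bf H}_E^0$, all share $\widetilde{\bf y}_{lm}$ as a common cause, so the naive independence-plus-trace-lemma shortcut fails; one must first extract that common vector, read off its covariance from (\ref{eq:y_nn}) and its $l=0$ analogue, and only then apply the large-system lemmas to the conditionally independent residuals, while carefully tracking the eavesdropper's $N_e\times N_e$ receive correlation entries $\{{\bf R}_{E,R}^l\}_{ij}$ inside every cross product so as to recover the precise four-term structure of $\eta_{ij}^l$.
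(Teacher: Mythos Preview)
Your proposal is correct and follows essentially the same route as the paper's proof: both expand each quadratic form via the substitution $\widehat{\bf h}_{lm}^l={\bf C}_{lm}^l\widetilde{\bf y}_{lm}$, separate the pilot-contamination-coupled piece from the conditionally independent residuals, and then apply the trace lemma termwise, with the AN term handled through the projector expansion ${\bf U}_{{\rm null},l}{\bf U}_{{\rm null},l}^H={\bf I}_{N_t}-\widehat{\bf H}_l^l\,{\rm diag}[\cdots]\,(\widehat{\bf H}_l^l)^H$ and the eavesdropper's Kronecker structure tracked entrywise in $\{{\bf R}_{E,R}^l\}_{ij}$ to recover $\eta_{ij}^l$ and $\mathbf{Q}_{\rm asy}$. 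The only cosmetic difference is that the paper invokes \cite[Eq.~(102)]{Wen2013TIT} explicitly for the Kronecker-correlated eavesdropper terms where you appeal directly to the trace lemma with receive-correlation bookkeeping, and the paper does not name the continuous mapping step for $\mathbf{Q}^{-1}\to\mathbf{Q}_{\rm asy}^{-1}$ that you (correctly) make explicit.
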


Theorem \ref{prop:sec_rate_mul} provides an expression for
the general asymptotic achievable secrecy rate for ${{N_t} \to \infty }$, which
is valid for arbitrary $p$ and $q$.  In practice, the optimal $p$ can be  found by performing a simple one
dimensional numerical search in the interval $0 \leq p \leq \frac{1}{K}$ for maximization of (\ref{eq:asy_sec}).
For the special case of a single-antenna eavesdropper, the optimal $p$ can be obtained in closed form,
c.f. Section \ref{sec:single-antenna}-A.

{\bl For the conventional secrecy problem existing in
multiple-input, single-output systems with multiple-antenna
passive eavesdroppers \cite{Khisti2010TIT}, the optimal transmission design does
not require AN generation and the secrecy rate increases for increasing SNR.
To illustrate  the importance of AN generation when an active eavesdropper is present, we introduce
the following theorem.
\begin{theorem} \label{prop:all_transmit}
For the case without AN generation where $p = \frac{1}{K}$, ${R_{\sec ,\, {\rm asy}}}$ in (\ref{eq:asy_sec}) is a monotonically decreasing function of $\gamma$
for $\gamma > \gamma_{\rm th}$, where
\begin{align}\label{eq:gamma_th}
{\gamma _{{\rm{th}}}} & = \frac{{ - {\theta _{b,p}}{\tilde{\theta} _{e}} + \sqrt {\theta _{b,p}^2\tilde{\theta} _{e}^2 + \left( {{\theta _{b,p}} + {\theta _m}} \right)\left( {{\theta _m} - {\tilde{\theta}_{e}}} \right){\theta _{b,p}}{\tilde{\theta} _{e}}} }}{{\left( {{\theta _{b,p}} + {\theta _m}} \right){\theta _{b,p}}{\tilde{\theta} _{e}}}}
\end{align}
with
\begin{align}\label{eq:gamma_th_2}
{\tilde{\theta} _{e}} & = \frac{\sum\limits_{i = 1}^{{N_e}} \eta _{ii}^0} {\tr \left({\widehat {\bf{R}}_{0m}^0} \right) } .
\end{align}
\begin{proof}
Please refer to Appendix \ref{proof:prop:all_transmit}.
\end{proof}
\end{theorem}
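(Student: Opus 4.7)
The plan is to compute the derivative of $R_{\sec,\,{\rm asy}}$ with respect to $\gamma$ under the no-AN specialization and locate the unique positive root of the resulting rational equation. First I would set $p = 1/K$ and $q = 0$ in the expressions of Theorem \ref{prop:sec_rate_mul}. The critical simplification is that with $q = 0$ the matrix $\mathbf{Q}_{\rm asy}$ reduces to $\mathbf{I}_{N_e}$, so that $\theta_e = \sum_{i,j}\{\mathbf{I}_{N_e}\}_{ij}\eta_{ij}^0 = \sum_{i=1}^{N_e}\eta_{ii}^0$, which gives ${\rm SINR}_{{\rm eve},\,{\rm asy}} = p\gamma\,\tilde{\theta}_e$ with $\tilde{\theta}_e$ as defined in (\ref{eq:gamma_th_2}). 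Likewise, the $q\gamma\theta_{b,q}$ term disappears from ${\rm SINR}_{0m,\,{\rm asy}}$, leaving ${\rm SINR}_{0m,\,{\rm asy}} = p\gamma\theta_m/(1+p\gamma\theta_{b,p})$.

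Next I would combine the two logarithms to write
\begin{equation*}
R_{\sec,\,{\rm asy}} = \log_2\frac{1 + p\gamma(\theta_m+\theta_{b,p})}{(1+p\gamma\theta_{b,p})(1+p\gamma\tilde{\theta}_e)} \triangleq \log_2 f(\gamma),
\end{equation*}
and differentiate $\ln f(\gamma)$ to obtain
\begin{equation*}
\frac{d\ln f(\gamma)}{d\gamma} = \frac{p(\theta_m+\theta_{b,p})}{1+p\gamma(\theta_m+\theta_{b,p})} - \frac{p\theta_{b,p}}{1+p\gamma\theta_{b,p}} - \frac{p\tilde{\theta}_e}{1+p\gamma\tilde{\theta}_e}.
\end{equation*}
Clearing denominators, the stationary condition becomes a quadratic in $\gamma$ whose leading and constant coefficients can be read off after expansion; the key observation is that the constant term is proportional to $(\tilde{\theta}_e-\theta_m)$, which is strictly negative whenever a positive secrecy rate is attainable at low SNR, so the product of the two roots is negative. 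Hence there is exactly one positive root, and the quadratic formula (with the correct sign in front of the square root) yields the expression for $\gamma_{\rm th}$ in (\ref{eq:gamma_th}).

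To finish, I would verify the \emph{direction} of monotonicity across $\gamma_{\rm th}$. Since the coefficient of $\gamma^2$ in the quadratic has the opposite sign to the constant term, the quadratic is negative in the interval between the (negative) small root and the (positive) root $\gamma_{\rm th}$, and positive beyond $\gamma_{\rm th}$. Tracking the overall sign back through the clearing of denominators gives $d\ln f(\gamma)/d\gamma > 0$ for $0 \le \gamma < \gamma_{\rm th}$ and $d\ln f(\gamma)/d\gamma < 0$ for $\gamma > \gamma_{\rm th}$, which establishes the claim. A quick sanity check at $\gamma\to 0$ (where the derivative equals $p(\theta_m-\tilde{\theta}_e)>0$) and at $\gamma\to\infty$ (where the $-p\tilde\theta_e/(1+p\gamma\tilde\theta_e)-p\theta_{b,p}/(1+p\gamma\theta_{b,p})$ terms dominate the leading $p(\theta_m+\theta_{b,p})/(1+p\gamma(\theta_m+\theta_{b,p}))$ term so that the derivative tends to $0^-$) confirms the sign pattern.

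The main obstacle is purely algebraic: matching the closed-form root obtained from the quadratic formula to the specific arrangement of terms in (\ref{eq:gamma_th}), and carefully tracking the sign conventions so that the selected root is indeed the positive one corresponding to the regime where the eavesdropper's rate grows faster than the user's rate. The rest of the argument is straightforward calculus on a log of a rational function, and no probabilistic or asymptotic machinery beyond what is already supplied by Theorem \ref{prop:sec_rate_mul} is needed.
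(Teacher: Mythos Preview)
Your approach is essentially the same as the paper's: specialize to $q=0$, write $R_{\sec,\,{\rm asy}}$ as $\log_2$ of a rational function of $\gamma$, differentiate, and analyze the resulting quadratic in $\gamma$ to locate the unique positive root. The paper differentiates the ratio directly via the quotient rule, whereas you use logarithmic differentiation; both routes produce the same quadratic numerator, so this difference is purely cosmetic.

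One point the paper handles that you skip: it proves explicitly that $\tilde\theta_e\ge 0$ by expanding $\sum_i\eta_{ii}^0$ and using that $\mathbf{R}_{E,R}^0$ is Hermitian positive semidefinite (so its diagonal entries are nonnegative and $\mathbf{1}^H(\mathbf{R}_{E,R}^0)^2\mathbf{1}\ge 0$). This is what guarantees the leading coefficient $-(\theta_{b,p}+\theta_m)\theta_{b,p}\tilde\theta_e$ is nonpositive, which in turn fixes the concavity of the quadratic and hence the direction of monotonicity beyond $\gamma_{\rm th}$. Your argument implicitly relies on $\tilde\theta_e>0$ (both in the sign of the leading coefficient and in your $\gamma\to\infty$ sanity check, where $1+p\gamma\tilde\theta_e$ must stay positive), so you should state and justify it. Also, be careful with the sign bookkeeping in the middle of your sketch: the constant term of the quadratic is $\theta_m-\tilde\theta_e$ (not $\tilde\theta_e-\theta_m$), and with a negative leading coefficient the parabola is \emph{positive} between its roots, not negative; your final conclusion is nevertheless correct, as your endpoint checks confirm.
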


{\emph{Remark 3:}} Theorem \ref{prop:all_transmit} indicates that in the presence of an active eavesdropper,
the secrecy rate will decrease
for increasing SNR in the high SNR regime if all the available  power at the BS is allocated to the information-carrying
signals.
This behaviour is very different from that of multiple-antenna
systems with passive eavesdroppers \cite{Khisti2010TIT}, where
the secrecy rate is a monotonically increasing function of the SNR throughout the entire
SNR region even if AN is not employed.
However, for the case considered in this paper,
the BS relies on estimated CSI for  precoder design,
which is contaminated by the attack of the active eavesdropper.
Since the BS cannot distinguish the actual channel of
the desired user from the eavesdropper channel, the precoder may implicitly beamform the information signal towards
the eavesdropper.  As a result, the capacity of the eavesdropper may increase
faster than the achievable rate of the desired user as the transmit power increases.
Hence, it is advantageous if  a fraction of power is used to generate AN to degrade the eavesdropper's
ability to decode the transmit signal intended for the desired user.  This is the motivation behind
the MF-AN design. }

\subsection{NS Design} \label{sec:NS}
In order to obtain some insight for transmit signal design, in the following theorem,
we investigate the
asymptotic achievable secrecy rate under the assumption that the transmit correlation matrices
of the users are orthogonal to the transmit correlation matrices of the eavesdropper.
{\bl The notations used in the following theorem are summarized in Table \ref{tab:theorem 3},
where the subscript ``orth" in the variable names indicates that the transmit correlation matrices
of the users are assumed to be orthogonal to the transmit correlation matrices of the eavesdropper.}

\begin{table*}[!t]
\captionstyle{center}
\centering
{\bl
\caption{Notation used in Theorem \ref{prop:sec_orth}.}
\label{tab:theorem 3}
\begin{tabular}{|c|c|}
\hline
 Notation   &           Description      \\ \hline
 ${\rm SINR}_{0m,\,{\rm asy},\,\rm{orth}} $       &  \quad  Output SINR of the desired user in the asymptotic regime
 $N_t \rightarrow\infty$   \quad  \\
\hline
   ${\theta _{m,\rm{orth}}}$  &  \quad  Power of signal intended for the $m$th user in the $0$th cell received by the desired user    \quad  \\
\hline
   ${\theta _{b,p,\rm{orth}}}$  &  \quad   Power of  multi-user interference at the desired user   \quad  \\
\hline
$  {\theta _{b,q,\rm{orth}}}$  &  \quad  Power of the AN received at the desired user  \quad  \\
\hline
  $ \Lambda _{0m,\rm{orth}}^l $   &  \quad  Power  of signal intended for the $m$th user in the  $l$th cell  received at the desired user \quad  \\
\hline

\end{tabular} }
\vspace*{10pt}

\hrulefill
\end{table*}

\begin{theorem}\label{prop:sec_orth}
If $\sum\nolimits_{t = 0}^L  \tr\left( {{\bf{R}}_{tm}^l}{{\bf{R}}_{E,T}^{l}}\right) = 0$ for $l = 0,1,\cdots,L$,
the secrecy rate ${R_{\sec, \, \rm{asy}}}$ is identical to the achievable rate
of the desired user $R_{\sec, \, \rm{asy}, \, \rm{orth}} =
\log_2(1 + {\rm SINR}_{0m,\,\rm{asy},\,\rm{orth}})$,
where
 \begin{align}\label{eq:0m_orth}
{\rm SINR}_{0m,\,{\rm asy},\,\rm{orth}} = \frac{{ p \gamma {\theta _{m,\rm{orth}}}}}{{p\gamma {\theta _{b,p,\rm{orth}}} +  q\gamma {\theta _{b,q,\rm{orth}}} + 1}}
\end{align}
with
 \begin{multline}
{\theta _{m,\rm{orth}}}  = \tr\left( {\widehat {\bf{R}}_{0m,\rm{orth}}^0} \right) \\
+ \tr{\left( {\widehat {\bf{R}}_{0m,\rm{orth}}^0} \right)^{ - 1}}
\tr\left( {\left({{\bf{R}}_{0m}^0 - \widehat {\bf{R}}_{0m,\rm{orth}}^0} \right)\widehat {\bf{R}}_{0m,\rm{orth}}^0} \right) \\
\end{multline}
 \begin{multline}
{\theta _{b,p,\rm{orth}}}  =   \sum\limits_{l = 0}^L {\sum\limits_{k = 1,k \ne m}^K {\tr{{\left( {\widehat {\bf{R}}_{lk}^l} \right)
}^{ - 1}}\tr\left( {{\bf{R}}_{0m}^l\widehat {\bf{R}}_{lk}^l} \right)}}  \\
+  \sum\limits_{l = 1}^L { \tr{{\left({\widehat {\bf{R}}_{lm,{\rm orth}}^l} \right)}^{ - 1}}  \Lambda _{0m,{\rm orth}}^l}
\end{multline}
 \begin{multline}
{\theta _{b,q,\rm{orth}}}   =  \sum\limits_{l = 0}^L {\tr\left( {{\bf{R}}_{0m}^l} \right)}  \\ -  \sum\limits_{l = 0}^L \sum\limits_{k = 1,k \ne m}^K {\tr{{\left( {\widehat {\bf{R}}_{lk}^l} \right)}^{ - 1}}\tr\left( {{\bf{R}}_{0m}^l\widehat {\bf{R}}_{lk}^l} \right)} \\
 - \sum\limits_{l = 0}^L {\tr{{\left( {\widehat {\bf{R}}_{lm,\rm{orth}}^l} \right)}^{ - 1}}\Lambda _{0m,\rm{orth}}^l }
\end{multline}
 \begin{multline}
 \Lambda _{0m,\rm{orth}}^l  = {\tau ^2}{P_{0m}}{\left| {\tr\left( {{\bf{C}}_{lm,\rm{orth}}^l{\bf{R}}_{0m}^l} \right)} \right|^2} \\
  + {\tau ^2}\sum\limits_{t = 1}^L {{P_{tm}}\tr\left( {{\bf{R}}_{0m}^l{\bf{C}}_{lm,\rm{orth}}^l{\bf{R}}_{tm}^l {{\left( {{\bf{C}}_{lm,\rm{orth}}^l} \right)}^H}} \right)}   \\
  +   \tau {N_0}\tr\left( {{\bf{R}}_{0m}^l{\bf{C}}_{lm,\rm{orth}}^l{{\left( {{\bf{C}}_{lm,\rm{orth}}^l} \right)}^H}} \right)
\end{multline}
\begin{align}\label{eq:Cllm}
 {\bf{C}}_{lm,\rm{orth}}^l     = \sqrt {{P_{lm}}} {\bf{R}}_{lm}^l{\left( {{N_0}{{\bf{I}}_{{N_t} }}
  +  \tau {\sum\limits_{t = 0}^L {{P_{tm}}{\bf{R}}_{tm}^l} } } \right)^{ - 1}}
\end{align}
\begin{align}
 \widehat {\bf{R}}_{lm,\rm{orth}}^l    = {P_{lm}}\tau {\bf{R}}_{lm}^l{\left( {{N_0}{{\bf{I}}_{{N_t}}} + \tau \sum\limits_{t = 0}^L {{P_{tm}}{\bf{R}}_{tm}^l}  } \right)^{ - 1}}{\bf{R}}_{lm}^l.
\end{align}
\begin{proof}
Please refer to Appendix \ref{proof:prop:sec_orth}.
\end{proof}
\end{theorem}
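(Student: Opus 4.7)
The plan is to specialize the general asymptotic expressions of Theorem \ref{prop:sec_rate_mul} under the orthogonality hypothesis and show that (i) the eavesdropper's asymptotic SINR collapses to zero, and (ii) every user-side quantity reduces to its ``orth'' counterpart.

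First, I would translate the trace orthogonality condition into a structural statement about the correlation matrices. Since ${\bf{R}}_{tm}^l$ and ${\bf{R}}_{E,T}^l$ are both positive semidefinite and $\sum_{t=0}^L \tr({\bf{R}}_{tm}^l {\bf{R}}_{E,T}^l)=0$ with every summand nonnegative, each individual term must vanish. Using the spectral decomposition of ${\bf{R}}_{tm}^l$ one obtains ${\bf{R}}_{tm}^l {\bf{R}}_{E,T}^l = {\bf{R}}_{E,T}^l {\bf{R}}_{tm}^l = {\bf 0}$ for every $t$ and $l$; that is, the range of ${\bf{R}}_{tm}^l$ is orthogonal to the range of ${\bf{R}}_{E,T}^l$.

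Next, I would exploit this orthogonality to simplify ${\bf{C}}_{lm}^l$ in (\ref{eq:Clm}) and $\widehat{\bf{R}}_{lm}^l$ in (\ref{eq:Rnmn_est}). Choose a basis aligned with $U=\mathrm{range}(\sum_t{\bf{R}}_{tm}^l)$, $V=\mathrm{range}({\bf{R}}_{E,T}^l)$, and $W=(U\oplus V)^\perp$. In this basis the matrix inside the inverse in (\ref{eq:Clm}) is block diagonal, with the eavesdropper contribution $\tau P_E r_{E,R}^l {\bf{R}}_{E,T}^l$ supported only in the $V$ block. Since ${\bf{R}}_{lm}^l$ is supported only in the $U$ block, left-multiplication by ${\bf{R}}_{lm}^l$ isolates the $U$ block of the inverse, yielding ${\bf{C}}_{lm}^l={\bf{C}}_{lm,\rm{orth}}^l$ as in (\ref{eq:Cllm}) and analogously $\widehat{\bf{R}}_{lm}^l=\widehat{\bf{R}}_{lm,\rm{orth}}^l$. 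The same block structure also gives ${\bf{C}}_{lm}^l {\bf{R}}_{E,T}^l = {\bf{R}}_{E,T}^l {\bf{C}}_{lm}^l = {\bf 0}$, which is the key identity for the remaining steps.

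Armed with this identity, I would inspect $\eta_{ij}^l$ in (\ref{eq:eta_prop}) term by term: each of the three summands contains either ${\bf{R}}_{E,T}^l {\bf{C}}_{lm}^l$, $\tr({\bf{C}}_{lm}^l {\bf{R}}_{E,T}^l)$, or ${\bf{R}}_{E,T}^l {\bf{C}}_{lm}^l ({\bf{C}}_{lm}^l)^H$, so every $\eta_{ij}^l$ vanishes. Consequently $\theta_e=0$, which forces $\mathrm{SINR}_{\rm eve,\,asy}=0$ and hence $\log_2(1+\mathrm{SINR}_{\rm eve,\,asy})=0$. On the user side, the only term in $\Lambda_{0m}^l$ that differs from $\Lambda_{0m,\rm{orth}}^l$ is the eavesdropper-dependent summand $\tau^2 P_E r_{E,R}^0 \tr({\bf{R}}_{0m}^l {\bf{C}}_{lm}^l {\bf{R}}_{E,T}^l ({\bf{C}}_{lm}^l)^H)$, which vanishes by the same identity; combined with the reductions $\widehat{\bf{R}}_{lm}^l=\widehat{\bf{R}}_{lm,\rm{orth}}^l$ and ${\bf{C}}_{lm}^l={\bf{C}}_{lm,\rm{orth}}^l$, each of $\theta_m$, $\theta_{b,p}$, $\theta_{b,q}$ coincides with its ``orth'' analogue, so $\mathrm{SINR}_{0m,\,\rm{asy}}=\mathrm{SINR}_{0m,\,\rm{asy},\,\rm{orth}}$.

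The main obstacle I anticipate is not any single algebraic manipulation but rather justifying carefully that the eavesdropper contribution really drops out of the inverse defining ${\bf{C}}_{lm}^l$; the clean block-diagonalization argument above is the natural way to avoid a more cumbersome matrix-inversion-lemma calculation. Once that reduction is in place, substituting into (\ref{eq:asy_sec})--(\ref{eq:e}) and taking the positive part yields exactly $R_{\sec,\,\rm{asy},\,\rm{orth}}=\log_2(1+\mathrm{SINR}_{0m,\,\rm{asy},\,\rm{orth}})$, completing the proof.
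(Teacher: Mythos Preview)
Your proposal is correct and follows the same overall architecture as the paper: establish that the trace hypothesis forces ${\bf R}_{tm}^l{\bf R}_{E,T}^l={\bf 0}$, use this to reduce ${\bf C}_{lm}^l$ and $\widehat{\bf R}_{lm}^l$ to their ``orth'' versions, deduce ${\bf C}_{lm}^l{\bf R}_{E,T}^l={\bf 0}$, and then verify $\eta_{ij}^l=0$ and $\Lambda_{0m}^l=\Lambda_{0m,\rm orth}^l$ term by term before substituting back into Theorem~\ref{prop:sec_rate_mul}.

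The one genuine methodological difference is precisely the step you flagged as the main obstacle. To show the eavesdropper term drops out of the inverse in (\ref{eq:Clm}), the paper does \emph{not} block-diagonalize; instead it factors
\[
N_0{\bf I}_{N_t}+\tau\Bigl(\sum_{t=0}^L P_{tm}{\bf R}_{tm}^l+P_E r_{E,R}^l{\bf R}_{E,T}^l\Bigr)
=\Bigl(\sqrt{N_0}{\bf I}_{N_t}+\tfrac{\tau}{\sqrt{N_0}}\sum_t P_{tm}{\bf R}_{tm}^l\Bigr)\Bigl(\sqrt{N_0}{\bf I}_{N_t}+\tfrac{\tau P_E r_{E,R}^l}{\sqrt{N_0}}{\bf R}_{E,T}^l\Bigr),
\]
valid because the cross term vanishes under orthogonality, and then applies the matrix inversion lemma to the second factor so that left-multiplication by ${\bf R}_{lm}^l$ kills the ${\bf R}_{E,T}^l$ contribution. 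Your block-diagonalization in the basis adapted to $U\oplus V\oplus W$ achieves the same end more conceptually and without the somewhat ad~hoc $\sqrt{N_0}$ splitting; the paper's route, by contrast, is a direct computation that does not require introducing a new basis. Either argument is perfectly adequate here.
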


Theorem \ref{prop:sec_orth} reveals that when
the channels of the eavesdropper  and the users are statistically orthogonal, the pilot contamination attack has no impact on the secrecy rate.
It is known that for typical massive MIMO scenarios, the transmit correlation matrices of the channels are
low rank \cite{Yin2013JSAC,Adhikary2013TIT,Yin2014JSTSP,Nam2014JSTSP,Adhikary2014JSAC,Sun2015TCOM,Shen2015CL}. Hence, inspired by Theorem \ref{prop:sec_orth},
in the remainder of this subsection, we introduce an NS based design where the information-carrying signal is transmitted in
the NS of the transmit correlation matrix of the eavesdropper's channel $\mathbf{H}_{E}^{l}$.

Assume the rank of $\mathbf{R}_{E,T}^l$ is $T_l$, $l = 0,1,\cdots,L$.
Let us construct a matrix ${{\bf{V}}_E^l} \in {\mathbb{C}^{{N_t} \times M_l}}$,
whose $M_l$ columns are the $M_l$ eigenvectors corresponding to the zero eigenvalues of $\mathbf{R}_{E,T}^l$, where $M_l = N_t - T_l$.
In the uplink training phase, we multiply ${\bf{Y}}_{l}$  with ${{\bf{V}}_E^l}$ to obtain\footnote{The $m$th user in the $l$th cell ($l = 1,2,\cdots,L$) is also affected by the pilot sequence
sent by the active eavesdropper in the reference cell. As a result, we assume that the $l$th BS also employs the NS design to combat the pilot contamination
attack.}
 \begin{multline} \label{eq:y_0_multi_null}
 \left({{\bf{V}}_E^l}\right)^H  {{\bf{Y}}_l}  =   \sum\limits_{k = 1}^K  \! {\sqrt {{P_{lk}}}   \left({{\bf{V}}_E^l}\right)^H{\bf{h}}_{lk}^l} {\boldsymbol{\omega }}^T_{k} \\ + \sum\limits_{l = 1}^L  \sum\limits_{k = 1}^K  \sqrt {{P_{lk}}} \left({{\bf{V}}_E^l}\right)^H{\bf{h}}_{lk}^l {\boldsymbol{\omega }}_{k}^T  \\ + \sqrt {{P_E}} \left({{\bf{V}}_E^l}\right)^H
{\bf{{H}}}_{E}^l {\bl \mathbf P_e} {\bf{W}}_e +  \left({{\bf{V}}_E^l}\right)^H {\bf{N}}
\end{multline}
and
 \begin{multline}\label{eq:y_uplink_null}
 {{\bf{y}}_{l,\rm null}}  = {\rm vec}\left( { \left({{\bf{V}}_E^l}\right)^H  {\bf{Y}}_l} \right) = \sum\limits_{k = 1}^K   {\sqrt {{P_{lk}}} \left( {{{\boldsymbol{\omega }}_{k}} \otimes {{\bf{I}}_{{N_t}}}} \right){\bf{h}}_{lk,\rm{null}}^l}  \\
  +  \sum\limits_{l = 1}^L  {\sum\limits_{k = 1}^K   {\sqrt {{P_{lk}}} \left( {{{\boldsymbol{\omega }}_{k}} \otimes {{\bf{I}}_{{N_t}}}} \right){\bf{h}}_{lk,\rm{null}}^l} }  \\
  + \sqrt {\frac{{{P_E}}}{{{N_e}}}}  \left( {{{\boldsymbol{\omega }}_{m}} \otimes {{\bf{I}}_{{N_t}}}} \right) \sum\limits_{r = 1}^{{N_e}} {\bl {{\bf{h}}_{{\rm eff},r,\rm{null}}^l}} + {\bf{\widetilde{n}}}_{\rm null}^{l},
\end{multline}
where
${\bf{h}}_{lk,\rm{null}}^l = \left({{\bf{V}}_E^l}\right)^H {\bf{h}}_{lk}^l  \sim \mathcal{CN}\left( { \mathbf{0},{{\bf{R}}_{lk,\rm{null}}^l}} \right)$,
 ${{\bf{R}}_{lk,\rm{null}}^l} = \left({{\bf{V}}_E^l}\right)^H {{\bf{R}}_{lk}^l} {{\bf{V}}_E^l} \in \mathbb{C}{^{{M}_l \times {M}_l }} $,
$l = 0,1,\cdots,L$, $k = 1,2,\cdots,K$, and $\widetilde {\bf{n}}_{\rm null}^{l} \sim \mathcal{CN}\left( {\mathbf{0}, \tau N_0 \mathbf{I}_{M_l}} \right)$.
${\bl {{\bf{h}}_{{\rm eff},r,\rm{null}}^l}}$ is the $r$th column of  matrix  $\left({{\bf{V}}_E^l}\right)^H
{\bf{{H}}}_{E}^l {\bl \mathbf{P}_e}$.
We obtain an estimate for $\mathbf{h}_{lm,\rm{null}}^l$ as
\begin{multline} \label{eq:mmse_h0m_null}
{\widehat {\bf{h}}_{lm,\rm{null}}^l} =
\sqrt {{P_{lm}}} {\bf{R}}_{lm,\rm{null}}^l  \\  \times {\left({{N_0}{{\bf{I}}_{{M}}}  +  \tau  {\sum\limits_{t = 0}^L
{{P_{tm}}{\bf{R}}_{tm,\rm{null}}^l} }} \right)^{ - 1}} {\widetilde {\bf{y}}_{lm,\rm{null}}},
\end{multline}
where
\begin{multline} \label{eq:y_0m_null}
{\widetilde {\bf{y}}_{lm,\rm{null}}} = \sqrt {{P_{lm}}} \tau {\bf{h}}_{lm,\rm{null}}^l + \sum\limits_{t = 1}^L {\sqrt {{P_{tm}}} \tau {\bf{h}}_{tm,\rm{null}}^l}
\\ + \sqrt {\frac{{{P_E}}}{{{N_e}}}} \sum\limits_{r = 1}^{{N_e}} {\bl {{\bf{h}}_{{\rm eff},r,\rm{null}}^l}}  + {\left( {{{\boldsymbol{\omega }}_m} \otimes {{\bf{I}}_{{M}}}} \right)^H}{\bf{\widetilde{n}}_{\rm null}}^l.
\end{multline}

The actual channel vector can be expressed as ${\bf{h}}_{lm,\rm{null}}^l = {\bf{\widehat h}}^l_{lm,\rm{null}} + {\bf{e}}_{lm,\rm{null}}^l$, where
${\bf{\widehat h}}^l_{lm} \sim \mathcal{CN} \left({\mathbf{0}, {\widehat {\bf{R}}_{lm,\rm{null}}^l} }\right)$ and ${\bf{e}}^l_{lm,\rm{null}} \sim \mathcal{CN} \left({\mathbf{0}, }{{{\bf{R}}_{lm,\rm{null}}^l} - {\widehat {\bf{R}}_{lm,\rm{null}}^l} }\right)$ with
 \begin{multline} \label{eq:R0m0_est_null}
\widehat {\bf{R}}_{lm,\rm{null}}^l = {P_{lm}}\tau {\bf{R}}_{lm,\rm{null}}^l \\
\times
{\left( {N_0}{{\bf{I}}_{{M}}}  +  \tau  {\sum\limits_{t = 0}^L {{P_{tm}}{\bf{R}}_{tm,\rm{null}}^l}} \right)^{ - 1}}
{\bf{R}}_{lm,\rm{null}}^l.
\end{multline}

In the downlink data transmission phase, we employ the same transmission scheme as in Section II-B, but replace $\mathbf{\bf{w}}_{lm}$
with ${{\bf{w}}_{lm,\rm{null}}} =  \mathbf{V}_E^l \frac{{\widehat {\bf{h}}_{lm,\rm{null}}^l}}{{\left\| {\widehat {\bf{h}}_{lm,\rm{null}}^l} \right\|}}$,
and set $p = 1/K$, $q = 0$.

{\emph{Remark 4:}} The proposed NS design transmits the signal
in the orthogonal subspace $\mathbf{V}_E^{0}$.  As a result, the
performance of the NS design depends on the rank of $\mathbf{V}_E^{0}$.
For instance, for the extreme case of i.i.d. fading,  $\mathbf{V}_E^{0}$ does not exist
and hence  the NS design is not applicable. In practice,  the NS design will be beneficial in highly correlated channels, for a strong pilot contamination attack, and in the high SNR
regime since it can effectively degrade the eavesdropper's performance in these scenarios.
In contrast, we expect the MF-AN design
to outperform the  NS design in weakly correlated channels, for a weak pilot contamination attack,
and in the low SNR regime.

\subsection{Unified Design}
Considering Remark 4, we propose a unified design that exploits the advantages of both the conventional MF-AN design
and the NS design. The corresponding transmit signal in the $l$th cell, $l = 0,1,\cdots,L$, is given by
\begin{multline}\label{eq:unified}
{{\bf{x}}_l} = \sqrt P  \left({ \sqrt{\alpha} \left(\sqrt {p} \sum\limits_{k = 1}^K {{{\bf{w}}_{lk}}{s_{lk}}}  + \sqrt { q} {{\bf{U}}_{\rm{null},l}}{{\bf{z}}_l} \right)  } \right.\\
 + \left.{\sqrt {\frac{\beta }{K}} \left( {{{\bf{w}}_{lm,\rm{null}}}{s_{lk}} + \sum\limits_{k = 1,k \ne m}^K {{{\bf{w}}_{lk}}{s_{lk}}} } \right)} \right),
\end{multline}
where $\alpha$ and $\beta$ denote the weights of
the MF-AN design and the NS design with $\alpha + \beta = 1$.
Using the secrecy rate
in (\ref{eq:R_sec}) as the cost function, the optimal $\alpha$ and $\beta$ can be obtained from a
one-dimensional numerical search.

\section{The Single-Antenna Eavesdropper Case} \label{sec:single-antenna}
In this section, we analyze the performance of massive MIMO systems
for the case of an active single-antenna eavesdropper \cite{Zhou2012TWC,Im2013,Kapetanovic2013}.
For the MF-AN design, we derive closed-form expressions for the optimal power allocation policy
for the transmit signal and the AN as well as the minimum transmit signal power required
to ensure secure transmission.
Also, we derive a closed-form expression for a threshold that can be used to optimally switch
between the MF-AN and the NS design.  Finally, we investigate the single-cell single-user
massive MIMO system.

\subsection{Multi-cell Multi-user Case}
{\bl We obtain a closed-form solution for the optimal power allocation for the MF-AN design
in the following theorem.  The notation used in the theorem is summarized in Table \ref{tab:theorem 4}}.
\begin{table*}[!t]

\captionstyle{center}
\centering
{\bl
\caption{Notations used in Theorem \ref{prop:opt_power}.}
\label{tab:theorem 4}
\begin{tabular}{|c|c|}
\hline
 Notation   &           Description       \\ \hline
 $a_1$, $b_1$, $c_1$      &   \quad  ${a_1}{p^2} + {b_1}p + {c_1}$ is the numerator of the ratio between  $1 + {\rm SINR}_{0m,\,{\rm asy}}$ and $1 + {\rm SINR}_{\rm eve,\, asy}$  when $N_e = 1$ \quad  \\
\hline
  $a_2$, $b_2$, $c_2$    &  \quad ${a_2}{p^2} + {b_2}p + {c_2}$ is the denominator of the ratio between  $1 + {\rm SINR}_{0m,\,{\rm asy}}$ and $1 + {\rm SINR}_{\rm eve,\, asy}$  when $N_e = 1$   \quad  \\
\hline
   ${\theta _{e,e}}$  &  \quad  Power of the signal intended for the $m$th user in the $0$th cell received at the eavesdropper  when $N_e = 1$  \quad  \\
\hline
   ${\theta _{e,q}} $  &  \quad  Power of AN received at the eavesdropper  when $N_e = 1$  \quad  \\
\hline
   $\Lambda _E^l $  &  \quad  Power of the signal intended for the $m$th user in the $l$th cell received at the eavesdropper  when $N_e = 1$   \quad  \\
\hline
\end{tabular} }
\vspace*{10pt}

\hrulefill
\end{table*}

\begin{theorem} \label{prop:opt_power}
Let us define
\begin{multline}\label{eq:p1}
{p_1} = -\frac{{\left( {{a_1}{c_2}   -   {a_2}{c_1}} \right)}}  { {{a_1}{b_2} - {a_2}{b_1}}} \\
- \frac{{ \sqrt {{{\left( {{a_1}{c_2}  -  {a_2}{c_1}} \right)}^2}   -  \left(  {{a_1}{b_2} - {a_2}{b_1}} \right) \left( {{b_1} - {b_2}} \right){c_1}} }} { {{a_1}{b_2} - {a_2}{b_1}}}
 \end{multline}
 and
\begin{multline}\label{eq:p2}
{p_2} = -\frac{{\left( {{a_1}{c_2}   -   {a_2}{c_1}} \right)}}  { {{a_1}{b_2} - {a_2}{b_1}}}  \\
+ \frac{{\sqrt {{{\left( {{a_1}{c_2}  -  {a_2}{c_1}} \right)}^2}   -  \left( {{a_1}{b_2} - {a_2}{b_1}} \right)\left( {{b_1} - {b_2}} \right){c_1}} }}{{{{a_1}{b_2} - {a_2}{b_1}}}},
 \end{multline}
 where
\begin{align}\label{eq:a1b1c1}
{a_1} = -{\gamma ^2}\left( {\left( {{N_t} - K} \right){\theta _m} - \left( {{N_t} - K} \right){\theta _{b,p}} - K{\theta _{b,q}}} \right){K{\theta _{e,q}}}
\end{align}
\begin{multline}
{b_1} = \gamma \left( {\left( {{N_t} - K} \right){\theta _m} + \left( {{N_t} - K} \right){\theta _{b,p}} - K{\theta _{b,q}}} \right) \\
  \times \left( {\gamma {\theta _{e,q}} + {N_t} - K} \right) \\ + \gamma \left( {\gamma {\theta _{b,q}} + {N_t} - K} \right) \left( {\left( {{N_t} - K} \right){\theta _{e,q}} - K{\theta _{e,q}}} \right)
 \end{multline}
\begin{align}
{c_1}    = \left( {\gamma {\theta _{b,q}} + {N_t} - K} \right)\left( {\gamma {\theta _{e,q}} + {N_t} - K} \right)
\end{align}
\begin{align}\label{eq:a2b2c2}
 {a_{2}}   = {\gamma ^2}\left( {\left( {{N_t} - K} \right){\theta _{b,p}} - K{\theta _{b,q}}} \right)  \left( {\left( {{N_t} - K} \right){\theta _{e,e}}  - K{\theta _{e,q}}} \right)
 \end{align}
 \begin{multline}
 {b_{2}}   = \gamma \left( {\gamma {\theta _{b,q}} + {N_t} - K} \right)  \left( {\left( {{N_t} - K} \right){\theta _{e,e}}  - K{\theta _{e,q}}} \right) \\
\\  + \gamma \left( \left( {{N_t} - K} \right) {\theta _{b,p}} - K{\theta _{b,q}} \right) \left( {\gamma {\theta _{e,q}} + {N_t} - K} \right)
  \end{multline}
\begin{align}
{c_{2}} = \left( {\gamma {\theta _{b,q}} + {N_t} - K} \right)\left( {\gamma {\theta _{e,q}} + {N_t} - K} \right),
\end{align}
 with
\begin{align}
{\theta _{e,e}}  = \frac{\Lambda _E^0}{\tr\left( {\widehat {\bf{R}}_{0m}^0} \right)}, \label{eq:theta_ee}
\end{align}
\begin{multline}
 {\theta _{e,q}}  =  \sum\limits_{l = 0}^L \tr\left( {{\bf{R}}_{E,T}^l} \right) - \sum\limits_{l = 0}^L {\sum\limits_{k = 1,k \ne m}^K {\tr{{\left( {\widehat {\bf{R}}_{lk}^l} \right)}^{ - 1}}\tr\left( {{\bf{R}}_{E,T}^l\widehat {\bf{R}}_{lk}^l} \right)} } \\
  - \sum\limits_{l = 0}^L {\tr{{\left( {\widehat {\bf{R}}_{lm}^l} \right)}^{ - 1}}\Lambda _E^l}
\end{multline}
and
  \begin{multline}
\Lambda _E^l =  {\tau ^2}{P_E}{\left| {\tr\left( {{\bf{C}}_{lm}^l{\bf{R}}_{E,T}^l} \right)} \right|^2}  + \tau {N_0}\tr\left( {{\bf{R}}_{E,T}^l{\bf{C}}_{lm}^l{{\left( {{\bf{C}}_{lm}^l} \right)}^H}} \right) \\
 + {\tau ^2} \sum\limits_{t = 0}^L {{P_{tm}}\tr\left( {{\bf{R}}_{E,T}^l{\bf{C}}_{lm}^l{\bf{R}}_{tm}^l{{\left( {{\bf{C}}_{lm}^l} \right)}^H}} \right)}.
\end{multline}
 Then, the optimal power allocation $p$ maximizing the asymptotic achievable secrecy rate in (\ref{eq:asy_sec}) for $N_e = 1$
 is given in (\ref{eq:p_opt}) at the top of the next page,  where ${R_{\sec ,\, {\rm asy}}}\left( p\right)$ is defined in (\ref{eq:asy_sec}).
  \begin{figure*}[!ht]
 \begin{align}\label{eq:p_opt}
 p^{*} = \left\{ \begin{array}{lll}
 1, &{ {\rm if}}&    {p_1} \notin \left[ {0,1} \right],{p_2} \notin \left[ {0,1} \right], \\
\arg \mathop { \max } \left\{ {R_{\sec ,\, {\rm asy}}}\left( 1\right),  {R_{\sec ,\, {\rm asy}}}\left( p_1\right) \right\},  &{ {\rm if}}&    {p_1} \in \left[ {0,1} \right],{p_2} \notin \left[ {0,1} \right] , \\
\arg \mathop {\max}  \left\{ {R_{\sec ,\, {\rm asy}}}\left( 1\right),  {R_{\sec ,\, {\rm asy}}}\left( p_1\right) \right\},  &{ {\rm if}}&   {p_1}
\notin \left[ {0,1} \right],{p_2} \in \left[ {0,1} \right], \\
\arg \mathop {\max}  \left\{ {R_{\sec ,\, {\rm asy}}}\left( 1\right),  {R_{\sec ,\, {\rm asy}}}\left( p_1\right), {R_{\sec ,\, {\rm asy}}}\left( p_2\right) \right\},  &{ {\rm if}}&   {p_1} \in \left[ {0,1} \right], {p_2} \in \left[ {0,1} \right],
\end{array} \right.
 \end{align}
 \hrulefill
\vspace*{4pt}
\end{figure*}
\begin{proof}
Please refer to Appendix \ref{proof:prop:opt_power}.
\end{proof}
\end{theorem}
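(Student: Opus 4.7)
The plan is to reduce the maximization of $R_{\sec,\,{\rm asy}}(p)$ to optimizing a ratio of two quadratic polynomials in $p$, whose stationarity condition is itself a quadratic equation solvable by the quadratic formula. First, I would specialize Theorem~\ref{prop:sec_rate_mul} to $N_e=1$. In this case $\mathbf{Q}_{\rm asy}$ and the $\eta_{ij}^{l}$'s in (\ref{eq:eta_prop}) collapse to scalars, so $\theta_e$ in (\ref{eq:e}) simplifies to the form $\theta_{e,e}/(1+q\gamma\theta_{e,q})$ with the precise constants given in the statement. The key structural consequence is that the denominator of $1+{\rm SINR}_{\rm eve,\,asy}$ becomes affine in $q$, matching the already affine-in-$q$ structure of $1+{\rm SINR}_{0m,\,{\rm asy}}$ in (\ref{eq:b}).

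Next, I would eliminate $q$ through the downlink power constraint $Kp+(N_t-K)q=1$, turning both $1+{\rm SINR}_{0m,\,{\rm asy}}$ and $1+{\rm SINR}_{\rm eve,\,asy}$ into rational functions of $p$ alone. Since each is originally a bilinear form in $(p,q)$ divided by an affine form in $(p,q)$, substitution and clearing the common factor $(N_t-K)$ from numerator and denominator yields quadratics in $p$. Because the logarithm is monotone, maximizing $R_{\sec,\,{\rm asy}}$ reduces to maximizing
\[
f(p)=\frac{a_1p^2+b_1p+c_1}{a_2p^2+b_2p+c_2},
\]
and I would verify by direct expansion that the coefficients are precisely those in (\ref{eq:a1b1c1})--(\ref{eq:a2b2c2}). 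A useful sanity check is that $c_1=c_2$: at $p=0$ no signal is transmitted and both rates vanish, so $f(0)=1$, which also explains why the discriminant in (\ref{eq:p1})--(\ref{eq:p2}) contains the factor $(b_1-b_2)c_1$ instead of the generic $b_1c_2-b_2c_1$.

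The condition $f'(p)=0$ expands, after cancellation of the $p^3$ term, into $(a_1b_2-a_2b_1)p^{2}+2(a_1c_2-a_2c_1)p+(b_1c_2-b_2c_1)=0$; invoking $c_1=c_2$ and the quadratic formula yields exactly the two roots $p_1$ and $p_2$ in (\ref{eq:p1})--(\ref{eq:p2}). Together with the upper boundary point $p=1$ of the feasible interval, these furnish at most three candidate maximizers, and the case split in (\ref{eq:p_opt}) simply records which interior critical points actually lie in $[0,1]$ and then selects the best among the valid candidates by direct evaluation of $R_{\sec,\,{\rm asy}}$. Verifying that the remaining two stationary points (when both are feasible) are indeed the candidate maxima, while the interior minimum is implicitly discarded by the $\arg\max$, does not require a separate second-derivative test thanks to the explicit comparison of $R_{\sec,\,{\rm asy}}$ values.

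The main obstacle is the bookkeeping in the second step: algebraically confirming that after substituting $q=(1-Kp)/(N_t-K)$ into the denominators of $1+{\rm SINR}_{0m,\,{\rm asy}}$ and $1+{\rm SINR}_{\rm eve,\,asy}$, multiplying out, and grouping like powers of $p$, the six coefficients collapse to the compact expressions in the theorem. The calculation is routine but error-prone because $\theta_m$, $\theta_{b,p}$, $\theta_{b,q}$, $\theta_{e,e}$, and $\theta_{e,q}$ all contribute mixed $p\gamma$ and $q\gamma$ terms, and the factor $(N_t-K)$ gets absorbed asymmetrically into each coefficient; careful tracking of this factor and of the symmetric structure that forces $c_1=c_2$ is what ultimately makes the closed-form solution (\ref{eq:p_opt}) possible.
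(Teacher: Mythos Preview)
Your proposal is correct and follows essentially the same route as the paper: specialize Theorem~\ref{prop:sec_rate_mul} to $N_e=1$ so that ${\rm SINR}_{\rm eve,\,asy}=p\gamma\theta_{e,e}/(q\gamma\theta_{e,q}+1)$, rewrite $(1+{\rm SINR}_{0m,\,{\rm asy}})/(1+{\rm SINR}_{\rm eve,\,asy})$ as a ratio of quadratics in $p$ after eliminating $q$ via the power constraint, differentiate, and solve the resulting quadratic. Your additional observation that $c_1=c_2$ (hence $b_1c_2-b_2c_1=(b_1-b_2)c_1$ in the discriminant) is a nice structural check that the paper leaves implicit.
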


{\bl A fundamental question in secure communication is  under which conditions a positive secrecy rate is achievable.
For massive MIMO systems under the pilot contamination attack, this question
is answered in the following theorem. }

\begin{theorem} \label{prop:sec_cond}
For $N_e$ = 1, to achieve ${R_{\sec ,\, {\rm asy}}} > 0$ in (\ref{eq:asy_sec}), the power allocated to the transmit signal must satisfy\footnote{When ${{a_1} - {a_2}} = 0$, if $b_1 - b_2 > 0$, then secure transmission can be achieved for
any $p$; otherwise, secure transmission cannot be achieved regardless of the value of $p$.}:
 \begin{align}
p & > - \frac{{ {{b_1} - {b_2}} }}{{ {{a_1} - {a_2}} }},  \, {\rm if} \,  {{a_1} - {a_2}}  > 0 \label{eq:sec_cond}  \\
 p & <  - \frac{{ {{b_1} - {b_2}} }}{{ {{a_1} - {a_2}} }},    \,  {\rm if} \  {{a_1} - {a_2}}  < 0.  \label{eq:sec_cond_1}
 \end{align}
  \begin{proof}
  Eqs. (\ref{eq:sec_cond}) and (\ref{eq:sec_cond_1}) can be  obtained by following a similar approach as in Appendix  \ref{proof:prop:opt_power} and
  finding the solution to $\frac{{1 + {\rm SINR}_{0m,\,{\rm asy}}}} {{ 1 + {\rm SINR}_{\rm eve,\,{\rm asy}} }} > 1$.
  \end{proof}
\end{theorem}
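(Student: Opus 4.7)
The plan is to reduce the positive-secrecy condition $R_{\sec,\,{\rm asy}} > 0$ to a scalar polynomial inequality in $p$ by reusing the rational-function representation already established in Theorem \ref{prop:opt_power}. From the definition of the asymptotic secrecy rate in (\ref{eq:asy_sec}), a strictly positive $R_{\sec,\,{\rm asy}}$ is equivalent to the ratio
\begin{equation*}
\frac{1+{\rm SINR}_{0m,\,{\rm asy}}}{1+{\rm SINR}_{\rm eve,\,{\rm asy}}} > 1,
\end{equation*}
so I would simply transfer the quadratic numerator and denominator $a_1 p^2 + b_1 p + c_1$ and $a_2 p^2 + b_2 p + c_2$ identified in Theorem \ref{prop:opt_power} (specialized to $N_e=1$) and rewrite the inequality as
\begin{equation*}
(a_1-a_2)p^2 + (b_1-b_2)p + (c_1-c_2) > 0.
\end{equation*}

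The crucial simplification is to notice that $c_1$ and $c_2$ in (\ref{eq:a1b1c1})--(\ref{eq:a2b2c2}) coincide, both being $(\gamma\theta_{b,q}+N_t-K)(\gamma\theta_{e,q}+N_t-K)$. Hence the constant term vanishes and the inequality factors as
\begin{equation*}
p\bigl[(a_1-a_2)p + (b_1-b_2)\bigr] > 0.
\end{equation*}
Since $p\in[0,1/K]$ is nonnegative (and $p=0$ trivially gives no secure signal), I would restrict attention to $p>0$ and conclude that the sign condition reduces to $(a_1-a_2)p + (b_1-b_2) > 0$. A standard case split on the sign of $a_1-a_2$ then yields the two branches (\ref{eq:sec_cond}) and (\ref{eq:sec_cond_1}) of the theorem.

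Finally, I would address the degenerate case $a_1-a_2=0$ separately to justify the footnote: the inequality collapses to $(b_1-b_2)p>0$, which holds for all $p>0$ when $b_1-b_2>0$ and for no admissible $p$ when $b_1-b_2<0$. Beyond bookkeeping, there is essentially no technical obstacle here because the heavy lifting --- the closed-form polynomial representation of the SINR ratio --- was already carried out in the derivation supporting Theorem \ref{prop:opt_power}; the main thing to be careful about is verifying the vanishing of the constant term $c_1-c_2$ from the explicit expressions, since without it one would be facing a genuine quadratic rather than a linear threshold condition.
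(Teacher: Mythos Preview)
Your proposal is correct and follows essentially the same approach as the paper, which simply refers back to the rational-function representation from Appendix~\ref{proof:prop:opt_power} and solves the inequality $\frac{1+{\rm SINR}_{0m,\,{\rm asy}}}{1+{\rm SINR}_{\rm eve,\,{\rm asy}}}>1$. In fact you supply more detail than the paper does: the key observation that $c_1=c_2$, which collapses the quadratic to the linear condition $(a_1-a_2)p+(b_1-b_2)>0$, is left implicit in the paper's one-line proof but is exactly what is needed to arrive at the stated threshold.
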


In the following, we derive a threshold which allows us to determine whether the MF-AN design
or the NS design achieves a higher secrecy rate.  Before proceeding, we introduce some definitions.  Define
${a_3} = q^*{\theta _{e,q}}\left( {p^*{\theta _{b,p}} + q^*{\theta _{b,q}} }  + p^*{\theta _m}\right)$,
${b_3} = q^*{\theta _{e,q}} + p^*{\theta _{b,p}} + q^*{\theta _{b,q}} + p^*{\theta _m}$,
${a_4} = \left( {p^*{\theta _{b,p}} + q^*{\theta _{b,q}}} \right)\left( {p^*{\theta _{e,e}} + q^*{\theta _{e,q}}} \right)$,
${b_4} = p^*{\theta _{b,p}} + q^*{\theta _{b,q}} + p^*{\theta _{e,e}} + q^*{\theta _{e,q}}$, where $p^*$ and $q^*$ are the optimal
power allocation values obtained from Theorem \ref{prop:opt_power}, ${a_5} =\frac{1}{K} \left({\theta _{b,p,{\rm{null}}}} + {\theta _{m,{\rm{null}}}}\right)$,
and ${a_6} = \frac{1}{K}{\theta _{b,p,{\rm{null}}}}$, where
\begin{multline}\label{eq:theta_m_null}
 {\theta _{m,\rm{null}}}  = \tr\left( {\widehat {\bf{R}}_{0m,\rm{null}}^0} \right) \\
 + \tr\left( {\widehat {\bf{R}}_{0m,\rm{null}}^0} \right)^{-1} {{\tr\left({\left({{\bf{R}}_{0m,\rm{null}}^0 - \widehat{\bf{R}}_{0m,\rm{null}}^0} \right)  \widehat{\bf{R}}_{0m,\rm{null}}^0} \right)}}
 \end{multline}
and
 \begin{multline}\label{eq:theta_bp_null}
{\theta _{b,p,{\rm{null}}}}  =  \sum\limits_{l = 0}^L {\sum\limits_{k = 1,k \ne m}^K {\tr{{\left( {\widehat {\bf{R}}_{lk}^l} \right)}^{ - 1}}
\tr\left( {{\bf{R}}^l_{0m}}\widehat {\bf{R}}_{lk}^l \right)}} \\
 +  \sum\limits_{l = 1}^L { \tr{{\left( {\widehat {\bf{R}}_{lm,{\rm{null}}}^l} \!\right)}^{ - 1}}  \Lambda _{0m}^l}.
 \end{multline}
Furthermore, define $a_7 = a_3  a_6 - a_4 a_5$, $b_7 = a_6  b_3 + a_3 - a_5 b_4 - a_4$, $c_7 = a_6 + b_3 - a_5 - b_4$, two fixed point equations for
$\gamma$
\begin{align}\label{eq:gamma_u_1}
\gamma  = \min \left[ {\frac{{ - {b_7} + \sqrt {b_7^2 - 4{a_7}{c_7}} }}{{2{a_7}}}, \frac{{ - {b_7} - \sqrt {b_7^2 - 4{a_7}{c_7}} }}{{2{a_7}}}} \right]
\end{align}
\begin{align}\label{eq:gamma_u_2}
\gamma = \max \left[ {\frac{{ - {b_7} + \sqrt {b_7^2 - 4{a_7}{c_7}} }}{{2{a_7}}}, \frac{{ - {b_7} - \sqrt {b_7^2 - 4{a_7}{c_7}} }}{{2{a_7}}}} \right],
\end{align}
and $\Delta  = b_7^2 - 4{a_7}{c_7}$.  We note that (\ref{eq:gamma_u_1}) and (\ref{eq:gamma_u_2})
are two fixed point equations since $a_3$, $b_3$, $a_4$, and $b_4$ are functions of $\gamma$.
Denote the solution of (\ref{eq:gamma_u_1}) and (\ref{eq:gamma_u_2}) by
${\gamma _{{\rm t}, \, 1}}$ and ${\gamma _{{\rm t}, \, 2}}$, respectively, and define
\begin{equation}\label{eq:beta_gamma}
\beta \left( \gamma  \right) = \left\{ \begin{array}{l}
1, \, {\rm if}   \  \left( \Delta  < 0{\rm{ } } \ {\rm and} \ {\rm{ }}{a_7} < 0 \right) \rm{or} \\
 \left(\Delta  > 0,{a_7} > 0  \ {\rm and} \   {{\gamma _{{\rm t}, \, 1}}} < \gamma  < {{\gamma _{{\rm t}, \, 2}}} \right)   {\rm or} \ \\
 \left( \Delta  > 0,{a_7} < 0{\rm{ }} \ {\rm and} \ \left( {\gamma  < {{\gamma _{{\rm t}, \, 1}}}{\rm{ }}
 \ {\rm{ }} \gamma  > \rm{or} \rm{or}  {{\gamma _{{\rm t}, \, 2}}}} \right)  \right)  \\
0,  \, {\rm otherwise}
\end{array} \right.
\end{equation}
Then, we have the following theorem.

 \begin{theorem} \label{prop:unified}
For $N_t \rightarrow \infty$ and $N_e = 1$, if $\beta \left( \gamma  \right) = 0$,  the asymptotic secrecy rate of the MF-AN design
is higher than that of  the NS design; if $\beta \left( \gamma  \right) = 1$,  the opposite is true.
\begin{proof}
Please refer to Appendix \ref{proof:prop:unified}.
\end{proof}
\end{theorem}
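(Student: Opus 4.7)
The plan is to reduce the comparison between the asymptotic secrecy rates of the MF-AN and NS designs to the sign of a quadratic in $\gamma$, and then enumerate the sign cases of that quadratic to recover the definition of $\beta(\gamma)$.

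First, I would substitute $N_e = 1$ into Theorem \ref{prop:sec_rate_mul}. In this scalar case the eavesdropper SINR collapses to $p^{*}\gamma\theta_{e,e}/(q^{*}\gamma\theta_{e,q}+1)$ using the definitions of $\theta_{e,e}$ and $\theta_{e,q}$, and evaluating at the optimal $(p^{*},q^{*})$ from Theorem \ref{prop:opt_power} yields
\begin{equation*}
\frac{1+\mathrm{SINR}_{0m,\,{\rm asy}}}{1+\mathrm{SINR}_{\mathrm{eve,\,asy}}}
=\frac{a_{3}\gamma^{2}+b_{3}\gamma+1}{a_{4}\gamma^{2}+b_{4}\gamma+1},
\end{equation*}
where matching coefficients of $\gamma$ and $\gamma^{2}$ after clearing denominators gives exactly the $a_{3},b_{3},a_{4},b_{4}$ stated in the theorem. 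Similarly, for the NS design we invoke Theorem \ref{prop:sec_orth}: because the NS precoder confines the signal to the orthogonal complement of $\mathbf{R}_{E,T}^{0}$, the eavesdropper contribution vanishes asymptotically and the secrecy rate reduces to $\log_{2}\bigl((1+a_{5}\gamma)/(1+a_{6}\gamma)\bigr)$ after plugging $p=1/K,\,q=0$ into (\ref{eq:0m_orth}) and using the null-space analogs $\theta_{m,\rm null}$ and $\theta_{b,p,\rm null}$ in (\ref{eq:theta_m_null})--(\ref{eq:theta_bp_null}).

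Second, I would compare the two ratios: the NS design beats the MF-AN design iff
\begin{equation*}
(1+a_{5}\gamma)(a_{4}\gamma^{2}+b_{4}\gamma+1)>(1+a_{6}\gamma)(a_{3}\gamma^{2}+b_{3}\gamma+1).
\end{equation*}
Expanding both sides, subtracting, and factoring out $\gamma$ gives $-\gamma\bigl(a_{7}\gamma^{2}+b_{7}\gamma+c_{7}\bigr)$ with the coefficients $a_{7}=a_{3}a_{6}-a_{4}a_{5}$, $b_{7}=a_{6}b_{3}+a_{3}-a_{5}b_{4}-a_{4}$, $c_{7}=a_{6}+b_{3}-a_{5}-b_{4}$ exactly as defined. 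Since $\gamma>0$, NS outperforms MF-AN iff $Q(\gamma)\defeq a_{7}\gamma^{2}+b_{7}\gamma+c_{7}<0$.

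Third, I would carry out the sign analysis of $Q(\gamma)$. Writing $\Delta=b_{7}^{2}-4a_{7}c_{7}$ and letting $\gamma_{t,1}\leq\gamma_{t,2}$ denote the (real) roots when they exist, $Q(\gamma)<0$ holds precisely in three cases: (i) $a_{7}<0$ with $\Delta<0$ (downward parabola everywhere negative); (ii) $a_{7}>0$ with $\Delta>0$ and $\gamma_{t,1}<\gamma<\gamma_{t,2}$ (upward parabola negative between roots); and (iii) $a_{7}<0$ with $\Delta>0$ and $\gamma$ outside $[\gamma_{t,1},\gamma_{t,2}]$ (downward parabola negative outside the roots). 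These three conditions are exactly those defining $\beta(\gamma)=1$ in (\ref{eq:beta_gamma}); in every other case $Q(\gamma)\geq 0$ and the MF-AN design is at least as good, giving $\beta(\gamma)=0$.

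The main obstacle is that $a_{3},b_{3},a_{4},b_{4}$ are themselves $\gamma$-dependent through $p^{*}(\gamma),q^{*}(\gamma)$ from Theorem \ref{prop:opt_power}, so $Q(\gamma)$ is not a literal polynomial. This is why (\ref{eq:gamma_u_1})--(\ref{eq:gamma_u_2}) are fixed-point rather than algebraic equations, and why $\gamma_{t,1},\gamma_{t,2}$ must be characterized as the self-consistent $\gamma$-values at which $Q$ changes sign. I would therefore verify that the piecewise characterization above remains valid when $a_{7},b_{7},c_{7}$ are treated as functions of $\gamma$ by noting that the crossings of the NS and MF-AN secrecy curves occur exactly at the solutions of those fixed-point equations, so the signs of $Q(\gamma)$ on the complementary intervals are well-defined and the case analysis still yields $\beta(\gamma)$ as claimed.
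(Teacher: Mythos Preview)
Your proposal is correct and follows essentially the same route as the paper's proof: write the MF-AN secrecy rate as $\log_2\!\bigl((a_3\gamma^2+b_3\gamma+1)/(a_4\gamma^2+b_4\gamma+1)\bigr)$, write the NS secrecy rate as $\log_2\!\bigl((a_5\gamma+1)/(a_6\gamma+1)\bigr)$, cross-multiply, factor out $\gamma$, and read off the sign regions of the resulting quadratic $a_7\gamma^2+b_7\gamma+c_7$. You in fact supply more detail than the paper, which simply states the two rate expressions, reduces the comparison to $a_7\gamma^3+b_7\gamma^2+c_7\gamma\gtreqqless 0$, and says the theorem follows by ``finding the feasible region''; your explicit enumeration of the three parabola cases and your remark on the fixed-point nature of $\gamma_{t,1},\gamma_{t,2}$ are left implicit there.
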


{\bl
\subsection{Single-Cell Single-User Case}
To provide more insights into  the impact of
the pilot contamination attack on secure communication
in massive MIMO systems, we simplify the system model to the single-cell
single-user case in this subsection.
The transmitter (Alice) sends the  desired signal to
the receiver (Bob), and an eavesdropper (Eve)
is present to overhear the signal, i.e., we have $L = 0$, $K = 1$, and $N_e = 1$ \cite{Khisti2010TIT,Khisti2010TIT_2,Oggier2011TIT}.
We investigate again under which condition  a positive secrecy rate is achievable.
With the simplified model, in the following theorem, we  obtain a more intuitive condition
than the one in Theorem \ref{prop:sec_cond}.

\begin{theorem} \label{prop:sec_cond_single_user}
For a single-cell single-user single-antenna eavesdropper system ($L = 0, K = 1, N_e = 1$),
to achieve ${R_{\sec ,\, {\rm asy}}} > 0$ in (\ref{eq:asy_sec}), the following condition must be
satisfied:
\begin{align} \label{eq:condition_single_theo}
 & (p - 1)\gamma \eta_1     < \eta_2,
\end{align}
where
\begin{multline} \label{eq:coff}
\eta_1  = \left[\left( {\tr^2\left( {\widehat {\bf{R}}_{01}^0} \right) + \tr\left( {\left( {{\bf{R}}_{01}^0 - \widehat {\bf{R}}_{01}^0} \right)\widehat {\bf{R}}_{01}^0} \right)} \right) \right. \\
\times \left( {\tr\left( {{\bf{R}}_{E,T}^0} \right) \tr\left( {\widehat {\bf{R}}_{01}^0} \right) - \Lambda } \right)   \\
\left. - \left( {\tr\left( {\left( {{\bf{R}}_{01}^0 - \widehat {\bf{R}}_{01}^0} \right)\widehat {\bf{R}}_{01}^0} \right) + \tr\left( {\widehat {\bf{R}}_{01}^0} \right)\tr\left( {{\bf{R}}_{01}^0 - \widehat {\bf{R}}_{01}^0} \right)} \right)\Lambda \right] \\
\end{multline}
\begin{multline} \label{eq:coff_2}
\eta_2    = \\
\left[\left( {\tr^2\left( {\widehat {\bf{R}}_{01}^0} \right) + \tr\left( {\left( {{\bf{R}}_{01}^0 - \widehat {\bf{R}}_{01}^0} \right)\widehat {\bf{R}}_{01}^0} \right)} \right) - \Lambda  \right] \tr\left( {\widehat {\bf{R}}_{01}^0} \right)
 \end{multline}
\begin{multline} \label{eq:lambda}
\Lambda  = {\tau ^2}P_{01}^2 \tr\left( {{{\bf{\Omega }}^H}{\bf{R}}_{E,T}^0{\bf{\Omega R}}_{01}^0} \right) + {\tau ^2}{P_{01}}{P_E}{\left| {\tr\left( {{\bf{\Omega R}}_{E,T}^0} \right)} \right|^2} \\
+ \tau {P_{01}}{N_0} \tr\left( {{{\bf{\Omega }}^H}{\bf{R}}_{E,T}^0{\bf{\Omega }}} \right)
\end{multline}
\begin{align}
{\bf{\Omega }} = {\bf{R}}_{01}^0{\left( {{N_0}{{\bf{I}}_{{N_t}}} + \tau \left( {{P_{01}}{\bf{R}}_{01}^0 + {P_E}{\bf{R}}_{E,T}^0} \right)} \right)^{ - 1}}.
\end{align}
\begin{proof}
Please refer to Appendix \ref{proof:prop:sec_cond_single_user}.
\end{proof}
\end{theorem}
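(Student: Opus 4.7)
The plan is to specialize Theorem \ref{prop:sec_rate_mul} (and the eavesdropper-side simplifications from Theorem \ref{prop:opt_power}) to $L=0$, $K=1$, $N_e=1$, and then convert $R_{\sec,\,\rm{asy}}>0$ into the advertised linear inequality in $p$, in the same spirit as the proof of Theorem \ref{prop:sec_cond}.

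First, I would recast $R_{\sec,\,\rm{asy}}>0$ as $(1+{\rm SINR}_{0m,\,\rm{asy}})/(1+{\rm SINR}_{\rm eve,\,asy})>1$ and use the power-allocation constraint $Kp+(N_t-K)q=1$ with $K=1$ to eliminate $q=(1-p)/(N_t-1)$, leaving a single variable $p$ in the inequality.

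Second, I would collapse the general building blocks to the present regime. With $N_e=1$ the eavesdropper's receive correlation is the scalar $1$ and Theorem \ref{theo:optimal_attack} forces $|P_e|=1$, so $r_{E,R}^0=1$; the MMSE shaping matrix ${\bf C}_{01}^0$ in (\ref{eq:Clm}) reduces to $\sqrt{P_{01}}\,{\bf \Omega}$ with ${\bf \Omega}$ as defined in the theorem; and by the cyclic property of the trace the eavesdropper leakage $\Lambda_E^0$ of Theorem \ref{prop:opt_power} collapses precisely to the scalar $\Lambda$ in (\ref{eq:lambda}). Because there are no intra-cell co-users, $\theta_{b,p}=0$, and $\theta_m$, $\theta_{b,q}$, $\theta_{e,e}$, $\theta_{e,q}$ become explicit scalar functions of $\tr(\widehat{\bf R}_{01}^0)$, $\tr(({\bf R}_{01}^0-\widehat{\bf R}_{01}^0)\widehat{\bf R}_{01}^0)$, $\tr({\bf R}_{E,T}^0)$, $\Lambda$, and $\Lambda_{01}^0$.

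Third, cross-multiplying the ratio and exploiting $\theta_{b,p}=0$ cancels the quadratic-in-$p$ terms and reduces the condition to
\[
(\theta_m-\theta_{e,e})+q\gamma\bigl(\theta_m\theta_{e,q}-\theta_{b,q}\theta_{e,e}\bigr)>0.
\]
Substituting $q=(1-p)/(N_t-1)$ turns this into a linear inequality in $p$. Rescaling by an appropriate positive factor (of the form $\tr^{k}(\widehat{\bf R}_{01}^0)/(N_t-1)$) and inserting the compact forms of the $\theta$'s from step two brings the condition into the form $(p-1)\gamma\eta_1<\eta_2$ displayed in (\ref{eq:condition_single_theo}); the coefficient of $(p-1)\gamma$ must then coincide with (\ref{eq:coff}) and the right-hand side with (\ref{eq:coff_2}).

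The main obstacle is this last algebraic reconciliation: several trace products must combine so that all terms not appearing in (\ref{eq:coff})--(\ref{eq:coff_2}) cancel. The key ingredients are the MMSE identities $\widehat{\bf R}_{01}^0=P_{01}\tau{\bf \Omega R}_{01}^0=P_{01}\tau{\bf R}_{01}^0{\bf \Omega}^H$ (which let $\tr^{2}(\widehat{\bf R}_{01}^0)$ absorb the first summand of $\Lambda_{01}^0$) together with the reduction of the remaining contributions of $\Lambda_{01}^0$ to quantities already present in $\eta_1$ and $\eta_2$. Once this reconciliation is complete, the displayed inequality is equivalent to $R_{\sec,\,\rm{asy}}>0$, which establishes the claim.
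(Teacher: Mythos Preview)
Your plan is sound and reaches the same destination, but the paper takes a shorter path. Instead of specializing the general coefficients $\theta_m,\theta_{b,p},\theta_{b,q},\theta_{e,e},\theta_{e,q}$ from Theorems~\ref{prop:sec_rate_mul} and~\ref{prop:opt_power} and then performing the ``algebraic reconciliation'' you describe, the paper rederives the two asymptotic SINRs directly for $L=0$, $K=1$, $N_e=1$ (by repeating the asymptotic calculations of Appendix~\ref{proof:prop:sec_rate_mul} in this special case), obtaining expressions already written in the target variables $\tr(\widehat{\bf R}_{01}^0)$, $\tr\bigl(({\bf R}_{01}^0-\widehat{\bf R}_{01}^0)\widehat{\bf R}_{01}^0\bigr)$, $\tr({\bf R}_{E,T}^0)$, and $\Lambda$. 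The condition $R_{\sec,\,{\rm asy}}>0$ is then written as ${\rm SINR}_{01,\,{\rm asy}}>{\rm SINR}_{{\rm eve},\,{\rm asy}}$, the common factor $p\gamma$ is cancelled, and one cross-multiplication yields $q\gamma\,\eta_1+\eta_2>0$, i.e.\ (\ref{eq:condition_single_theo}), without any residual identities to verify. Your detour through the $\theta$'s is what creates the reconciliation burden you flag; it can be avoided entirely.

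One ingredient the paper supplies that is missing from your proposal: before cross-multiplying you must check that both SINR denominators are positive, since otherwise the inequality could flip. The paper verifies the eavesdropper side by the (asymptotic) Cauchy--Schwarz inequality $\Lambda\le\tr({\bf R}_{E,T}^0)\tr(\widehat{\bf R}_{01}^0)$, and the user side by the positive semidefiniteness of $\widehat{\bf R}_{01}^0$ and ${\bf R}_{01}^0-\widehat{\bf R}_{01}^0$. You should add this step explicitly.
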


{\emph{Remark 5:}} From (\ref{eq:condition_single_theo}), we obtain that coefficients $\eta_1$ and $\eta_2$
determine whether secrecy can be achieved. In the high SNR regime
when $\gamma \rightarrow \infty$, if $\eta_1 > 0$,  (\ref{eq:condition_single_theo}) holds for arbitrary $p < 1$
and secure communication can be achieved; if $\eta_1 < 0$,  (\ref{eq:condition_single_theo}) holds
only for $p > 1$.  However, $p$ denotes
the fraction of power allocated to the transmit signal and satisfies $0<p\leq 1$.
This means that if $\eta_1 < 0$,  (\ref{eq:condition_single_theo}) can not hold
and secure communication can not be achieved.

In the following, we provide a concrete example for when $\eta_1 < 0$ may occur.
\begin{theorem} \label{prop:sec_cond_single_user_iid}
 Consider i.i.d. fading, i.e., ${{\bf{R}}^{0}_{01}} = {\beta _{01}}{{\bf{I}}_{{N_t}}}$ and ${{\bf{R}}_{E,T}^{0}} = {\beta _E}{{\bf{I}}_{{N_t}}}$, where ${\beta _{01}}$
 and ${\beta _E}$ denote the path-losses for the desired user and the eavesdropper, respectively \cite{Zhou2012TWC,Kapetanovic2013}. Then,
 $\eta_1 \gtrless 0$ is equivalent to ${P_{01}}{\beta _{01}} \gtrless {P_E}{\beta _E}$.
\begin{proof}
Please refer to Appendix \ref{proof:prop:sec_cond_single_user_iid}.
\end{proof}
\end{theorem}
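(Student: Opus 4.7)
The plan is to exploit the extreme collapse of (\ref{eq:coff})--(\ref{eq:lambda}) under i.i.d.\ fading: every correlation matrix becomes a scalar multiple of the identity, every trace reduces to a scalar times $N_t$, and the entire expression for $\eta_1$ becomes a rational function of $\beta_{01}, \beta_E, P_{01}, P_E, N_0, \tau, N_t$. First I would substitute $\mathbf{R}_{01}^0 = \beta_{01}\mathbf{I}_{N_t}$ and $\mathbf{R}_{E,T}^0 = \beta_E \mathbf{I}_{N_t}$ into the definition of $\mathbf{\Omega}$ to obtain $\mathbf{\Omega} = (\beta_{01}/D)\mathbf{I}_{N_t}$ with $D \defeq N_0 + \tau(P_{01}\beta_{01} + P_E \beta_E)$, and then read off $\widehat{\mathbf{R}}_{01}^0 = \alpha \mathbf{I}_{N_t}$ where $\alpha \defeq \tau P_{01}\beta_{01}^2/D$.

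Next I would evaluate $\Lambda$ in (\ref{eq:lambda}). Its three summands become, respectively, $\tau^2 P_{01}^2 N_t \beta_{01}^3 \beta_E/D^2$, $\tau^2 P_{01} P_E N_t^2 \beta_{01}^2 \beta_E^2/D^2$, and $\tau P_{01} N_0 N_t \beta_{01}^2 \beta_E/D^2$. The middle summand, coming from $|\tr(\mathbf{\Omega}\mathbf{R}_{E,T}^0)|^2 = N_t^2 \beta_{01}^2\beta_E^2/D^2$, carries an extra factor of $N_t$ and dominates as $N_t\to\infty$, so $\Lambda \sim N_t^2 \tau^2 P_{01} P_E \beta_{01}^2 \beta_E^2 / D^2$. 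Writing $\eta_1 = A(B-\Lambda) - C\Lambda = AB - (A+C)\Lambda$, with $A$ the first bracketed factor in (\ref{eq:coff}), $B = \tr(\mathbf{R}_{E,T}^0)\tr(\widehat{\mathbf{R}}_{01}^0)$, and $C$ the second bracketed factor, direct substitution gives $A \sim N_t^2 \alpha^2$, $B = N_t^2 \alpha \beta_E$, and $A+C \sim N_t^2 \alpha \beta_{01}$, so that both $AB$ and $(A+C)\Lambda$ scale as $N_t^4$ and all subleading powers of $N_t$ are irrelevant to the sign.

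Collecting the leading $N_t^4$ contributions then yields
\begin{equation*}
\eta_1 \sim N_t^4 \alpha \beta_E \left[\alpha^2 - \frac{\tau^2 P_{01} P_E \beta_{01}^3 \beta_E}{D^2}\right] = \frac{N_t^4 \tau^2 \alpha \beta_E P_{01} \beta_{01}^3}{D^2}\left(P_{01}\beta_{01} - P_E \beta_E\right),
\end{equation*}
where the last equality is obtained by expanding $\alpha^2 = \tau^2 P_{01}^2 \beta_{01}^4/D^2$ in the bracket and pulling out the common factor $\tau^2 P_{01}\beta_{01}^3/D^2$, since $P_{01}^2\beta_{01} - P_{01} P_E\beta_E = P_{01}(P_{01}\beta_{01} - P_E\beta_E)$. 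Because $\alpha, \beta_E, P_{01}, \beta_{01}, \tau, D$ are all strictly positive, the prefactor multiplying $(P_{01}\beta_{01} - P_E\beta_E)$ is strictly positive, so the sign of $\eta_1$ matches the sign of $P_{01}\beta_{01} - P_E\beta_E$, which is exactly the claimed equivalence.

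The hard part is strictly bookkeeping: one must carefully identify which of the several terms in $A$, $C$, and $\Lambda$ contribute at order $N_t^4$ to the two products $AB$ and $(A+C)\Lambda$, and resist the temptation to retain lower-order pieces that only obscure the final factorization. Once the leading terms are isolated, the decisive simplification $P_{01}^2\beta_{01} - P_{01} P_E \beta_E = P_{01}(P_{01}\beta_{01} - P_E\beta_E)$ is immediate and yields the result.
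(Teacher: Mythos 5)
Your proposal is correct and follows essentially the same route as the paper's Appendix H: substitute the scaled-identity correlation matrices into $\widehat{\bf{R}}_{01}^0$, $\Lambda$, and the traces appearing in $\eta_1$, retain the leading $N_t^4$ terms as $N_t\to\infty$, and factor out $P_{01}\beta_{01}-P_E\beta_E$ times a strictly positive prefactor. Your intermediate expressions for $\bf{\Omega}$, $\alpha$, and the three summands of $\Lambda$ all agree with the paper's (\ref{eq:simple_exp_iid})--(\ref{eq:ret_iid}), and the final factorization matches the paper's (\ref{eq:simple_exp_iid_2}) up to a positive multiplicative constant.
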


Theorems \ref{prop:sec_cond_single_user} and \ref{prop:sec_cond_single_user_iid} imply
that if the active eavesdropper increases the power of the pilot contamination attack,
the classical MF-AN design \cite{Zhu2014} may not be able  to achieve secure communication.
Hence, a new design is needed to facilitate secure communication in  massive MIMO systems.

\textit{Remark 6:} If $\eta_1 = 0$ and $\eta_2 > 0$, then (\ref{eq:condition_single_theo}) is always satisfied.
Hence, secure communication is achievable regardless of the pilot contamination power $P_E$ and SNR $\gamma$.
The question is how
$\eta_1 = 0$ and $\eta_2 > 0$ can be achieved.
Clearly, in a conventional massive MIMO setting with i.i.d. channels,
as in \cite{Zhou2010TVT}, this condition is satisfied with low probability,
 because for small-dimensional MIMO channels, both ${{\bf{R}}^{0}_{01}}$
and  ${{\bf{R}}_{E,T}^{0}}$ in (\ref{eq:coff}) are full rank positive definite matrices,
such that  $\eta_1 = 0$ in (\ref{eq:coff}) is difficulty
to achieve. However, a unique feature of
massive MIMO channels is that the transmit correlation matrices of the channels are
low rank \cite{Yin2013JSAC,Adhikary2013TIT,Yin2014JSTSP,Nam2014JSTSP,Adhikary2014JSAC,Sun2015TCOM,Shen2015CL}.
Therefore, for massive MIMO systems, it is possible to  perform
joint uplink and downlink processing as in  Section III-B to project
the channels of the users and the eavesdropper into the
null space of ${{\bf{R}}_{E,T}^{0}}$.
For this NS design,  the condition in (\ref{eq:condition_single_theo}) still
holds after replacing $\mathbf{R}_{01}^{0}$ and $\mathbf{R}_{E,T}^{0}$ in (\ref{eq:coff}) and
(\ref{eq:coff_2}) with $\mathbf{R}_{01,\rm{null}}^{0}$ and $\mathbf{0}$, respectively.
In this case, we obtain $\eta_1 = 0$ and $\eta_2 > 0$.
In fact, this is the intuition behind the NS design proposed in this paper.
Unlike the conventional
NS design for the perfect CSI case \cite{Khisti2010TIT}, the proposed NS design
transmits along the  statistical eigen-direction of the channel. Moreover, since the pilot contamination attack
affects both the uplink channel estimation and the downlink data transmission, the proposed NS design requires a joint
uplink and downlink processing to completely eliminate the impact of the pilot contamination attack,
see Section III-B for details.

}

\section{Numerical Results}
In this section, we provide numerical results to evaluate the secrecy performance of the considered massive MIMO system with
an active eavesdropper. We consider a system where a  uniform linear array with $N_t = 128$ is employed at the BS, the antenna
spacing is  half a wavelength, and the angle of arrival (AoA) interval is $\mathcal{A} = [-\pi,\pi]$. We use the truncated Laplacian distribution
to model the channel power angle spectrum as \cite{Cho2010}
\begin{align} \label{eq:p_theta}
p\left( \theta  \right) = \frac{1}{{\sqrt 2 \sigma \left( {1 - {e^{ - \sqrt 2 \pi /\sigma }}} \right)}}{e^{\frac{{ - \sqrt 2 \left\| {\theta  - \overline \theta  } \right\|}}{\sigma }}},
\end{align}
where $\sigma$ and $\overline\theta$ denote the angular spread (AS) and the mean  AoA of the channel, respectively.
We assume that the AS $\sigma$  in (\ref{eq:p_theta}) is identical for the channels of all  users and the eavesdropper and we
set $\sigma = \pi/2$.
The channel transmit correlation matrices of all users, $\mathbf{R}^l_{lk}$, and the eavesdropper, $\mathbf{R}^l_{E,T}$, are generated based on \cite[Eq. (3.14)]{Cho2010}.
For the channel between the user
and the BS in its own cell and the channel between the user
and the BSs in the other cells, we impose a channel power normalization
to make the trace of the channel transmit correlation matrices equal to $N_t$ and $\rho N_t$, respectively, and
set $\rho = 0.1$. The receive correlation matrices of the eavesdropper, $\mathbf{R}^l_{E,R}$, are generated using the exponential correlation model
 $ \left\{\mathbf{R}^l_{E,R} \right\}_{i,j} = \varphi^{| i - j |},  \varphi \in (0,1)$, where $\varphi$ is generated at random.
 {\bl We note that for any precoder ${\bf{P}}_e$ satisfying (\ref{eq:optimal_precoder}), the resulting
$r_{E,R}^0$ in (\ref{eq:mmse_h0m}) and (\ref{eq:R0m0_est}) is the same.
In the simulations, we set ${{{\bf{p}}_s}} = \sqrt{\frac{1}{N_e}} {{\bf{u}}_e}$, $s = 1,2,\cdots,N_e$.}
For the channel between the eavesdropper
and the $0$th BS  and the channel between the eavesdropper
and the other BSs, we impose a channel power normalization
to make the trace of the channel receive correlation matrices equal to $N_e$ and $\rho N_e$, respectively.
The asymptotic secrecy rate is computed based on
Theorem  \ref{prop:sec_rate_mul} and the exact secrecy rate is obtained by Monte Carlo
simulation.  We set $L = 3$, $K = 5$, $P_{lk} =1$, $k = 1,2,\cdots,K$, $l = 0,1,\cdots,L$, $\tau = 10$, and $N_0 = 1$.
The mean channel AoAs, $\overline \theta$, of all  users and the eavesdropper in (\ref{eq:p_theta}) are generated at random and
the channel AoAs, $ \theta$, of all users and the eavesdropper
are distributed within the angle interval $\left[-\pi,  \pi\right]$ according to (\ref{eq:p_theta}).

\begin{figure}[!t]
\centering
\includegraphics[width=0.5\textwidth]{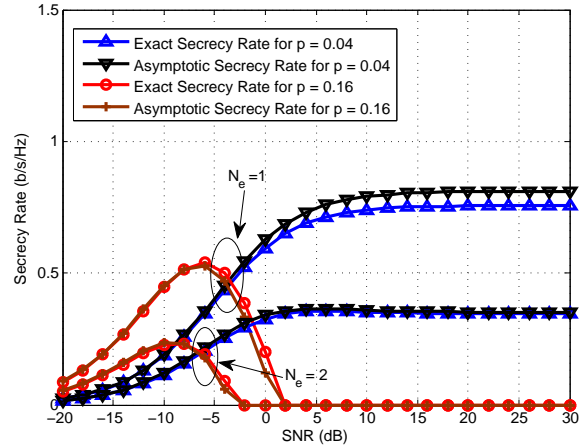}
\caption {\space\space  Secrecy rate vs. SNR for $P_E = 1$, different $p$, and different $N_e$.}
\label{Sec_Asy_SNR_Multi}
\end{figure}

Figure \ref{Sec_Asy_SNR_Multi} shows the secrecy rate performance versus (vs.)
SNR $\gamma$ for the MF-AN design, $P_E = 1$, different $p$, and different $N_e$.  We observe from Figure \ref{Sec_Asy_SNR_Multi} that the asymptotic secrecy rate
in Theorem \ref{prop:sec_rate_mul} provides a good estimate
for the exact secrecy rate. Also, we observe from Figure \ref{Sec_Asy_SNR_Multi} that in the low SNR regime,
allocating more power (larger $p$) to the information-carrying transmit signal leads to a higher secrecy rate. However, as the SNR increases,
the secrecy rate drops significantly if the transmit signal power is high and the AN power is small.
For example, for ${\rm SNR} = 2$ dB and $N_e = 1$,  we find from  Theorem \ref{prop:sec_cond}
that $p < 0.15$ is a necessary condition to guarantee reliable communication.
Thus, for $p = 0.16$, a positive secrecy rate cannot be achieved, as confirmed by Figure \ref{Sec_Asy_SNR_Multi}.
In addition, we observe from Figure \ref{Sec_Asy_SNR_Multi} that for the MF-AN design, increasing $N_e$ results in a secrecy rate degradation.

\begin{figure}[!t]
\centering
\includegraphics[width=0.5\textwidth]{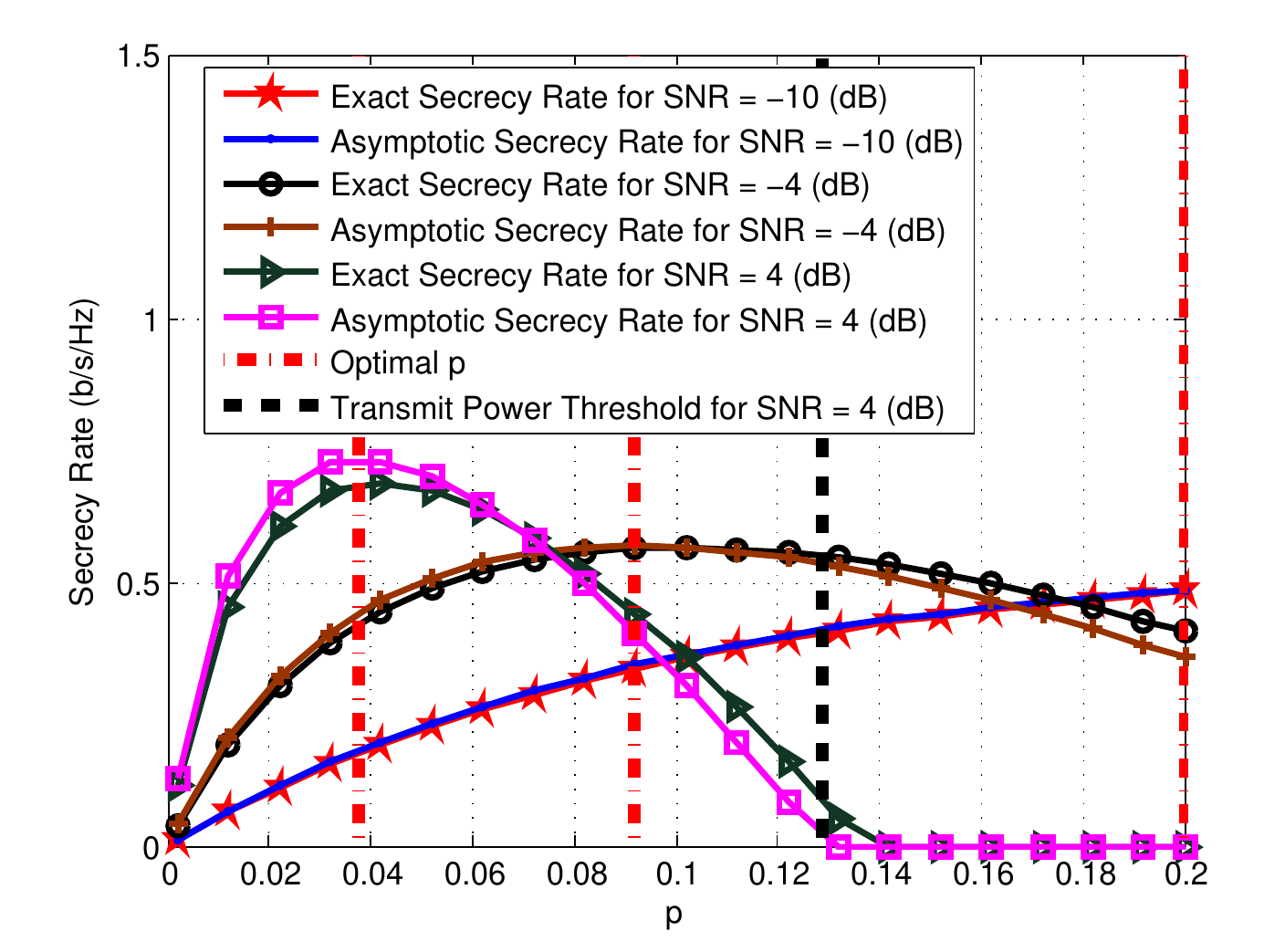}
\caption {\space\space  Secrecy rate vs. $p$ for $N_e = 1$, $P_E = 1$, and different SNRs.}
\label{Sec_Asy_P_Multi}
\end{figure}

Figure \ref{Sec_Asy_P_Multi} shows the secrecy rate performance vs. $p$
for the MF-AN design, $N_e = 1$, $P_E = 1$, and different SNRs.
For $K = 5$, we know from Section II-B that $0 \leq p \leq 0.2$.
Figure  \ref{Sec_Asy_P_Multi} confirms
that the asymptotic secrecy rate in Theorem \ref{prop:sec_rate_mul}
 provides a good estimate for the exact secrecy rate.  Also,
we observe from Figure  \ref{Sec_Asy_P_Multi} that the maximum of the exact
secrecy rate is achieved for the
optimal power allocation solution provided in Theorem \ref{prop:opt_power}.
Furthermore,  for ${\rm SNR} = 4$ dB, a positive secrecy rate cannot be achieved when $p$ is larger than the
analytical transmit power threshold provided in Theorem \ref{prop:sec_cond}.
As the SNR increases, more power has to be used for generating AN, in order
to achieve reliable secrecy transmission.

\begin{figure}[!t]
\centering
\includegraphics[width=0.5\textwidth]{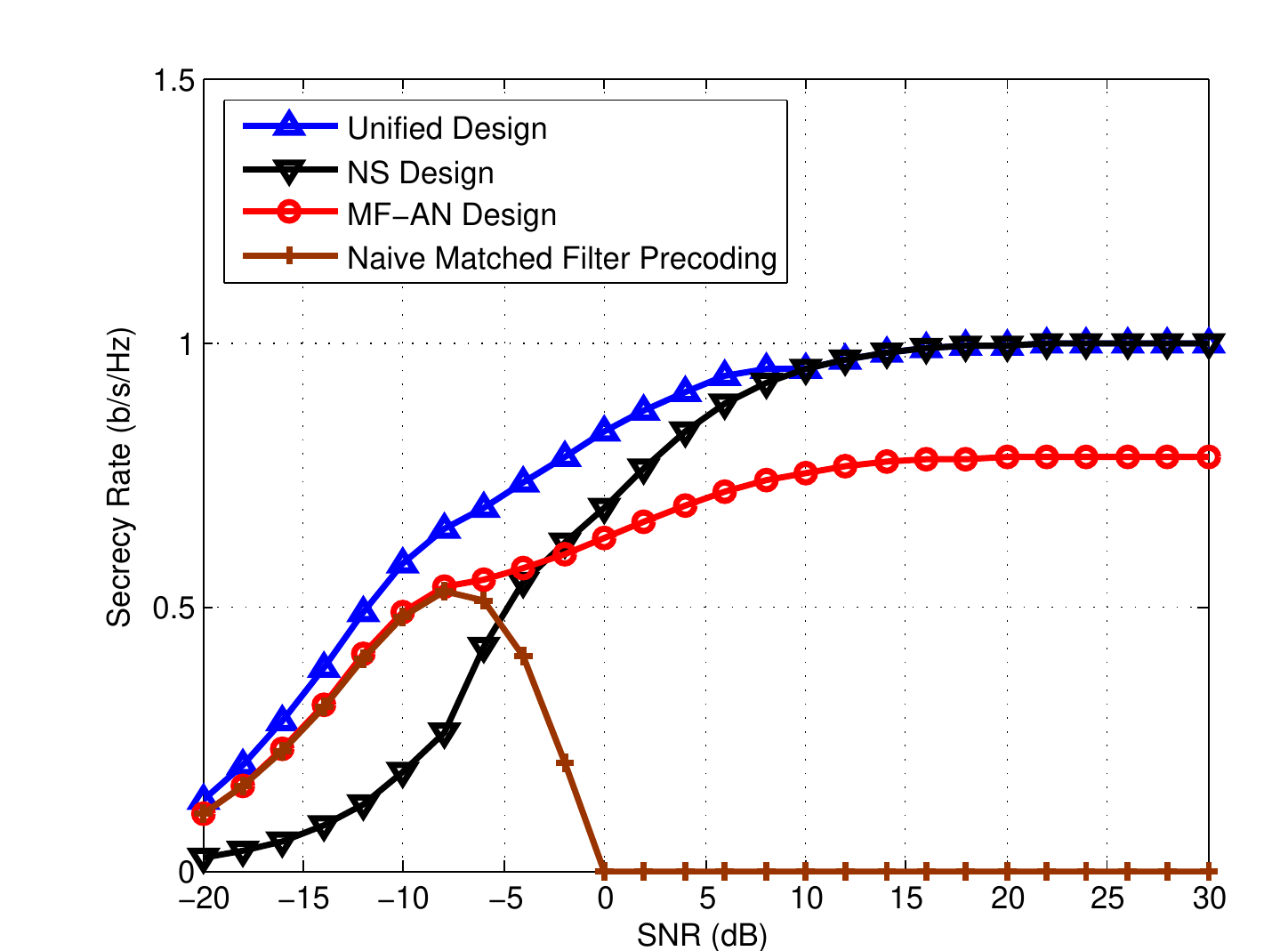}
\caption {\space\space  Exact secrecy rate vs. the SNR for $N_e = 1$, $P_E = 1$,  and different precoding designs.}
\label{Sec_Asy_SNR_Multi_Unified}
\end{figure}

Figure \ref{Sec_Asy_SNR_Multi_Unified} shows the exact secrecy rate performance vs. SNR $\gamma$ for $N_e = 1$, $P_E = 1$, and different system designs.
For the MF-AN design, we
adopted the optimal power allocation $p$ and $q$ based on Theorem \ref{prop:opt_power}.
For the NS design, when an eigenvalue of $\mathbf{R}_{E,T}^l$ is less than $10^{-3}$,
we treated it as being equal to zero. Based on this, we obtained  $T_l$ and $\mathbf{V}_E^l$ according to Section III-B.
For the unified design, we obtained the optimal $\alpha$ and $\beta$  by performing a
one-dimensional numerical search. For comparison, we also show results for a naive MF precoding scheme, where $p = 1/K$ and
AN is not generated.  From Figure \ref{Sec_Asy_SNR_Multi_Unified}, we make the following observations:
1) The unified design achieves the best performance for all considered SNR values.
2) Because of the pilot contamination attack, even though the transmitter is equipped with a large number of antennas,
naive MF precoding cannot achieve a positive secrecy rate for moderate-to-high SNRs.

\begin{figure}[!t]
\centering
\includegraphics[width=0.5\textwidth]{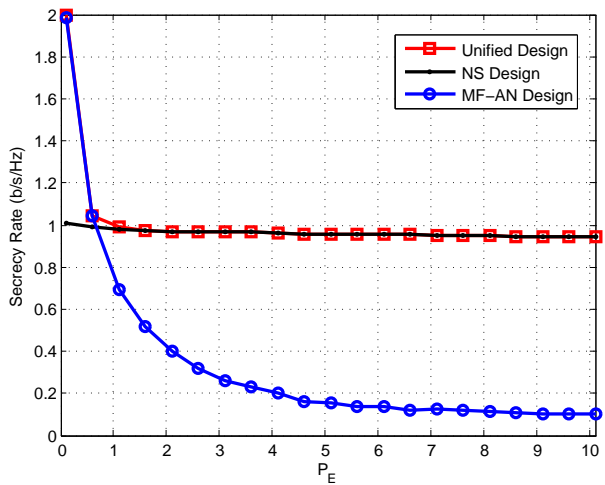}
\caption {\space\space  Exact secrecy rate vs. $P_E$ for $N_e = 1$, $\gamma = 10$ dB, and different precoding designs.}
\label{Sec_Asy_SNR_Multi_Unified_PE}
\end{figure}

Figure \ref{Sec_Asy_SNR_Multi_Unified_PE} shows
the exact secrecy rate performance vs. $P_E$  for $N_e = 1$, $\gamma = 10$ dB, and different system designs.
We observe from Figure \ref{Sec_Asy_SNR_Multi_Unified_PE}
that the unified design performs best for all considered $P_E$ values.
For a weak pilot contamination attack,
the MF-AN design performs better than the NS design.
However, when the eavesdropper increases its pilot power, this leads to a serious secrecy rate loss
for the MF-AN design, but has barely any impact on the NS design.
This is because the NS design can eliminate the
impact of the pilot contamination caused by the active eavesdropper as suggested by
(\ref{eq:R0m0_est_null}) where the term $P_E \mathbf{R}_{E,T}^{0}$ disappears in the covariance
matrix of the estimated channel. When $P_E = 0.1$ and $P_E = 1$,
Theorem \ref{prop:unified} indicates that, for $\gamma = 10$ dB,  the MF-AN design and the NS design
perform better, respectively, which is confirmed by Figure \ref{Sec_Asy_SNR_Multi_Unified_PE}.

\begin{figure}[!t]
\centering
\includegraphics[width=0.5\textwidth]{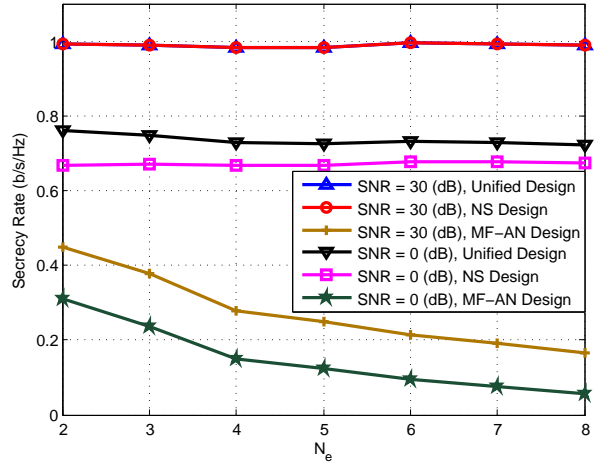}
\caption {\space\space  Exact secrecy rate vs. $N_e$ for $P_E = 1$,  different system designs, and different SNRs. }
\label{Sec_Asy_SNR_Multi_Unified_MulEve_MultiNr}
\end{figure}

Figure \ref{Sec_Asy_SNR_Multi_Unified_MulEve_MultiNr} shows the exact secrecy rate performance vs. $N_e$ for $P_E = 1$, different
system designs, and different SNRs.
We observe from Figure \ref{Sec_Asy_SNR_Multi_Unified_MulEve_MultiNr} that
the unified design performs best for the considered SNR values.
Also, the secrecy rate of the unified design barely
decreases with increasing $N_e$.
This confirms that the unified design is an effective approach for combating
the pilot contamination attack.  In contrast,
the secrecy rate performance of the MF-AN design degrades
significantly with increasing $N_e$.

\section*{Acknowledgment}
The authors would like to thank Prof. Xiqi Gao, Prof. C. K. Wen, and
Mr. Shahram Zarei for the helpful discussion throughout the paper,
Prof. Shi Jin for pointing out \cite{Kapetanovic2013}, and Prof. Jiaheng Wang
for the helpful discussion on Theorem \ref{theo:optimal_attack}.
The authors would also like to thank the editor, Prof. Yingbin Liang,
and anonymous reviewers for helpful comments and suggestions that greatly improve
the quality of the paper.

\section{Conclusions}
In this paper, we have studied the transmit signal design for multi-cell multi-user massive
MIMO systems in the presence of a multi-antenna active eavesdropper. For
the MF-AN design, we obtained an asymptotic
achievable secrecy rate expression for the pilot contamination attack
when the number of transmit antennas tends to infinity.
Moreover, we proved that the impact of the active eavesdropper can be completely eliminated when
the transmit correlation matrices of the users and the eavesdropper are orthogonal.
This analytical insight motivated the development of
transmit signal designs that are robust against the pilot contamination attack.
Also, for the MF-AN design and a single-antenna eavesdropper,
we derived
closed-form expressions for the optimal power allocation policy
for the information-carrying transmit signal and the AN as well as the minimum transmit signal power required
to ensure secure transmission.  In addition, for the single-antenna eavesdropper case,
a decision threshold for determining whether the MF-AN design or the NS design
is preferable was provided. Monte Carlo simulation results showed that the derived analytical results are accurate
and confirmed the effectiveness of the proposed transmission schemes for combating
the pilot contamination attack.

\appendices

{\bl

\section{Proof of Theorem \ref{theo:optimal_attack}}\label{proof:theo:optimal_attack}
First, recalling the definition of ${\bf W}_e$, we have
\begin{align}\label{eq:power_con_attack}
{{\bf{W}}_e}{\bf{W}}_e^H = \tau {{\bf{1}}_{{N_e}}}{\bf{1}}_{{N_e}}^T.
\end{align}
Then, the constraint in (\ref{eq:pilot_attack_contr}) reduces to
\begin{align}\label{eq:power_con_attack_2}
\tr\left( {{\bf{P}}_e {\bf{1}}_{N_e}{{\bf{1}}_{N_e}^T}{{\bf{P}}_e^H}} \right) = \tr\left({ { \left(\sum\limits_{r = 1}^{{N_r}} {{\bf{p}}_r} \right)}
\left({\sum\limits_{r = 1}^{{N_r}} {{\bf{p}}_r}}\right)^H  } \right)   \le {N_e}.
\end{align}

Also, (\ref{eq:pilot_attack}) can be re-written as
\begin{align}
 \sum\limits_{r = 1}^{{N_r}} {\sum\limits_{s = 1}^{{N_r}} {{{\left\{ {{{\bf{P}}_e^H}{\bf{R}}_{E,R}^0{\bf{P}}_e} \right\}}_{rs}}} } & = \sum\limits_{r = 1}^{{N_r}} {\sum\limits_{s = 1}^{{N_r}} {{\bf{e}}_r^H{{\bf{P}}_e^H}{\bf{R}}_{E,R}^0{\bf{P}}_e {{\bf{e}}_s}} } \nonumber  \\
  & = \sum\limits_{r = 1}^{{N_r}} {\sum\limits_{s = 1}^{{N_r}} {{\bf{p}}_r^H{\bf{R}}_{E,R}^0{{\bf{p}}_s}} } \nonumber \\
  & = \left( {\sum\limits_{r = 1}^{{N_r}} {{\bf{p}}_r^H} } \right){\bf{R}}_{E,R}^0\left( {\sum\limits_{s = 1}^{{N_r}} {{{\bf{p}}_s}} } \right).  \label{eq:R_ER_3}
\end{align}
 Define ${\bf{p}} = \sum\nolimits_{r = 1}^{{N_e}} {{{\bf{p}}_r}}$. Considering (\ref{eq:power_con_attack_2}), (\ref{eq:R_ER_3}), the optimization problem in (\ref{eq:pilot_attack}) is
equivalent to
\begin{equation} \label{eq:pilot_attack_equ}
\mathop {\max }\limits_{\bf{p}} {{\bf{p}}^H}{\bf{R}}_{E,R}^0{\bf{p}}
\end{equation}
\begin{equation} \label{eq:pilot_attack_contr_equ}
{\rm s.t.} \quad  \tr\left({\bf p} {\bf p}^H \right) \leq N_e \nonumber .
\end{equation}

It is easy to show that the optimal solution of (\ref{eq:pilot_attack_equ}) is
${\bf{p}} = \sqrt{N_e} {\bf u}_e$, where ${\bf u}_e$
is the eigenvector corresponding to the largest eigenvalue of ${\bf{R}}_{E,R}^0$.
This completes the proof.}

\section{Proof of Theorem \ref{prop:sec_rate_mul}}\label{proof:prop:sec_rate_mul}
First, we calculate ${\rm SINR}_{0m}$ in (\ref{eq:SINR_0m}) for $N_t \rightarrow \infty$.
For the numerator of (\ref{eq:SINR_0m}), based on \cite[Corollary 1]{Evans2000TIT},  we have
 \begin{align} \label{eq:h0mw0m_2_app}
& \frac{1}{{{N_t}}} \left[ {{{\left| {{{\left( {{\bf{h}}_{0m}^0} \right)}^H}{{\bf{w}}_{0m}}} \right|}^2}} \right] \nonumber \\
  & \mathop  \to \limits^{{N_t} \to \infty} \frac{1}{{{N_t}}}  \left| {\widehat {\bf{h}}_{0m}^0} \right|^2
+ \frac{1}{{{N_t}}}\left[ {\frac{{{{\left( {\widehat {\bf{h}}_{0m}^0} \right)}^H}\left( {{\bf{R}}_{0m}^0 - \widehat {\bf{R}}_{0m}^0} \right)\widehat {\bf{h}}_{0m}^0}}{{{{\left\| {\widehat {\bf{h}}_{0m}^0} \right\|}^2}}}} \right] \nonumber \\
& \mathop  \to \limits^{{N_t} \to \infty} \frac{1}{{{N_t}}}\tr\left( {\widehat {\bf{R}}_{0m}^0} \right)
+ \frac{1}{{{N_t}}} \frac{{\tr\left({\left({{\bf{R}}_{0m}^0 - \widehat{\bf{R}}_{0m}^0} \right) \widehat{\bf{R}}_{0m}^0} \right)}}{{\tr\left( {\widehat {\bf{R}}_{0m}^0} \right)}}.
\end{align}

For the denominator of (\ref{eq:SINR_0m}), we have
\begin{align} \label{eq:h0mw0k_app}
\frac{1}{{{N_t}}} {{{\left| {{{\left( {{\bf{h}}_{0m}^0} \right)}^H}{{\bf{w}}_{0k}}} \right|}^2}}& ={\frac{{\left( {{\bf{h}}_{0m}^0} \right)}^H \widehat {\bf{h}}_{0k}^0 {\left( {\widehat {\bf{h}}_{0k}^0} \right)}^H
{\left( {{\bf{h}}_{0m}^0} \right)}
}{{{{\left( {\widehat {\bf{h}}_{0k}^0} \right)}^H}\widehat {\bf{h}}_{0k}^0}}} \nonumber \\
&\mathop  \to \limits^{{N_t} \to \infty }   {\frac{{{{\left( {\widehat {\bf{h}}_{0k}^0} \right)}^H}{\bf{R}}_{0m}^0\widehat {\bf{h}}_{0k}^0}}{{{{\left( {\widehat {\bf{h}}_{0k}^0} \right)}^H}\widehat {\bf{h}}_{0k}^0}}} \nonumber \\
&\mathop  \to \limits^{{N_t} \to \infty } \frac{{\tr\left( {{\bf{R}}_{0m}^0\widehat {\bf{R}}_{0k}^0} \right)}}{{\tr\left( {\widehat {\bf{R}}_{0k}^0} \right)}}.
\end{align}
Also, performing some simplifications, we obtain
\begin{multline} \label{eq:h0mUull0_app}
\frac{1}{{{N_t}}}\left[ {{{\left\| {{{\left( {{\bf{h}}_{0m}^0} \right)}^H}{{\bf{U}}_{{\rm null},\,0}}} \right\|}^2}} \right] \mathop  \to \limits^{{N_t} \to \infty }
 \frac{1}{{{N_t}}} {{\bf{R}}_{0m}^0} \\
 - \frac{1}{{{N_t}}}\sum\limits_{k = 1,k \ne m}^K {\tr{{\left( {\widehat {\bf{R}}_{0k}^0} \right)}^{ - 1}} {{{\left( {{\bf{h}}_{0m}^0} \right)}^H}\left( {\widehat {\bf{h}}_{0k}^0} \right){{\left( {\widehat {\bf{h}}_{0k}^0} \right)}^H}{\bf{h}}_{0m}^0}}  \\
  - \frac{1}{{{N_t}}}\tr{\left( {\widehat {\bf{R}}_{0m}^0} \right)^{ - 1}} {{{\left( {{\bf{h}}_{0m}^0} \right)}^H}\left( {\widehat {\bf{h}}_{0m}^0} \right){{\left( {\widehat {\bf{h}}_{0m}^0} \right)}^H}{\bf{h}}_{0m}^0}.
\end{multline}
For $m \ne k$, ${\bf{h}}_{0m}^0$ is independent of $\widehat {\bf{h}}_{0k}^0$. Hence, the
asymptotic expression for $ \frac{1}{{{N_t}}} {{{\left( {{\bf{h}}_{0m}^0} \right)}^H}\widehat {\bf{h}}_{0k}^0{{\left( {\widehat {\bf{h}}_{0k}^0} \right)}^H}{\bf{h}}_{0m}^0}$
is given by
\begin{align} \label{eq:h0mh0k}
 \frac{1}{{{N_t}}} {{{\left( {{\bf{h}}_{0m}^0} \right)}^H}\widehat {\bf{h}}_{0k}^0{{\left( {\widehat {\bf{h}}_{0k}^0} \right)}^H}{\bf{h}}_{0m}^0} \mathop  \to \limits^{{N_t} \to \infty} \frac{1}{{{N_t}}} \tr\left( {{\bf{R}}_{0m}^0\widehat {\bf{R}}_{0k}^0} \right).
\end{align}
For $m = k$, based on (\ref{eq:mmse_h0m}), we have
\begin{multline} \label{eq:h0m_app}
\widehat {\bf{h}}_{0m}^0  = \sqrt {{P_{0m}}} \tau {\bf{C}}_{0m}^0{\bf{h}}_{0m}^0 + {\bf{C}}_{0m}^0\sum\limits_{l = 1}^L {\sqrt {{P_{lm}}} \tau {\bf{h}}_{lm}^0} \\ + {\bf{C}}_{0m}^0  \sqrt {\frac{{{P_E}}}{{{N_e}}}} \tau \sum\limits_{r = 1}^{{N_e}} {{\bf{h}}_{E,r}^0}  + {\bf{C}}_{0m}^0  {{\bf{\Omega }}_m} {\bf{n}},
\end{multline}
where ${\bf{C}}_{0m}^0$ is defined in (\ref{eq:Clm}) and ${{\bf{\Omega }}_m} =  {\left( {{{\boldsymbol{\omega }}_m} \otimes {{\bf{I}}_{{N_t}}}} \right)^H}$.

Based on (\ref{eq:h0m_app}), we obtain
\begin{multline}\label{eq:h0mhath0m_app}
 {\left( {{\bf{h}}_{0m}^0} \right)^H}\widehat {\bf{h}}_{0m}^0{\left( {\widehat {\bf{h}}_{0m}^0} \right)^H}{\bf{h}}_{0m}^0 \\
= {\left( {{\bf{h}}_{0m}^0} \right)^H}{\bf{C}}_{0m}^0{{\bf{\Omega }}_m}{\bf{{n}}}{\left( {\widehat {\bf{h}}_{0m}^0} \right)^H}{\bf{h}}_{0m}^0  + {\tau ^2}{\left( {{\bf{h}}_{0m}^0} \right)^H}{\bf{C}}_{0m}^0 \\
 \times \sum\limits_{t = 0}^L {\sum\limits_{s = 0}^L {\sqrt {{P_{tm}}} \sqrt {{P_{sm}}} {\bf{h}}_{tm}^0} } {\left( {{\bf{h}}_{sm}^0} \right)^H}{\left( {{\bf{C}}_{0m}^0} \right)^H}{\bf{h}}_{0m}^0 +   {\tau ^2}\sqrt {\frac{{{P_E}}}{{{N_e}}}} \\
 \times {\left( {{\bf{h}}_{0m}^0} \right)^H}{\bf{C}}_{0m}^0  \sum\limits_{t = 0}^L {\sqrt {{P_{tm}}} {\bf{h}}_{tm}^0  \sum\limits_{r = 1}^{{N_e}} \left({{\bf{h}}_{E,r}^0}\right)^H
} {\left( {{\bf{C}}_{0m}^0} \right)^H}{\bf{h}}_{0m}^0 \\
 + \tau {\left( {{\bf{h}}_{0m}^0} \right)^H}{\bf{C}}_{0m}^0\sum\limits_{t = 0}^L {\sqrt {{P_{tm}}} {\bf{h}}_{tm}^0{{ {\bf{n}}}^H  {{\bf{\Omega }}_m^H}    }} {\left( {{\bf{C}}_{0m}^0} \right)^H}{\bf{h}}_{0m}^0  \\
+ \tau \sqrt {{P_E}} {\left( {{\bf{h}}_{0m}^0} \right)^H}{\bf{C}}_{0m}^0\sum\limits_{r = 1}^{{N_e}}{{\bf{h}}_{E,r}^0}{\left( {\widehat {\bf{h}}_{0m}^0} \right)^H}{\bf{h}}_{0m}^0.
\end{multline}

When $N_t \rightarrow \infty$, based on (\ref{eq:h0mhath0m_app}) and \cite[Corollary 1]{Evans2000TIT},  we have
\begin{align}
& \frac{1}{{{N_t}}}  {\left( {{\bf{h}}_{0m}^0} \right)^H}\widehat {\bf{h}}_{0m}^0{\left( {\widehat {\bf{h}}_{0m}^0} \right)^H}{\bf{h}}_{0m}^0 \nonumber \\
 & \mathop  \to \limits^{{N_t} \to \infty } \frac{1}{{{N_t}}}  {\tau ^2}{\left( {{\bf{h}}_{0m}^0} \right)^H}{\bf{C}}_{0m}^0\sum\limits_{t = 0}^L {{P_{tm}}{\bf{h}}_{tm}^0{{\left( {{\bf{h}}_{tm}^0} \right)}^H}{{\left( {{\bf{C}}_{0m}^0} \right)}^H}{\bf{h}}_{0m}^0} \nonumber  \\
& + \frac{1}{{{N_t}}} {\tau ^2}  {\frac{{{P_E}}}{{{N_e}}}}  {\left( {{\bf{h}}_{0m}^0} \right)^H}{\bf{C}}_{0m}^0 \sum\limits_{r = 1}^{{N_e}} {{\bf{h}}_{E,r}^0}
\sum\limits_{r = 1}^{{N_e}} \left({{\bf{h}}_{E,r}^0}\right)^H {\left( {{\bf{C}}_{0m}^0} \right)^H}{\bf{h}}_{0m}^0 \nonumber \\
&  + \frac{1}{{{N_t}}} {\left( {{\bf{h}}_{0m}^0} \right)^H}{\bf{C}}_{0m}^0{{\bf{\Omega }}_m}{\bf{n}}{\left( {{\bf{C}}_{0m}^0{{\bf{\Omega }}_m}{\bf{n}}} \right)^H}{\bf{h}}_{0m}^0  \nonumber  \\
&   \mathop  \to \limits^{{N_t} \to \infty }  \frac{1}{{{N_t}}}  {\tau ^2}{P_{0m}}{\left( {{\bf{h}}_{0m}^0} \right)^H}{\bf{C}}_{0m}^0{\bf{h}}_{0m}^0{\left( {{\bf{h}}_{0m}^0} \right)^H}{\left( {{\bf{C}}_{0m}^0} \right)^H}{\bf{h}}_{0m}^0 \nonumber \\
& + \frac{1}{{{N_t}}} {\tau ^2}{\left( {{\bf{h}}_{0m}^0} \right)^H}{\bf{C}}_{0m}^0\sum\limits_{t = 1}^L {{P_{tm}}{\bf{h}}_{tm}^0{{\left( {{\bf{h}}_{tm}^0} \right)}^H}{{\left( {{\bf{C}}_{0m}^0} \right)}^H}{\bf{h}}_{0m}^0} \nonumber \\
& + \frac{1}{{{N_t}}} {\tau ^2}  {\frac{{{P_E}}}{{{N_e}}}} {\left( {{\bf{h}}_{0m}^0} \right)^H}{\bf{C}}_{0m}^0\sum\limits_{r = 1}^{{N_e}} {{\bf{h}}_{E,r}^0}
\sum\limits_{r = 1}^{{N_e}} \left({{\bf{h}}_{E,r}^0}\right)^H{\left( {{\bf{C}}_{0m}^0} \right)^H}{\bf{h}}_{0m}^0 \nonumber \\
 & + \frac{1}{{{N_t}}} {\left( {{\bf{h}}_{0m}^0} \right)^H}{\bf{C}}_{0m}^0{{\bf{\Omega }}_m}{\bf{n}}{\left( {{\bf{C}}_{0m}^0{{\bf{\Omega }}_m}{\bf{n}}} \right)^H}{\bf{h}}_{0m}^0 \mathop  \to \limits^{{N_t} \to \infty }  \frac{1}{{{N_t}}}  \Lambda _{0m}^0. \label{eq:h0mhath0m_2_2_app}
\end{align}

Also, for $m \ne k$, ${\bf{h}}_{0m}^l$ is independent of $\widehat {\bf{h}}_{lk}^l$,
and we obtain in (\ref{eq:SINR_0m_b})
 \begin{align} \label{eq:h0mlwlk_app}
& \frac{1}{{{N_t}}} {{{\left| {{{\left( {{\bf{h}}_{0m}^l} \right)}^H}{{\bf{w}}_{lk}}} \right|}^2}} \nonumber \\
&  = \frac{1}{{{N_t}}} {{{\left( {{\bf{h}}_{0m}^l} \right)}^H}\frac{{\widehat {\bf{h}}_{lk}^l}}{{\left| {\widehat {\bf{h}}_{lk}^l} \right|}}\frac{{{{\left( {\widehat {\bf{h}}_{lk}^l} \right)}^H}}}{{\left| {\widehat {\bf{h}}_{lk}^l} \right|}}{\bf{h}}_{0m}^l} \nonumber \\
&\mathop  \to \limits^{{N_t} \to \infty } \frac{{{{\left( {\widehat {\bf{h}}_{lk}^l} \right)}^H}{\bf{R}}_{0m}^l\widehat {\bf{h}}_{lk}^l}}{{{{\left| {\widehat {\bf{h}}_{lk}^l} \right|}^2}}} \mathop  \to \limits^{{N_t} \to \infty } \frac{{\tr\left( {{\bf{R}}_{0m}^l\widehat {\bf{R}}_{lk}^l} \right)}}{{\tr\left( {\widehat {\bf{R}}_{lk}^l} \right)}}.
\end{align}

For $m = k$, similar to (\ref{eq:h0mhath0m_2_2_app}), we have
\begin{align} \label{eq:h0mhath0m_3_app}
\frac{1}{{{N_t}}}  {{{\left| {{{\left( {{\bf{h}}_{0m}^l} \right)}^H}{{\bf{w}}_{lm}}} \right|}^2}} & = \frac{1}{N_t}\frac{\left( {{\bf{h}}_{0m}^l} \right)^H  \widehat {\bf{h}}_{lm}^l  {\left( {\widehat {\bf{h}}_{lm}^l} \right)^H}\! {\bf{h}}_{0m}^l}{\left| {\widehat {\bf{h}}_{lm}^l} \right|^2} \nonumber \\
& \mathop  \to \limits^{{N_t}\to \infty  }  \frac{1}{{{N_t}}}  \frac{\Lambda _{0m}^l}{\tr\left( {\widehat {\bf{R}}_{lm}^l} \right)}.
\end{align}

Next, we simplify
\begin{align} \label{eq:h0m_null_app}
 & {{{\left( {{\bf{h}}_{0m}^l} \right)}^H}{{\bf{U}}_{{\rm null},\,l}}{{ {{{\bf{U}}_{{\rm null},\,l}^H}} }}{\bf{h}}_{0m}^l}= \nonumber  \\
  & \tr\left( {{\bf{R}}_{0m}^l} \right) - \sum\limits_{k = 1,k \ne m}^K {\tr{{\left( {\widehat {\bf{R}}_{lk}^l} \right)}^{ - 1}} {{{\left( {{\bf{h}}_{0m}^l} \right)}^H}\left( {\widehat {\bf{h}}_{lk}^l} \right){{\left( {\widehat {\bf{h}}_{lk}^l} \right)}^H}{\bf{h}}_{0m}^l} } \nonumber \\
& - \tr{\left( {\widehat {\bf{R}}_{lm}^l} \right)^{ - 1}} {{{\left( {{\bf{h}}_{0m}^l} \right)}^H}\left( {\widehat {\bf{h}}_{lm}^l} \right){{\left( {\widehat {\bf{h}}_{lm}^l} \right)}^H}{\bf{h}}_{0m}^l}.
\end{align}

Following a similar approach as was used to obtain (\ref{eq:h0mh0k}) and (\ref{eq:h0mhath0m_2_2_app}), we obtain
 \begin{align} \label{eq:h0mhlk}
& \frac{1}{{{N_t}}} {\left( {{\bf{h}}_{0m}^l} \right)^H}\left( {\widehat {\bf{h}}_{lk}^l} \right){\left( {\widehat {\bf{h}}_{lk}^l} \right)^H}{\bf{h}}_{0m}^l  \mathop  \to \limits^{{N_t}\to \infty  } \tr\left( {{\bf{R}}_{0m}^l\widehat {\bf{R}}_{lk}^l} \right)  \\
& \frac{1}{{{N_t}}} {{{\left( {{\bf{h}}_{0m}^l} \right)}^H}\left( {\widehat {\bf{h}}_{lm}^l} \right){{\left( {\widehat {\bf{h}}_{lm}^l} \right)}^H}{\bf{h}}_{0m}^l}   \mathop  \to \limits^{{N_t}\to \infty}
\Lambda _{0m}^l. \label{eq:h0mhlk_2}
\end{align}

By substituting (\ref{eq:h0mw0m_2_app})--(\ref{eq:h0mhlk_2}) into (\ref{eq:SINR_0m}), we obtain the expression for ${\rm SINR}_{0m,\,\rm{asy}}$
in (\ref{eq:b}).

Next, we simplify (\ref{eq:C_eve_1}). First, we have
\begin{multline} \label{eq:Q_l}
 {\left( {{\bf{H}}_E^l} \right)^H}{{\bf{U}}_{{\rm null},\,l}}{{{{\bf{U}}_{{\rm null},\,l}^H}}}{\bf{H}}_E^l   \\
 = {\left( {{\bf{H}}_E^l} \right)^H}  \left( {{{\bf{I}}_{{N_t}}} - \widehat {\bf{H}}_l^l
{\rm diag} \! {\left[{\tr\left( {\widehat {\bf{R}}_{l1}^l} \right)^{-1}},  \! {\tr\left( {\widehat {\bf{R}}_{l2}^l} \right)^{-1}},\!\cdots,\! \right.} } \right. \\
\left.{{\left.{\tr\left( {\widehat {\bf{R}}_{lK}^l} \right)^{-1} } \!\right]}
{{\left( {\widehat {\bf{H}}_l^l} \right)}^H}} \right){\bf{H}}_E^l   = {\left( {{\bf{H}}_E^l} \right)^H}{\bf{H}}_E^l - \mathbf{Q}_{H}
\end{multline}
where
\begin{multline} \label{eq:Q_H}
\mathbf{Q}_{H} = {\left( {{\bf{H}}_E^l} \right)^H} \widehat {\bf{H}}_l^l  \\
 \times {\rm diag} {\left[{\tr\left( {\widehat {\bf{R}}_{l1}^l} \right)^{-1}},  \! {\tr\left( {\widehat {\bf{R}}_{l2}^l} \right)^{-1}},\!\cdots,\! {\tr\left( {\widehat {\bf{R}}_{lK}^l} \right)^{-1} } \!\right]} {{\left( {\widehat {\bf{H}}_l^l} \right)}^H}{\bf{H}}_E^l.
\end{multline}

For ${\left( {{\bf{H}}_E^l} \right)^H}{\bf{H}}_E^l$, we have
\begin{align}
& \left\{{\left( {{\bf{H}}_E^l} \right)^H}{\bf{H}}_E^l\right\}_{ij} = {\left( {{\bf{h}}_{E,i}^l} \right)^H}{\bf{h}}_{E,j}^l \nonumber \\
& = {\bf{e}}_i^H{\left( {{\bf{R}}_{E,R}^l} \right)^{1/2}}{\left( {{\bf{G}}_E^l} \right)^H}{\left( {{\bf{R}}_{E,T}^l} \right)^{1/2}} \nonumber\\
& \hspace{3cm} \times {\left( {{\bf{R}}_{E,T}^l} \right)^{1/2}}{\bf{G}}_E^l{\left( {{\bf{R}}_{E,R}^l} \right)^{1/2}}{{\bf{e}}_j} \nonumber \\
& =  \tr\left( {{\bf{e}}_i^H{{\left( {{\bf{R}}_{E,R}^l} \right)}^{1/2}}{{\left( {{\bf{G}}_E^l} \right)}^H}{{\left( {{\bf{R}}_{E,T}^l} \right)}^{1/2}} } \right. \nonumber \\
& \hspace{3cm} \times \left.{{{\left( {{\bf{R}}_{E,T}^l} \right)}^{1/2}}{\bf{G}}_E^l{{\left( {{\bf{R}}_{E,R}^l} \right)}^{1/2}}{{\bf{e}}_j}} \right) \nonumber \\
& \mathop  \to \limits^{{N_t} \to \infty } \tr\left( {{{\left( {{\bf{R}}_{E,R}^l} \right)}^{1/2}}{{\bf{e}}_j}{\bf{e}}_i^H{{\left( {{\bf{R}}_{E,R}^l} \right)}^{1/2}}} \right)\tr\left( {{\bf{R}}_{E,T}^l} \right) \nonumber  \\
& = {\left\{ {{\bf{R}}_{E,R}^l} \right\}_{ij}}\tr\left( {{\bf{R}}_{E,T}^l} \right), \label{eq:h_E_ij_asy}
\end{align}
where we used \cite[Eq. (102)]{Wen2013TIT}.
Then, performing some simplifications, we get
\begin{align} \label{eq:Q_H}
{\left\{ {{{\bf{Q}}_H}} \right\}_{ij}} = \sum\limits_{k = 1}^K {\tr{{\left( {\widehat {\bf{R}}_{lk}^l} \right)}^{ - 1}}
{{\left( {{\bf{h}}_{E,i}^l} \right)}^H}} \left( {\widehat {\bf{h}}_{lk}^l} \right){\left( {\widehat {\bf{h}}_{lk}^l} \right)^H}{\bf{h}}_{E,j}^l.
\end{align}
For $m \ne k$, ${{\bf{h}}_{E,i}^l}$ and ${\bf{h}}_{E,j}^l$ are independent of $\widehat {\bf{h}}_{lk}^l$. Thus, we have
\begin{align}
& {\left( {{\bf{h}}_{E,i}^l} \right)^H}\widehat {\bf{h}}_{lk}^l{\left( {\widehat {\bf{h}}_{lk}^l} \right)^H}{\bf{h}}_{E,j}^l  ={\left( {\widehat {\bf{h}}_{lk}^l} \right)^H}{\bf{h}}_{E,j}^l{\left( {{\bf{h}}_{E,i}^l} \right)^H}\widehat {\bf{h}}_{lk}^l   \nonumber \\
& \mathop \to \limits^{{N_t} \to \infty }  \tr\left( {\widehat {\bf{R}}_{lk}^l{\bf{h}}_{E,j}^l{{\left( {{\bf{h}}_{E,i}^l} \right)}^H}} \right)  = {\left( {{\bf{h}}_{E,i}^l} \right)^H}\widehat {\bf{R}}_{lk}^l{\bf{h}}_{E,j}^l  \nonumber  \\
 & ={\bf{e}}_i^H{\left( {{\bf{R}}_{E,R}^l} \right)^{1/2}}{\left( {{\bf{G}}_E^l} \right)^H}{\left( {{\bf{R}}_{E,T}^l} \right)^{1/2}}\widehat {\bf{R}}_{lk}^l \nonumber \\
& \hspace{3cm} \times  {\left( {{\bf{R}}_{E,T}^l} \right)^{1/2}}{\bf{G}}_E^l{\left( {{\bf{R}}_{E,R}^l} \right)^{1/2}}{{\bf{e}}_j} \nonumber \\
&  \mathop  \to \limits^{{N_t} \to \infty } {\left\{ {{\bf{R}}_{E,R}^l} \right\}_{ij}}\tr\left( {{\bf{R}}_{E,T}^l\widehat {\bf{R}}_{lk}^l} \right), \label{eq:Q_H_3}
\end{align}
where (\ref{eq:Q_H_3}) is obtained based on \cite[Corollary 1]{Evans2000TIT} and \cite[Eq. (102)]{Wen2013TIT}.

For $m = k$,  similar to (\ref{eq:h0mhath0m_2_2_app}), we have
\begin{multline}
 {\left( {{\bf{h}}_{E,i}^l} \right)^H}\widehat {\bf{h}}_{lm}^l{\left( {\widehat {\bf{h}}_{lm}^l} \right)^H}{\bf{h}}_{E,j}^l \\
  \mathop \to \limits^{{N_t} \to \infty } {\tau ^2}\sum\limits_{t = 0}^L {{P_{tm}}} {\left( {{\bf{h}}_{E,i}^l} \right)^H}{\bf{C}}_{lm}^l{\bf{h}}_{tm}^l{\left( {{\bf{h}}_{tm}^l} \right)^H}{\left( {{\bf{C}}_{lm}^l} \right)^H}{\bf{h}}_{E,j}^l  \\
  + {\tau ^2}\frac{{{P_E}}}{{{N_e}}}\sum\limits_{r = 1}^{{N_e}} {\sum\limits_{t = 1}^{{N_e}} {{{\left( {{\bf{h}}_{E,i}^l} \right)}^H}{\bf{C}}_{lm}^l{\bf{h}}_{E,r}^l} } {\left( {{\bf{h}}_{E,t}^l} \right)^H}{\left( {{\bf{C}}_{lm}^l} \right)^H}{\bf{h}}_{E,j}^l \\
   + {\left( {{\bf{h}}_{E,i}^l} \right)^H}{\bf{C}}_{lm}^l{\bf{\Omega }}_m{ \bf{n}}{ {{\bf{ n}}}^H {\bf{\Omega }}_m^H}{\left( {{\bf{C}}_{lm}^l} \right)^H}{\bf{h}}_{E,j}^l.
\end{multline}
Then, based on \cite[Corollary 1]{Evans2000TIT} and \cite[Eq. (102)]{Wen2013TIT}, we obtain
\begin{align}
& {\left( {{\bf{h}}_{E,i}^l} \right)^H}{\bf{C}}_{lm}^l{\bf{h}}_{E,r}^l{\left( {{\bf{h}}_{E,t}^l} \right)^H}{\left( {{\bf{C}}_{lm}^l} \right)^H}{\bf{h}}_{E,j}^l \nonumber \\
& = {\bf{e}}_i^H{\left( {{\bf{R}}_{E,R}^l} \right)^{1/2}}{\left( {{\bf{G}}_E^l} \right)^H}{\left( {{\bf{R}}_{E,T}^l} \right)^{1/2}} \nonumber \\
 & \hspace{1cm} \times {\bf{C}}_{lm}^l{\left( {{\bf{R}}_{E,T}^l} \right)^{1/2}}{{{\bf{G}}_E^l}}{\left( {{\bf{R}}_{E,R}^l} \right)^{1/2}}{{\bf{e}}_r} \nonumber  \\
& \hspace{1.5cm} \ \times {\bf{e}}_t^H{\left( {{\bf{R}}_{E,R}^l} \right)^{1/2}}{\left( {{\bf{G}}_E^l} \right)^H}{\left( {{\bf{R}}_{E,T}^l} \right)^{1/2}}{\left( {{\bf{C}}_{lm}^l} \right)^H} \nonumber \\
& \hspace{2cm} \times {\left( {{\bf{R}}_{E,R}^l} \right)^{1/2}}{\bf{G}}_E^l{\left( {{\bf{R}}_{E,T}^l} \right)^{1/2}}{{\bf{e}}_j} \nonumber \\
 & \mathop  \to \limits^{{N_t} \to \infty } {\left\{ {{\bf{R}}_{E,R}^l} \right\}_{ir}}{\left\{ {{\bf{R}}_{E,R}^l} \right\}_{tj}}\tr\left( {{\bf{C}}_{lm}^l{\bf{R}}_{E,T}^l} \right)\tr\left( {{{\left( {{\bf{C}}_{lm}^l} \right)}^H}{{ {{\bf{R}}_{E,T}^l}}}} \right) \nonumber \\
& = {\left\{ {{\bf{R}}_{E,R}^l} \right\}_{ir}}{\left\{ {{\bf{R}}_{E,R}^l} \right\}_{tj}}{\left| {\tr\left( {{\bf{C}}_{lm}^l{\bf{R}}_{E,T}^l} \right)} \right|^2}.
\end{align}
Similarly, we have
\begin{multline} \label{eq:hEihmhEj_1}
\sum\limits_{r = 1}^{{N_e}} {\sum\limits_{t = 1}^{{N_e}} {{{\left( {{\bf{h}}_{E,i}^l} \right)}^H}{\bf{C}}_{lm}^l{\bf{h}}_{E,r}^l} } {\left( {{\bf{h}}_{E,t}^l} \right)^H}{\left( {{\bf{C}}_{lm}^l} \right)^H}{\bf{h}}_{E,j}^l \\
 \mathop \to \limits^{{N_t} \to \infty } \sum\limits_{r = 1}^{{N_e}} {{{\left\{ {{\bf{R}}_{E,R}^l} \right\}}_{ir}}} \sum\limits_{r = 1}^{{N_e}} {{{\left\{ {{\bf{R}}_{E,R}^l} \right\}}_{rj}}} {\left| {\tr\left( {{\bf{C}}_{lm}^l{\bf{R}}_{E,T}^l} \right)} \right|^2}
\end{multline}
\begin{multline} \label{eq:hEihmhEj_2}
{\left( {{\bf{h}}_{E,i}^l} \right)^H}{\bf{C}}_{lm}^l {\bf{\Omega }}_m {\bf{n}} {{\bf{n}}^H} {\bf{\Omega }}_m^H {\left( {{\bf{C}}_{lm}^l} \right)^H}{\bf{h}}_{E,j}^l  \\
 \mathop \to \limits^{{N_t} \to \infty } {N_0}\tau {\left\{ {{\bf{R}}_{E,R}^l} \right\}_{ij}} \tr\left( {{\bf{R}}_{E,T}^l{\bf{C}}_{lm}^l{{\left( {{\bf{C}}_{lm}^l} \right)}^H}} \right).
\end{multline}

Combining (\ref{eq:Q}), (\ref{eq:Q_l})--(\ref{eq:hEihmhEj_2}), we have
\begin{align} \label{eq:Q_Q_asy}
{\bf{Q}} \mathop \to \limits^{{N_t} \to \infty } \mathbf{Q}_{\rm asy}.
\end{align}
Based on (\ref{eq:Q_Q_asy}), we obtain
\begin{multline} \label{eq:w0m_HE}
{ {{{\bf{w}}_{0m}^H}}}{\bf{H}}_E^0{{\bf{Q}}^{ - 1}}{\left( {{\bf{H}}_E^0} \right)^H}{{\bf{w}}_{0m}} \\
\mathop \to \limits^{{N_t} \to \infty } \frac{1}{{{{\left\| {\widehat {\bf{h}}_{0m}^0} \right\|}^2}}}{\left( {\widehat {\bf{h}}_{0m}^0} \right)^H}{\bf{H}}_E^0{\bf{Q}}_{\rm asy}^{ - 1}{\left( {{\bf{H}}_E^0} \right)^H}\widehat {\bf{h}}_{0m}^0,
\end{multline}
which can be further simplified using
\begin{align} \label{eq:h0mh0m_he}
\frac{1}{{{N_t}}}{\left\| {\widehat {\bf{h}}_{0m}^0} \right\|^2} \mathop \to \limits^{{N_t} \to \infty } \frac{1}{{{N_t}}}\tr\left( {\widehat {\bf{R}}_{0m}^0} \right)
\end{align}
and
\begin{align} \label{eq:h0m_Q_asy_H_E}
& {\left( {\widehat {\bf{h}}_{0m}^0} \right)^H}{\bf{H}}_E^0{\bf{Q}}_{{\rm{asy}}}^{ - 1}{\left( {{\bf{H}}_E^0} \right)^H}\widehat {\bf{h}}_{0m}^0 \nonumber  \\
 &= \sum\limits_{i = 1}^{{N_e}} {\sum\limits_{j = 1}^{{N_e}} {{{\left\{ {{\bf{Q}}_{{\rm{asy}}}^{ - 1}} \right\}}_{ij}}} {{\left( {\widehat {\bf{h}}_{0m}^0} \right)}^H}\left( {{\bf{h}}_{E,i}^0} \right)} {\left( {{\bf{h}}_{E,j}^0} \right)^H}\widehat {\bf{h}}_{0m}^0 \nonumber \\
 & = \sum\limits_{i = 1}^{{N_e}} {\sum\limits_{j = 1}^{{N_e}} {{{\left\{ {{\bf{Q}}_{{\rm{asy}}}^{ - 1}} \right\}}_{ij}}} {{\left( {{\bf{h}}_{E,i}^0} \right)}^H}\widehat {\bf{h}}_{0m}^0} {\left( {\widehat {\bf{h}}_{0m}^0} \right)^H}{\bf{h}}_{E,j}^0 \nonumber  \\
 & \mathop \to \limits^{{N_t} \to \infty }  \sum\limits_{i = 1}^{{N_e}} {\sum\limits_{j = 1}^{{N_e}} {{{\left\{ {{\bf{Q}}_{{\rm{asy}}}^{ - 1}} \right\}}_{ij}}\eta _{ij}^0} },
\end{align}
which was obtained by following a similar approach as was used to obtain (\ref{eq:Q_Q_asy}). Substituting  (\ref{eq:h0mh0m_he}) and (\ref{eq:h0m_Q_asy_H_E}) into (\ref{eq:C_eve_1}) completes the proof.

{\bl
\section{Proof of Theorem \ref{prop:all_transmit}}\label{proof:prop:all_transmit}
When $p = \frac{1}{K}$,  $\frac{{1 + {\rm SINR}_{0m, \, {\rm asy} }}}{{1 + {\rm SINR}_{\rm{eve},\, {\rm asy}}}}$ can be simplified to
\begin{align} \label{eq:SINR_ratio_app_snr}
 \frac{{1 + {\rm SINR}_{0m, \, {\rm asy} }}}{{1 + {\rm SINR}_{\rm{eve},\, {\rm asy}}}} =  \frac{{\gamma {\theta _{b,p}} + \gamma {\theta _m} + 1}}{{\left( {\gamma {\theta _{b,p}} + 1} \right)\left( {\gamma {\tilde{\theta} _{e}} + 1} \right)}}.
\end{align}
The derivative of (\ref{eq:SINR_ratio_app_snr}) with respect to $\gamma$ is given by
\begin{multline} \label{eq:SINR_ratio_der_app_snr_one}
\frac{{d\left( \frac{{\gamma {\theta _{b,p}} + \gamma {\theta _m} + 1}}{{\left( {\gamma {\theta _{b,p}} + 1} \right)\left( {\gamma {\tilde{\theta} _{e}} + 1} \right)}}  \right)}}{{d\gamma }} \\  = \frac{{ - \left( {{\theta _{b,p}} + {\theta _m}} \right){\theta _{b,p}}{\tilde{\theta} _{e}}{\gamma ^2} - 2\gamma {\theta _{b,p}}{\theta _m} + {\theta _m} - {\tilde{\theta} _{e}}}}{{{{\left( {\gamma {\theta _{b,p}} + 1} \right)}^2}{{\left( {\gamma {\tilde{\theta} _{e}} + 1} \right)}^2}}}.
\end{multline}
From (\ref{eq:theta_m}) and (\ref{eq:theta_bp}),
we observe that $\theta _m \geq 0$ and $\theta _{b,p} \geq 0$.

Next, we prove  $\tilde{\theta} _{e} \geq 0$.
From (\ref{eq:eta_prop}), we obtain
\begin{align} \label{eq:eta_ii}
& \sum\limits_{i = 1}^{{N_e}} \eta _{ii}^0  \nonumber \\
& = {\tau ^2} \sum\limits_{i = 1}^{{N_e}}{\left\{ {{\bf{R}}_{E,R}^0} \right\}_{ii}}\sum\limits_{t = 0}^L {{P_{tm}}\tr\left( {{\bf{R}}_{E,T}^0{\bf{C}}_{0m}^0{\bf{R}}_{tm}^0{{\left( {{\bf{C}}_{0m}^0} \right)}^H}} \right)} \nonumber \\
& + {\tau ^2}\frac{{{P_E}}}{{{N_e}}} {\left| {\tr\left( {{\bf{C}}_{0m}^0{\bf{R}}_{E,T}^0} \right)} \right|^2}
 \sum\limits_{i = 1}^{{N_e}} \sum\limits_{r = 1}^{{N_e}} {{{\left\{ {{\bf{R}}_{E,R}^0} \right\}}_{ir}}} \sum\limits_{r = 1}^{{N_e}} {{{\left\{ {{\bf{R}}_{E,R}^0} \right\}}_{ri}}} \nonumber \\
 & + {N_0}\tau \tr\left( {{\bf{R}}_{E,T}^0{\bf{C}}_{0m}^0{{\left( {{\bf{C}}_{0m}^0} \right)}^H}} \right)  \sum\limits_{i = 1}^{{N_e}} {\left\{ {{\bf{R}}_{E,R}^0} \right\}_{ii}}.
\end{align}
Since ${{\bf{R}}_{E,R}^0}$
is a receive correlation matrix, it is a Hermitian positive-semidefinite matrix.
Therefore, we have ${{{\left\{ {{\bf{R}}_{E,R}^0} \right\}}_{ii}}} \geq 0$.
As a result, we obtain
\begin{align} \label{eq:eta_13}
& {\tau ^2}{\left\{ {{\bf{R}}_{E,R}^0} \right\}_{ii}}\sum\limits_{t = 0}^L {{P_{tm}}\tr\left( {{\bf{R}}_{E,T}^0{\bf{C}}_{0m}^0{\bf{R}}_{tm}^0{{\left( {{\bf{C}}_{0m}^0} \right)}^H}} \right)} \geq 0 \\
 &  {N_0}\tau {\left\{ {{\bf{R}}_{E,R}^0} \right\}_{ii}}\tr\left( {{\bf{R}}_{E,T}^0{\bf{C}}_{0m}^0{{\left( {{\bf{C}}_{0m}^0} \right)}^H}} \right) \geq 0.
\end{align}
Also, we have
\begin{align}
 \sum\limits_{i = 1}^{{N_e}} \sum\limits_{r = 1}^{{N_e}} {{{\left\{ {{\bf{R}}_{E,R}^0} \right\}}_{ir}}} \sum\limits_{r = 1}^{{N_e}} {{{\left\{ {{\bf{R}}_{E,R}^0} \right\}}_{ri}}} & = {{\bf{1}}_{{N_e}}^H} {{\bf{R}}_{E,R}^0} {{\bf{R}}_{E,R}^0} {{\bf{1}}_{{N_e}}} \nonumber \\
 & =  {\left({{\bf{R}}_{E,R}^0}{\bf{1}}_{{N_e}}\right)^H} {{\bf{R}}_{E,R}^0} {{\bf{1}}_{{N_e}}} \nonumber c\\
 & \geq 0  \label{eq:eta_2}
\end{align}
From (\ref{eq:gamma_th_2}), (\ref{eq:eta_ii})--(\ref{eq:eta_2}), we obtain  $\tilde{\theta} _{e} \geq 0$.
As a result, the equation
\begin{align} \label{eq:SINR_ratio_der_app_eqn}
{ - \left( {{\theta _{b,p}} + {\theta _m}} \right){\theta _{b,p}}{\tilde{\theta} _{e}}{\gamma ^2} - 2\gamma {\theta _{b,p}}{\theta _m} + {\theta _m} - {\tilde{\theta} _{e}}} = 0
\end{align}
has  at least one non-positive root. We assume that $\gamma_1$ and $\gamma_2$, $\gamma_1 < \gamma_2$ are the two roots of (\ref{eq:SINR_ratio_der_app_eqn}) and $\gamma_1 \leq 0$. Since $- \left( {{\theta _{b,p}}
+ {\theta _m}} \right){\theta _{b,p}}{\tilde{\theta} _{e}} <0$, we have
  \begin{enumerate}
\item  When ${\gamma _1} \leq \gamma \leq {\gamma _2}$, ${ - \left( {{\theta _{b,p}} + {\theta _m}} \right){\theta _{b,p}}{\tilde{\theta} _{e}}{\gamma ^2} - 2\gamma {\theta _{b,p}}{\theta _m}
 + {\theta _m} - {\tilde{\theta} _{e}}} \geq 0$.

\item  When $\gamma <{\gamma _1} $ or $ \gamma > {\gamma _2}$, ${ - \left( {{\theta _{b,p}} + {\theta _m}} \right){\theta _{b,p}}{\tilde{\theta} _{e}}{\gamma ^2} - 2\gamma {\theta _{b,p}}{\theta _m}
+ {\theta _m} - {\tilde{\theta} _{e}}} < 0 $.
  \end{enumerate}

Since $\gamma_1 \leq 0$, $\gamma <{\gamma _1} $ can not hold.
When $ \gamma > {\gamma _2}$,  we know from (\ref{eq:SINR_ratio_der_app_snr_one})
 \begin{align} \label{eq:SINR_ratio_der_app_snr_one_2}
\frac{{d\left( \frac{{\gamma {\theta _{b,p}} + \gamma {\theta _m} + 1}}{{\left( {\gamma {\theta _{b,p}} + 1} \right)\left( {\gamma {\tilde{\theta} _{e}} + 1} \right)}}  \right)}}{{d\gamma }} < 0.
\end{align}
Finally,  setting $\gamma_{\rm th} = {\gamma _2}$ in Theorem \ref{prop:all_transmit} completes the proof.
}

\section{Proof of Theorem \ref{prop:sec_orth}}\label{proof:prop:sec_orth}
Since ${{\bf{R}}_{tm}^l}$ and ${{\bf{R}}_{E,T}^{l}}$ are positive semi-definite correlation matrices,
$\sum\nolimits_{t = 0}^L  \tr\left( {{\bf{R}}_{tm}^l}{{\bf{R}}_{E,T}^{l}}\right) = 0$ is equivalent to
$\sum\nolimits_{t = 0}^L  {{\bf{R}}_{tm}^l}{{\bf{R}}_{E,T}^{l}} =  \mathbf{0}$. For $\sum\nolimits_{t = 0}^L  {{\bf{R}}_{tm}^l}{{\bf{R}}_{E,T}^{l}} = \mathbf{0}$,
we have (\ref{eq:corr_orth}) at the top of the next page, where $\mathop  = \limits^{(a)}$ follows from the matrix inversion lemma \cite{Bernstein2005}.
 \begin{figure*}[!ht]
\begin{align} \label{eq:corr_orth}
\widehat {\bf{R}}_{lm}^l &= {P_{lm}}\tau {\bf{R}}_{lm}^l{\left( {{N_0}{{\bf{I}}_{{N_t}}} + \tau \left( {\sum\limits_{t = 0}^L {{P_{tk}}{\bf{R}}_{tk}^l}  + {P_E}r_{E,R}^l{\bf{R}}_{E,T}^l} \right)} \right)^{ - 1}}{\bf{R}}_{lm}^l
\nonumber \\
 &= {P_{lm}}\tau {\bf{R}}_{lm}^l{\left( {\left( {\sqrt {{N_0}} {{\bf{I}}_{{N_t}}} + \frac{\tau }{{\sqrt {{N_0}} }}\sum\limits_{t = 0}^L {{P_{tk}}{\bf{R}}_{tk}^l} } \right)\left( {\sqrt {{N_0}} {{\bf{I}}_{{N_t}}} + \frac{{\tau {P_E}r_{E,R}^l}}{{\sqrt {{N_0}} }}{\bf{R}}_{E,T}^l} \right)} \right)^{ - 1}}{\bf{R}}_{lm}^l \nonumber \\
& = {P_{lm}}\tau {\bf{R}}_{lm}^l{\left( {\sqrt {{N_0}} {{\bf{I}}_{{N_t}}} + \frac{{\tau {P_E}r_{E,R}^l}}{{\sqrt {{N_0}} }}{\bf{R}}_{E,T}^l} \right)^{ - 1}}{\left( {\sqrt {{N_0}} {{\bf{I}}_{{N_t}}} + \frac{\tau }{{\sqrt {{N_0}} }}\sum\limits_{t = 0}^L {{P_{tk}}{\bf{R}}_{tk}^l} } \right)^{ - 1}}{\bf{R}}_{lm}^l \nonumber \\
&\mathop  =  \limits^{\left( a \right)} {P_{lm}}\tau {\bf{R}}_{lm}^l\left( {\frac{1}{{\sqrt {{N_0}} }}{{\bf{I}}_{{N_t}}} \! -  \! \frac{{\tau {P_E}r_{E,R}^l}}{{\sqrt {N_0^3} }}{\bf{R}}_{E,T}^l{{\left( {{{\bf{I}}_{{N_t}}}  \!+  \! \frac{{\tau {P_E}r_{E,R}^l}}{{{N_0}}}{\bf{R}}_{E,T}^l} \right)}^{ - 1}}} \right) \! {\left( {\sqrt {{N_0}} {{\bf{I}}_{{N_t}}}  \! +  \!\frac{\tau }{{\sqrt {{N_0}} }}\sum\limits_{t = 0}^L {{P_{tk}}{\bf{R}}_{tk}^l} } \right)^{ - 1}}{\bf{R}}_{lm}^l \nonumber \\
 &= {P_{lm}}\tau {\bf{R}}_{lm}^l{\left( {{{\bf{I}}_{{N_t}}} + \tau \sum\limits_{t = 0}^L {{P_{tk}}{\bf{R}}_{tk}^l} } \right)^{ - 1}}{\bf{R}}_{lm}^l
\end{align}
 \hrulefill
\vspace*{4pt}
\end{figure*}
Similarly, we can prove that when $\sum\nolimits_{t = 0}^L  {{\bf{R}}_{tm}^l}{{\bf{R}}_E^{l}} = \mathbf{0}$,  ${\bf{C}}_{lm}^l$ reduces to  ${\bf{C}}_{lm,\rm{orth}}^l$.

Also, when $\sum\nolimits_{t = 0}^L {{\bf{R}}_{tm}^l}{{\bf{R}}_{E,T}^{l}} = \mathbf{0}$, we have
\begin{align} \label{eq:Clm_RE_orth}
{\bf{R}}_{E,T}^l{\left( {{\bf{C}}_{lm}^l} \right)^H} = {\bf{R}}_{E,T}^l{\bf{R}}_{lm}^l{\left( {{{\bf{I}}_{{N_t}}} + \tau \sum\limits_{t = 0}^L {{P_{tk}}{\bf{R}}_{tk}^l} } \right)^{ - 1}}  = \mathbf{0},
\end{align}
and as a result,
\begin{align}\label{eq:Clm_RE_orth_2}
& \tr\left( {{\bf{R}}_{0m}^l{\bf{C}}_{lm}^l{\bf{R}}_{E,T}^l{{\left( {{\bf{C}}_{lm}^l} \right)}^H}} \right) = 0, \\
 & \tr\left( {{\bf{R}}_{E,T}^l{\bf{C}}_{lm}^l{{\left( {{\bf{C}}_{lm}^l} \right)}^H}} \right) = 0, \\
& \tr\left( {{\bf{R}}_{E,T}^l{\bf{C}}_{lm}^l} \right) = 0.
\end{align}
Substituting (\ref{eq:Clm_RE_orth_2}) into (\ref{eq:eta_prop}), yields
\begin{align}\label{eq:eta_1}
\eta _{ij}^l = 0.
\end{align}
The proof is completed by substituting (\ref{eq:corr_orth})--(\ref{eq:eta_1}) into (\ref{eq:asy_sec}).

\section{Proof of Theorem \ref{prop:opt_power}}\label{proof:prop:opt_power}
When $N_e = 1$, (\ref{eq:e}) reduces to
 \begin{align}
{\rm SINR}_{\rm eve,\, asy} &= \frac{{p\gamma }{\theta _{e,e}}}{{q\gamma {\theta _{e,q}} + 1}}.
\end{align}
Hence, maximizing $R_{\sec ,\, {\rm asy}}$ in (\ref{eq:asy_sec}) is equivalent to maximizing
\begin{align} \label{eq:SINR_ratio_app}
\frac{{1 + {\rm SINR}_{0m, \, {\rm asy} }}}{{1 + {\rm SINR}_{\rm{eve},\, {\rm asy}}}} = \frac{{ {a_1}{p^2} + {b_1}p + {c_1} }}{{{a_2}{p^2} + {b_2}p + {c_2}}}.
\end{align}
By taking the derivative of (\ref{eq:SINR_ratio_app}) with respect to $p$, we obtain
\begin{multline} \label{eq:SINR_ratio_der_app}
\frac{{d\left( {\frac{{{a_1}{p^2} + {b_1}p + {c_1}}}{{{a_2}{p^2} + {b_2}p + {c_2}}}} \right)}}{{dp}} \\
 = \frac{{\left( {{a_1}{b_2} - {a_2}{b_1}} \right){p^2} + 2\left( {{a_1}{c_2} - {a_2}{c_1}} \right)p + {b_1}{c_2} - {b_2}{c_1}}}{{{{\left( {{a_2}{p^2} + {b_2}p + {c_2}} \right)}^2}}}.
\end{multline}
The proof is completed by setting (\ref{eq:SINR_ratio_der_app}) to zero and finding the solution.

\section{Proof of Theorem \ref{prop:unified}} \label{proof:prop:unified}
Based on Theorems \ref{prop:sec_rate_mul} and \ref{prop:opt_power},
the optimal asymptotic secrecy rate for the MF-AN design is given by
\begin{align} \label{eq:sec_asy_app_u}
R_{\sec ,\, {\rm asy}}^* = \log_2 \frac{{a_3{\gamma ^2} + b_3\gamma  + 1}}{{a_4 {\gamma ^2} + b_4\gamma  + 1}} .
\end{align}

Following a similar approach as in Appendix \ref{proof:prop:sec_rate_mul},
we obtain the asymptotic secrecy rate of the NS design as
\begin{align} \label{eq:sec_asy_null_app_u}
R_{\sec, \, {\rm asy, \, null}} = \log_2  \frac{{ a_5\gamma + 1}}{{ a_6\gamma + 1}}.
\end{align}

Performing some simplifications, $R_{\sec ,\, {\rm asy}}^*  \gtreqqless R_{\sec, \, {\rm asy, \, null}}$ is
equivalent to
\begin{align} \label{eq:sec_asy_u_condition}
{a_7}{\gamma ^3} + {b_7}{\gamma ^2} + {c_7}\gamma \gtreqqless 0.
\end{align}

Then, Theorem \ref{prop:unified} is obtained by finding the feasible region of (\ref{eq:sec_asy_u_condition}).

{\bl

\section{Proof of Theorem \ref{prop:sec_cond_single_user}} \label{proof:prop:sec_cond_single_user}
Following a similar approach as in  Appendix \ref{proof:prop:sec_rate_mul}, when $L = 0$, $K = 1$, $N_e = 1$, and
$N_t \rightarrow \infty$,  ${\rm SINR}_{01,\,{\rm{asy}}}$ in (\ref{eq:b}) becomes (\ref{eq:sinr_01_asy}),
which is given  at the top of the next page.
 \begin{figure*}[!ht]
\begin{align} \label{eq:sinr_01_asy}
 {\rm SINR}_{01,\,{\rm{asy}}}  = \frac{{p\gamma \left( {{\tr^2}\left( {\widehat {\bf{R}}_{01}^0} \right) + \tr\left( {\left( {{\bf{R}}_{01}^0 - \widehat {\bf{R}}_{01}^0} \right)\widehat {\bf{R}}_{01}^0} \right)} \right)}}{{q\gamma \tr\left( {\left( {{\bf{R}}_{01}^0 - \widehat {\bf{R}}_{01}^0} \right)\widehat {\bf{R}}_{01}^0} \right) + q\gamma \tr\left( {\widehat {\bf{R}}_{01}^0} \right) \tr\left( {{\bf{R}}_{01}^0 - \widehat {\bf{R}}_{01}^0} \right) + \tr\left( {\widehat {\bf{R}}_{01}^0} \right)}}.
\end{align}
 \hrulefill
\vspace*{4pt}
\end{figure*}

Also,  ${\rm{SIN}}{{\rm{R}}_{{\rm eve},\,{\rm{asy}}}}$ in (\ref{eq:e}) becomes
\begin{align} \label{eq:sinr_e_asy}
 {\rm{SIN}}{{\rm{R}}_{{\rm eve},\,{\rm{asy}}}} = \frac{{p\gamma \Lambda }}{{q\gamma \left( {\tr\left( {{\bf{R}}_{E,T}^0} \right) \tr\left( {\widehat {\bf{R}}_{01}^0} \right) - \Lambda } \right) + \tr\left( {\widehat {\bf{R}}_{01}^0} \right)}}.
\end{align}

For secure communication, we require ${\rm SINR}_{01,\,{\rm{asy}}} > {\rm{SIN}}{{\rm{R}}_{{\rm eve},\,{\rm{asy}}}}$,
which is equivalent to (\ref{eq:condition_single}) at the top of the next page.
 \begin{figure*}[!ht]
\begin{align} \label{eq:condition_single}
\frac{{p\gamma \left( {\tr^2\left( {\widehat {\bf{R}}_{01}^0} \right) + \tr\left( {\left( {{\bf{R}}_{01}^0 - \widehat {\bf{R}}_{01}^0} \right)\widehat {\bf{R}}_{01}^0} \right)} \right)}}{{q\gamma \tr\left( {\left( {{\bf{R}}_{01}^0 - \widehat {\bf{R}}_{01}^0} \right)\widehat {\bf{R}}_{01}^0} \right) + q\gamma \tr\left( {\widehat {\bf{R}}_{01}^0} \right) \tr\left( {{\bf{R}}_{01}^0 - \widehat {\bf{R}}_{01}^0} \right) + \tr\left( {\widehat {\bf{R}}_{01}^0} \right)}} > \frac{{p\gamma \Lambda }}{{q\gamma \left( {\tr\left( {{\bf{R}}_{E,T}^0} \right)\tr\left( {\widehat {\bf{R}}_{01}^0} \right) - \Lambda } \right) + \tr\left( {\widehat {\bf{R}}_{01}^0} \right)}}.
\end{align}
 \hrulefill
\vspace*{4pt}
\end{figure*}

When $N_t \rightarrow \infty$ and $N_e = 1$, we have
\begin{align} \label{eq:h_E_h010}
{\left\| {{\bf{h}}_{E,1}^0} \right\|^2}{\left\| {\widehat {\bf{h}}_{01}^0} \right\|^2} & \mathop  \to \limits^{{N_t} \to \infty }  \tr\left( {{\bf{R}}_{E,T}^0} \right)\tr\left( {\widehat {\bf{R}}_{01}^0} \right) \\
\left({\bf{h}}_{E,1}^0\right)^H {\widehat {\bf{h}}_{01}^0}  \left(\widehat {\bf{h}}_{01}^0\right)^H {{\bf{h}}_{E,1}^0} & \mathop  \to \limits^{{N_t} \to \infty } \Lambda.
\end{align}
Exploiting the Cauchy-Schwarz inequality, we obtain
\begin{align} \label{eq:Cauchy-Schwarz}
\left({\bf{h}}_{E,1}^0\right)^H {\widehat {\bf{h}}_{01}^0}  \left(\widehat {\bf{h}}_{01}^0\right)^H {{\bf{h}}_{E,1}^0} \leq {\left\| {{\bf{h}}_{E,1}^0} \right\|^2}{\left\| {\widehat {\bf{h}}_{01}^0} \right\|^2}.
\end{align}
Therefore, we have
\begin{align}\label{eq:RET_R010}
 {\tr\left( {{\bf{R}}_{E,T}^0} \right)\tr\left( {\widehat {\bf{R}}_{01}^0} \right) - \Lambda} \geq 0.
\end{align}
Moreover, based on (\ref{eq:R0m0_est}), we have
\begin{align}
& {\bf{R}}_{01}^0 - \widehat {\bf{R}}_{01}^0 \nonumber \\
 & = {\bf{R}}_{01}^0{\left( {{N_0}{{\bf{I}}_{{N_t}}} + \tau \left( {{P_{01}}{\bf{R}}_{01}^0 + {P_E}{\bf{R}}_{E,T}^0} \right)} \right)^{ - 1}}  \nonumber  \\
 & \hspace{1cm} \times \left( {{N_0}{{\bf{I}}_{{N_t}}} + \tau {P_E}{\bf{R}}_{E,T}^0} \right) \nonumber \\
 & = {\bf{R}}_{01}^0{\left( {{{\bf{I}}_{{N_t}}} + \tau {P_{01}}{\bf{R}}_{01}^0{{\left( {{N_0}{{\bf{I}}_{{N_t}}} + \tau {P_E}{\bf{R}}_{E,T}^0} \right)}^{ - 1}}} \right)^{ - 1}}. \label{eq:R010_sub_2}
\end{align}

Since ${{\bf{R}}_{01}^0}$ and ${{\bf{R}}_{E,T}^0}$ are transmit correlation matrices, they are Hermitian positive-semidefinite matrices.
From (\ref{eq:R0m0_est}) and (\ref{eq:R010_sub_2}), we observe
\begin{align}\label{eq:R010_est_tr}
\tr\left( {\widehat {\bf{R}}_{01}^0} \right) > 0,  \quad  \tr\left( {{\bf{R}}_{01}^0 - \widehat {\bf{R}}_{01}^0} \right) > 0.
\end{align}
Eqs. (\ref{eq:RET_R010}) and (\ref{eq:R010_est_tr}) indicate that the denominators of both the left hand and the right hand terms in (\ref{eq:condition_single})
are positive. Then, simplifying (\ref{eq:condition_single}), we obtain (\ref{eq:condition_single_theo}). This completes the proof.

\section{Proof of Theorem \ref{prop:sec_cond_single_user_iid}} \label{proof:prop:sec_cond_single_user_iid}
Setting ${\bf{R}}_{01}^0= {\beta _{01}}{{\bf{I}}_{{N_t}}}$ and ${{\bf{R}}_{E,T}^0} = {\beta _E}{{\bf{I}}_{{N_t}}}$, we have from (\ref{eq:Rnmn_est})
\begin{align}\label{eq:simple_exp_iid}
&\widehat {\bf{R}}_{01}^0 = \frac{{\tau {P_{01}}\beta _{01}^2}}{{{N_0} + \tau \left( {{P_{01}}{\beta _{01}} + {P_E}{\beta _E}} \right)}}{{\bf{I}}_{{N_t}}} \\
&\tr\left(\widehat {\bf{R}}_{01}^0 \right)  = \frac{{\tau {P_{01}}\beta _{01}^2{N_t}}}{{{N_0} + \tau \left( {{P_{01}}{\beta _{01}} + {P_E}{\beta _E}} \right)}} \\
&\tr\left( \left({\bf{R}}_{01}^0 - \widehat {\bf{R}}_{01}^0 \right) \widehat {\bf{R}}_{01}^0 \right)  = \frac{{\tau {P_{01}}\beta _{01}^3\left( {{N_0} + \tau {P_E}{\beta _E}} \right){N_t}}}{{{{\left( {{N_0} + \tau \left( {{P_{01}}{\beta _{01}} + {P_E}{\beta _E}} \right)} \right)}^2}}}  \\
 &\tr \left({\bf{R}}_{01}^0 - \widehat {\bf{R}}_{01}^0 \right)  = \frac{{{\beta _{01}}\left( {{N_0} + \tau {P_E}{\beta _E}} \right){N_t}}}{{{N_0} + \tau \left( {{P_{01}}{\beta _{01}} + {P_E}{\beta _E}} \right)}}  \\
&\Lambda  = \frac{{\tau {P_{01}}\beta _{01}^2{\beta _E}{N_t}}}{{{{\left( {{N_0} + \tau \left( {{P_{01}}{\beta _{01}} + {P_E}{\beta _E}} \right)} \right)}^2}}} \nonumber \\
& \hspace{3cm} \times \left( {\tau {P_{01}}{\beta _{01}} + \tau {P_E}{\beta _E}{N_t} + {N_0}} \right) \\
&\tr\left( {{{\bf{R}}_{E,T}^0}} \right)  = {\beta _E}{N_t}.  \label{eq:ret_iid}
\end{align}

Substituting (\ref{eq:simple_exp_iid})--(\ref{eq:ret_iid}) into $\eta_1$ in (\ref{eq:coff}), recalling $N_t \rightarrow\infty$,
and simplifying, we obtain
\begin{align}\label{eq:simple_exp_iid_2}
{\eta _1} = N_t^4 \left( {\tau {P_{01}}{\beta _{01}} + \tau {P_E}{\beta _E} + {N_0}} \right)\left( {{P_{01}}{\beta _{01}} - {P_E}{\beta _E}} \right).
\end{align}
This completes the proof.

}


\end{document}